\def\ba{\begin{align}}
\def\ea{\end{align}}
\def\be{\begin{equation}}
\def\ee{\end{equation}}
\def\bea{\begin{eqnarray}}
\def\eea{\end{eqnarray}}
\theoremstyle{plain}
\newtheorem{thm}{Theorem}[section]
\theoremstyle{definition}
\theoremstyle{remark}
\begin{document}

\title{Spectral form factors and late time quantum chaos}

\author[a,b]{Junyu Liu}
 \affiliation[a]{Walter Burke
  Institute for Theoretical Physics, California Institute of
  Technology, Pasadena, California 91125, U.S.A}
   \affiliation[b]{Institute for Quantum Information and Matter,
California Institute of Technology, Pasadena, California 91125, U.S.A}
\emailAdd{jliu2@caltech.edu}

\abstract{This is a collection of notes that are about spectral form factors of standard ensembles in the random matrix theory, written for the practical usage of current study of late time quantum chaos. More precisely, we consider Gaussian Unitary Ensemble (GUE), Gaussian Orthogonal Ensemble (GOE), Gaussian Symplectic Ensemble (GSE), Wishart-Laguerre Unitary Ensemble (LUE),  Wishart-Laguerre Orthogonal Ensemble (LOE), and Wishart-Laguerre Symplectic Ensemble (LSE). These results and their physics applications cover a three-fold classification of late time quantum chaos in terms of spectral form factors.}

\hbox{CALT-TH-2018-028}

\maketitle
\flushbottom

\section{Overview}
The theory of quantum chaos, and its connection to random matrix theory, have several new developments recently on understanding novel behaviors of condensed matter system and the quantum nature of black hole physics. The definition of quantum chaos has various versions. Following the pioneer works done by Wigner \cite{Wigner} and Dyson \cite{Dyson}, people regard random matrix theory as a tool to classify a generic random Hamiltonian with discrete symmetries, and their energy spectra have been observed to satisfy universal behaviors \cite{Bohigas:1983er,Zirnbauer:1996zz,Altland:1997zz}. The scientific interests of random matrix theory varies from nonlinear science, mathematics and mathematical physics, to nuclear physics, statistical physics and quantum field theory. (See some books, for instance, \cite{book1,book2,book3,book4,book5}, for reference.) Some recent discoveries of black hole physics lead to interests in understanding of scrambling properties of quantum chaotic systems \cite{Hayden:2007cs, Sekino:2008he, Lashkari:2011yi}, where people start to consider an early chaotic behavior, the Lyaponov exponent appears in the out-of-time-ordered correlators of the large $N$ theory \cite{Shenker:2013pqa,Shenker:2014cwa}, which is bounded by temperature in thermal ensemble \cite{Maldacena:2015waa}. A concrete condensed matter model, the Sachedev-Ye-Kitaev (SYK) model \cite{Sachdev:1992fk,talk,Maldacena:2016hyu}, has been proposed to realize the chaotic properties that saturate the bound. Interestingly, one can also apply random matrix theory classification to the SYK model \cite{You:2016ldz,Garcia-Garcia:2016mno}. Moreover, the spectral form factor, namely, the analytic-continuted partition function correlations in SYK model, could be matched with the prediction of spectral form factor in random matrix theory \cite{Cotler:2016fpe}. Some further investigations show that the spectral form factor is one of the key roles serving in several quantum chaotic systems, and could connect to out-of-time-ordered correlators and some other chaotic diagnostics \cite{Cotler:2017jue,Chen:2017yzn,Gharibyan:2018jrp}.
\\
\\
Those facts motivate us to study the spectral form factor in random matrix theory in detail and study its mathematical properties in detail from a modern chaotic physicist point of view, to build up such a tool box in general. In this paper, we are interested in mostly, the higher point spectral form factors and how to reach them in general from some building blocks. As an explicit example, we will describe the four point spectral form factors, which are mostly closed to the four point out-of-time-ordered correlators. 
\\
\\
From Dyson's classification, for Gaussian ensembles one could classify them by antiunitary symmetries as Gaussian Unitary Ensemble (GUE), Gaussian Orthogonal Ensemble (GOE) and Gaussian Symplectic Ensemble (GSE). For real systems like the SYK model, those ensembles often appear periodically in a list of number of sites. In this paper, we will consider all of them. Moreover, we will also discuss the Wishart-Laguerre ensembles with three symmetry classes. Those ensembles will correspond to supersymmetrized SYK models \cite{Li:2017hdt,Kanazawa:2017dpd,Hunter-Jones:2017crg}. As some examples of physics applications, we will comment on SYK model classifications, out-of-time-ordered-correlators and Page states.
\\
\\
This paper is organized as the following. In Section \ref{GUE} we will discuss the spectral form factor in GUE, the simplest symmetry class. In Section \ref{GOE} we will extend our discussions to GOE and GSE. In Section \ref{L} we will discuss the spectral form factor properties of the Wishart-Laguerre ensembles. In Section \ref{fg} we plot some figures for random matrix theory form factors to show their behaviors. In Section \ref{Ph} we put the collections of physics applications about spectral form factor in the random matrix theory. In Section \ref{conc}, we will arrive at the conclusion and discussion. 

\section{GUE spectral form factor}\label{GUE}
\subsection{Random matrix theory overview}
We consider GUE, the Gaussian Unitary Ensemble in this section. The ensemble is defined by introducing the following distribution function over space of Hermitian matrices $L\times L$,
\begin{align}
P(H) \propto \exp ( - \frac{L}{2}{\rm{Tr(}}{H^2}{\rm{)}})
\end{align}
which means that, for a Hermitian matrices $H$, the off-diagonal elements are independent complex random distributions following Gaussian distribution with mean $0$ and variance $1/L$, while the diagonal elements are independent real random distributions following Gaussian distribution with mean $0$ and variance $1/L$. From this form, one can observe that the GUE ensemble is invariant under a unitary transformation $H\to UHU^\dagger$.
\\
\\
One can also write the result in the eigenvalue basis, where one can show that the distribution over set of matrices could reduce to the distribution of eigenvalues with the following joint distribution
\begin{align}
P({\lambda _1},{\lambda _2} \ldots ,{\lambda _L}) = \exp ( - \frac{L}{2}\sum\limits_i {\lambda _i^2} )\prod\nolimits_{i < j}^{} {{{({\lambda _i} - {\lambda _j})}^2}} 
\end{align}
where $\lambda_i$s are eigenvalues. We could write done the measure of it more formally by defining Vandermonde determinant 
\begin{align}
\Delta (\lambda ) = \prod\nolimits_{i < j} {({\lambda _i} - {\lambda _j})} 
\end{align}
and we could formally write done the measure 
\begin{align}
P(\lambda )d\lambda  = D\lambda  = \exp ( - \frac{L}{2}\sum\limits_i {\lambda _i^2} )\Delta (\lambda )
\end{align}
Thus, based on this we could compute the $n$-point correlation function, where $n<L$ as
\begin{align}
{\rho ^{(n)}}({\lambda _1}, \ldots ,{\lambda _n}) = \int d {\lambda _{n + 1}} \ldots d{\lambda _L}P({\lambda _1}, \ldots ,{\lambda _L})
\end{align}
where we are going to integrate out all eigenvalues from $n+1$ to $L$. One might be interested in what is the result of the correlation function if we take the large $L$ limit. From random matrix theory, people find that the $n$ point function could be determined by a kernel $K$
\begin{align}
{\rho ^{(n)}}({\lambda _1}, \ldots ,{\lambda _n}) = \frac{{(L - n)!}}{{L!}}\det (K({\lambda _i},{\lambda _j}))_{i,j = 1}^n
\end{align}
where the kernel $K$, in the large $L$ limit, behaves as
\begin{align}
K({\lambda _i},{\lambda _j}) \equiv \left\{ \begin{array}{l}
\frac{L}{\pi }\frac{{\sin (L({\lambda _i} - {\lambda _j}))}}{{L({\lambda _i} - {\lambda _j})}}              {\text{         for }}i\not  = j\\
\frac{L}{{2\pi }}\sqrt {4 - \lambda _i^2}             {\text{             for }}i = j
\end{array} \right.
\end{align}
The kernel packages several information about random matrix theory in the large $L$ limit, where at the colliding case $i=j$, this kernel, as a one point function, serves as Wigner's semicircle law. While in the case where $i\ne j$, this kernel is called the sine kernel in random matrix theory, which is even universal in most standard ensembles beyond GUE. 
\\
\\
The main goal of this paper is to try to build up the technology on how to compute the Fourier transform of the $n$-point correlation functions, which is called the spectral form factor,
\begin{align}
{{\cal R}_{2k}}(t) = \sum\limits_{i,j} {\int {D\lambda } {e^{i({\lambda _{{i_1}}} +  \ldots  + {\lambda _{{i_k}}} - {\lambda _{{j_1}}} -  \ldots  - {\lambda _{{j_k}}})}}} 
\end{align}
where $k$ is any positive integer. We will start from our simplest example, the two point form factor 
\begin{align}
{{\cal R}_2}(t) = \sum\limits_{i,j} {\int {D\lambda } {e^{i({\lambda _i} - {\lambda _j})}}} 
\end{align}
and we will discuss how to compute higher points and finite temperature result. 
\subsection{Two point form factor}
\subsubsection{The disconnected piece}
We start to compute the two point form factor $\mathcal{R}_2$,
\begin{align}
&{{\cal R}_2}(t) = \sum\limits_{i,j} {\int {d{\lambda _i}d{\lambda _j}{\rho ^{(2)}}({\lambda _i},{\lambda _j})} {e^{i({\lambda _i} - {\lambda _j})t}}} \nonumber\\
&= L + L(L - 1)\int {d{\lambda _1}d{\lambda _2}{\rho ^{(2)}}({\lambda _1},{\lambda _2}){e^{i({\lambda _1} - {\lambda _2})t}}} 
\end{align}
By directly computing the determinant we have
\begin{align}
{\rho ^{(2)}}({\lambda _1},{\lambda _2}) = \frac{{{L^2}}}{{L(L - 1)}}\rho ({\lambda _1})\rho ({\lambda _2}) - \frac{{{L^2}}}{{L(L - 1)}}\frac{{{{\sin }^2}(L({\lambda _1} - {\lambda _2}))}}{{{{(L\pi ({\lambda _1} - {\lambda _2}))}^2}}}
\end{align}
where we define
\begin{align}
\rho (\lambda ) \equiv {\rho ^{(1)}}(\lambda )
\end{align}
which has been reduced to the Wigner semicircle
\begin{align}
\rho (\lambda ) \equiv {\rho ^{(1)}}(\lambda ) = \frac{1}{{2\pi }}\sqrt {4 - {\lambda ^2}} 
\end{align}
The leading piece we call disconnected, and it is relatively simple to deal with. The Fourier transform along this part is 
\begin{align}
{\cal R}_2^{{\rm{disc}}}(t) = {L^2}\int {d{\lambda _1}d{\lambda _2}\rho ({\lambda _1})\rho ({\lambda _2}){e^{i({\lambda _1} - {\lambda _2})t}}}  = {L^2}r_1^2(t)
\end{align}
where the function $r_1(t)$ is written as 
\begin{align}
{r_1}(t) = \frac{{{J_1}(2t)}}{t}
\end{align}
where $J_\nu(z)$ means the standard notation of the Bessel function. 
\subsubsection{The connected piece: box approximation}
Now let us discuss the connected piece, which is defined as 
\begin{align}
{\cal R}_2^{{\rm{conn}}}(t) = {{\cal R}_2}(t) - {\cal R}_2^{{\rm{disc}}}(t) = L -{L^2}\int {d{\lambda _1}d{\lambda _2}\frac{{{{\sin }^2}(L({\lambda _1} - {\lambda _2}))}}{{{{(L\pi ({\lambda _1} - {\lambda _2}))}^2}}}{e^{i({\lambda _1} - {\lambda _2})t}}} 
\end{align}
However, the integral that appearing here, is divergent. The reason is that the sine kernel written here cannot probe two energy eigenvalues $\lambda_1$ and $\lambda_2$ that are very close to each other, more precisely, around $|\lambda_1-\lambda_2|\sim 1/L$. However, we could invent a technology that is called \emph{box approximation} that could still capture some physics, where we will describe as the following.
\\
\\
Firstly, try to do a coordinate transformation 
\begin{align}
&{u_1} = {\lambda _1} - {\lambda _2}\nonumber\\
&{u_2} = {\lambda _2}
\end{align}
and thus the integral becomes
\begin{align}
{L^2}\int {d{\lambda _1}d{\lambda _2}\frac{{{{\sin }^2}(L({\lambda _1} - {\lambda _2}))}}{{{{(L\pi ({\lambda _1} - {\lambda _2}))}^2}}}{e^{i({\lambda _1} - {\lambda _2})t}}}  = {L^2}\int {d{u_1}d{u_2}\frac{{{{\sin }^2}(L{u_1})}}{{L\pi u_1^2}}{e^{i{u_1}t}}} 
\end{align}
The expression, written in this form, manifests the divergence because we have an uncontrolled integral over the variable $u_2$. Now let us firstly integrate over the variable $u_1$. Performing the integral, we have
\begin{align}
{L^2}\int {d{u_1}\frac{{{{\sin }^2}(L{u_1})}}{{{{(L\pi {u_1})}^2}}}{e^{i{u_1}t}}}  = \frac{L}{\pi }\left\{ {\begin{array}{*{20}{c}}
{1 - \frac{t}{{2L}}}&{{\text{for  }}t < 2L}\\
0&{{\text{for  }}t > 2L}
\end{array}} \right.
\end{align}
So the whole connected piece should be given by this function times the volume of the integration region of $u_2$: $\text{vol}(\mathbb{R})$. However, one could try to cut off the integration range by brute force to get a finite value. Let us assume that this cutoff space is symmetric around the origin, $[-\text{cut},\text{cut}]$, then the result is given by 
\begin{align}
{L^2}\int {d{\lambda _1}d{\lambda _2}\frac{{{{\sin }^2}(L({\lambda _1} - {\lambda _2}))}}{{{{(L\pi ({\lambda _1} - {\lambda _2}))}^2}}}{e^{i({\lambda _1} - {\lambda _2})t}}}  = \frac{{2{\rm{cut}} \times L}}{\pi }\left\{ {\begin{array}{*{20}{c}}
{1 - \frac{t}{{2L}}}&{{\text{for  }}t < 2L}\\
0&{{\text{for  }}t > 2L}
\end{array}} \right.
\end{align}
One can try to solve the $\text{cut}$ by checking the consistency of the result at $t=0$. At $t=0$ we know that the disconnected piece has contributed $L^2$, which is the whole form factor result, so the connected piece should get zero at $t=0$, which means that 
\begin{align}
2\text{cut}=\text{vol}(\mathbb{R})=\pi\to \text{cut}=\frac{\pi}{2}
\end{align}
One can see, which we will discuss later, that this cutoff $\pi/2$ also works for higher point cases. Let us think about an origin of it. Firstly, write done the one point function with Wigner semicircle law 
\begin{align}
\rho (\lambda ) = \frac{1}{{2\pi }}\sqrt {4 - {\lambda ^2}} 
\end{align}
Now let us pretend that $\lambda$ is very small, which is closed to the origin, then we have 
\begin{align}
\rho (\lambda=0) = \frac{1}{{\pi }}
\end{align}
Now, we could approximate, for small enough $\lambda$, that the semi-circle distribution is approximately a line. To compute the length of this line, we could use the normalization condition. The integral over $\rho$ is normalized by $1$, so if we choose our line to be distributed in the range $[-\text{cut},\text{cut}]$, so we get $2\text{cut}/\pi=1$, namely $\text{cut}=\pi/2$. A short explanation of this phenomena is that the box approximation is a brute force choice to make up the difference between the sine kernel and the semicircle when two energies are very close to each other $\lambda_1\to\lambda_2$. 
\\
\\
There is an another interpretation to this result. The connected part of the two point form factor is a linear increase, from $(0,0)$ to $(2L,L)$ in the coordinate $(t,\mathcal{R}_2^\text{conn}(t))$, and the stop growing (we call it as plateau). The origin $(0,0)$ is fixed, and the plateau time $t_p=2L$, is fixed by the property of the Fourier transform of the sine kernel, which will be independent of the cutoff choice. The plateau value $\mathcal{R}_2^\text{conn}(t_p=2L))=L$, is fixed by the long time average interpretation of definition of the form factor (which means that the damping $e^{(i(\lambda_1-\lambda_2)t)}$ for $\lambda_1\ne \lambda_2$ will be cancelled after long time averaging, and the only constant piece with $\lambda_1=\lambda_2$ will give the result $L$ because there are $L$ eigenvalues in total). Thus, drawing a line from $(0,0)$ to $(2L,L)$, assuming linearity, has to obtain the slope $1/2$. Because $(2L,L)$ is already fixed, so we could claim that the result beyond box approximation should be some non-linear physics.
\\
\\
As a summary, we obtain that the connected piece of the two point form factor is given by
\begin{align}
\mathcal{R}_2^{{\rm{conn}}} = L(1 - {r_2}(t))
\end{align}
where $r_2(t)$ is defined as 
\begin{align}
{r_2}(t) = \left\{ {\begin{array}{*{20}{c}}
{1 - \frac{t}{{2L}}}&{{\text{for  }}t < 2L}\\
0&{{\text{for  }}t > 2L}
\end{array}} \right.
\end{align}
\subsubsection{The connected piece: an improvement}
Now we introduce an improvement which is more refined than the box cutoff. In this part, we will try to use the short distance kernel 
\begin{align}
\widetilde{K}({\lambda _i},{\lambda _j}) = L\,\frac{{\sin (\pi L({\lambda _i} - {\lambda _j})\rho (({\lambda _i} + {\lambda _j})/2))}}{{\pi L({\lambda _i} - {\lambda _j})}}
\end{align}
where this kernel is an approximation when $\lambda_i$ and $\lambda_j$ are sufficiently close. The following technology is also mentioned in \cite{Chen:2017yzn}, but the results here, as far as we know, are novel.
\\
\\
Take this kernel in our hand, let us try to compute the connected part of the form factor. It is now captured by an integral
\begin{align}
{L^2}\int {d{\lambda _1}d{\lambda _2}\frac{{{{\sin }^2}(\pi L({\lambda _1} - {\lambda _2})\rho (({\lambda _1} + {\lambda _2})/2))}}{{{{(\pi L({\lambda _i} - {\lambda _j}))}^2}}}{e^{i({\lambda _1} - {\lambda _2})t}}} 
\end{align}
Here, we try applying a different coordinate transform
\begin{align}
&{u_1} = {\lambda _1} - {\lambda _2}\nonumber\\
&{u_2} = \frac{{{\lambda _1} + {\lambda _2}}}{2}
\end{align}
So the integral becomes 
\begin{align}
&{L^2}\int {d{\lambda _1}d{\lambda _2}\frac{{{{\sin }^2}(\pi L({\lambda _1} - {\lambda _2})\rho (({\lambda _1} + {\lambda _2})/2))}}{{{{(\pi L({\lambda _i} - {\lambda _j}))}^2}}}{e^{i({\lambda _1} - {\lambda _2})t}}}  \nonumber\\
&= {L^2}\int {d{u_1}d{u_2}\frac{{{{\sin }^2}(\pi L{u_1}\rho ({u_2}))}}{{{{(\pi L{u_1})}^2}}}{e^{i{u_1}t}}} 
\end{align}
The treatment here we could have is that we could split the space of $u_1$ in $\mathbb{R}$ by infinite number of intervals $\Omega$ at the center $u_2$, with the assumption that the integrand outside the interval will quickly decay. Suppose that we are now at the center $u_2$, and the interval has the range $[-\Omega_0/2,\Omega_0/2]$, then performing the integral, in the large $L$ limit, we have
\begin{align}
&{L^2}\int_{ - {\Omega _0}/2}^{{\Omega _0}/2} {d{u_1}\frac{{{{\sin }^2}(\pi L{u_1}\rho ({u_2}))}}{{{{(\pi L{u_1})}^2}}}{e^{i{u_1}t}}} \nonumber\\
&= \frac{L}{\pi }\rho ({u_2})\int_{ - L\rho ({u_2})\pi {\Omega _0}/2}^{L\rho ({u_2})\pi {\Omega _0}/2} {d{u_1}\frac{{{{\sin }^2}({u_1})}}{{u_1^2}}{e^{i{u_1}t/L\pi \rho ({u_2})}}}  \nonumber\\
&\sim \frac{L}{\pi }\rho ({u_2})\int_{ - \infty }^{ + \infty } {d{u_1}\frac{{{{\sin }^2}({u_1})}}{{u_1^2}}{e^{i{u_1}t/L\pi \rho ({u_2})}}} \nonumber\\
&= L\rho ({u_2})\left\{ {\begin{array}{*{20}{c}}
{1 - \frac{t}{{2\pi L\rho ({u_2})}}}&{{\text{for  }}t < 2\pi L\rho ({u_2})}\\
0&{{\text{for  }}t > 2\pi L\rho ({u_2})}
\end{array}} \right. \nonumber\\
&= \max \left( {L\rho ({u_2}) - \frac{t}{{2\pi }},0} \right)
\end{align}
Here an assumption we are making is that we are extending the range from an $L$ amplified interval to infinity, regardless of the fact that the exponent will be $\mathcal{O}(1)$ even if $u_1$ could scale as $\mathcal{O}(L)$. 
\\
\\
Now, we sum over the all intervals, which means that we are integrating over $u_2$ in the range $[-2,2]$ (the range of the semicircle), we get 
\begin{align}
&{L^2}\int {d{u_1}d{u_2}\frac{{{{\sin }^2}(\pi L{u_1}\rho ({u_2}))}}{{{{(\pi L{u_1})}^2}}}{e^{i{u_1}t}}} \\
&= \int_{ - 2}^2 {d{u_2}\max \left( {L\rho ({u_2}) - \frac{t}{{2\pi }},0} \right)} \\
&= \left\{ {\begin{array}{*{20}{c}}
{\frac{2}{\pi }L{\rm{arccsc}}\left( {\frac{{2L}}{{\sqrt {4{L^2} - {t^2}} }}} \right) - \frac{t}{{2\pi L}}\sqrt {4{L^2} - {t^2}} }&{{\text{for  }}t < 2L}\\
0&{{\text{for  }}t > 2L}
\end{array}} \right.
\end{align}
Thus, the connected form factor is given by
\begin{align}
\mathcal{R}_2^{{\rm{conn}}}(t)=\left\{ {\begin{array}{*{20}{c}}
{L - \frac{2}{\pi }L{\rm{arccsc}}\left( {\frac{{2L}}{{\sqrt {4{L^2} - {t^2}} }}} \right) + \frac{t}{{2\pi L}}\sqrt {4{L^2} - {t^2}} }&{{\text{for  }}t < 2L}\\
L&{{\text{for  }}t > 2L}
\end{array}} \right.
\end{align}
This result will capture more accurate physics. One interesting thing is that in the early time we expand it in small $t$ we get
\begin{align}
{\cal R}_2^{{\rm{conn}}}(t) = \frac{{2t}}{\pi } - \frac{{{t^3}}}{{12\pi {L^2}}} - \frac{{{t^5}}}{{320\pi {L^4}}} + \mathcal{O}\left( {{t^6}} \right)
\end{align}
Thus, this method will give the slope $2/\pi$ in the early time. This fact is verified by numerics in \cite{Cotler:2017jue}, but with the plateau still $(2L,L)$. The reason is that the function in the middle is nonlinear. One can estimate the nonlinear time scale, which is given by $t = \mathcal{O}( L)$, where in this time scale the higher order corrections to the linear function becomes important.
\\
\\
However, as this refined technology cannot be generalized to higher point case simply, we will keep using box cutoff for higher point case, which is believable for physics in spectral form factors. 
\subsection{Higher point form factor: theorem}
Higher point form factor calculations are based on multi-variable Fourier transforms of determinant of sine kernels. We will derive some generic results to establish the framework of computing higher point form factors in general based on the box approximation, and compute a four-point example. 
Our starting point will be the following theorem,
\begin{thm}
[Convolution formula for infinite $L$, in eq.5.2.23, \cite{book2}] We have the following formula to compute the convolution of the sine kernel:
\begin{align}
& \int{\prod\nolimits_{i=1}^{m}{d{{y}_{i}}}\exp (2\pi i\sum\limits_{j=1}^{m}{{{k}_{j}}{{y}_{j}}})s({{y}_{1}}-{{y}_{2}})}s({{y}_{2}}-{{y}_{3}})\ldots s({{y}_{m-1}}-{{y}_{m}})s({{y}_{m}}-{{y}_{1}}) \nonumber\\
& =\delta (\sum\limits_{j=1}^{m}{{{k}_{j}}})\int{dk}g(k)g(k+{{k}_{1}})\ldots g(k+{{k}_{m-1}})
\end{align}
where $s$ is the sine kernel
\begin{align}
s(r):=\frac{\sin (\pi r)}{\pi r}
\end{align}
and the principle valued Fourier transform of the sine kernel is given by
\begin{align}
\int{{{e}^{2\pi ikr}}s(r)}dr=g(k)=\left\{ \begin{matrix}
   1 & \left| k \right|<\frac{1}{2}  \\
   0 & \left| k \right|>\frac{1}{2}  \\
\end{matrix} \right.
\end{align}
\end{thm}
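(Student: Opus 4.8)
The plan is to pass to Fourier space, where the cyclic product of sine kernels decouples into independent box functions $g$ tied together only by momentum-conservation constraints. Since $g$ is even, the stated relation $\int e^{2\pi i kr}s(r)\,dr=g(k)$ inverts to $s(r)=\int dp\,g(p)\,e^{2\pi i p r}$, and I would check this directly from $\int_{-1/2}^{1/2}e^{2\pi i pr}dp=\sin(\pi r)/(\pi r)$. Writing the cyclically labelled factors $s(y_j-y_{j+1})$ (with $y_{m+1}\equiv y_1$) as
\begin{align}
s(y_j-y_{j+1})=\int dp_j\,g(p_j)\,e^{2\pi i p_j(y_{j+1}-y_j)},
\end{align}
where the sign in the exponent is chosen freely using the evenness of $g$, prepares every factor for the $y$-integrations.

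Substituting all $m$ representations into the left-hand side and collecting the phase, the coefficient of each $y_j$ is $k_j-p_j+p_{j-1}$ (with the cyclic identification $p_0\equiv p_m$). The $m$ integrals over the $y_j$ then each yield a delta function,
\begin{align}
\int\prod_{i=1}^{m}dy_i\,e^{2\pi i\sum_j(k_j-p_j+p_{j-1})y_j}=\prod_{j=1}^{m}\delta(k_j-p_j+p_{j-1}),
\end{align}
so the left-hand side collapses to $\int\prod_j dp_j\,g(p_j)\prod_j\delta(k_j-p_j+p_{j-1})$.

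The core of the argument is to resolve this cyclic system. Summing the arguments of the $m$ delta functions, the momenta telescope away and only $\sum_j k_j$ survives, so the product of deltas factors as $\delta(\sum_j k_j)$ times $m-1$ independent constraints; this is exactly the overall factor claimed. Using those $m-1$ constraints to integrate out $p_1,\ldots,p_{m-1}$ in terms of the single free momentum $k:=p_m$, the recursion $p_j=p_{j-1}+k_j$ gives $p_j=k+k_1+\cdots+k_j$. Feeding these back into $\prod_j g(p_j)$ leaves the single surviving integral $\int dk\,g(k)g(k+k_1)\cdots g(k+k_1+\cdots+k_{m-1})$, which is the right-hand side.

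I expect the only genuine obstacle to be distributional rather than combinatorial. Because $s$ is not absolutely integrable, the transform relation and the delta functions must be read in the principal-value / tempered-distribution sense flagged in the statement; I would therefore treat $g$ as the honest $L^1\cap L^\infty$ box function and $s$ as its distributional inverse transform, which is what legitimizes interchanging the $y$- and $p$-integrations and the emergence of the delta functions. Once that interpretation is fixed, the cyclic bookkeeping of the previous step is purely mechanical, and the evenness of $g$ guarantees the freedom of sign conventions used along the way.
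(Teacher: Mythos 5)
Your proof is correct, but it takes a genuinely different route from the paper's. The paper follows Mehta: it changes variables to the differences $u_j = y_j - y_{j+1}$ ($j=1,\ldots,m-1$) and $u_m = y_m$, integrates out $u_m$ to extract $\delta(\sum_j k_j)$ immediately, and then handles the single wrap-around kernel $s(\sum_{j=1}^{m-1}u_j)$ by inserting the identity $s(\sum_j u_j)=\int du\,dk\,s(u)\,e^{2\pi i k(u+\sum_j u_j)}$, after which each remaining $u_i$-integral is a one-dimensional Fourier transform of $s$ producing the factors $g(k+\sum_{l\le i}k_l)$. You instead represent \emph{every} kernel by its momentum integral $s(y_j-y_{j+1})=\int dp_j\,g(p_j)\,e^{2\pi i p_j(y_{j+1}-y_j)}$ from the outset, so that the $y$-integrations yield $m$ momentum-conservation deltas $\delta(k_j-p_j+p_{j-1})$ forming a rank-$(m-1)$ cyclic system, which you resolve correctly: the telescoping sum isolates $\delta(\sum_j k_j)$, and eliminating $p_1,\ldots,p_{m-1}$ leaves the loop momentum $k=p_m$ with $p_j=k+k_1+\cdots+k_j$. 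Both routes rest on the same convolution theorem and require the same distributional reading (which you flag appropriately, since $s\notin L^1$). Your symmetric treatment buys transparency: the overall delta is manifestly momentum conservation around the cycle, the residual integral is manifestly a loop-momentum integral, and no kernel needs special handling; the paper's sequential substitution instead introduces only one auxiliary momentum rather than $m$ and stays closer to honest one-dimensional Fourier transforms of $s$. Note finally that your result $\int dk\,g(k)g(k+k_1)\cdots g(k+k_1+\cdots+k_{m-1})$ carries partial-sum arguments, which is exactly what the paper's own derivation produces; the abbreviated $g(k+k_{m-1})$ in the theorem's display should be read as shorthand for those partial sums, so your conclusion agrees with the intended statement.
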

\begin{proof}
Change the variables
\begin{align}
& {{u}_{1}}={{y}_{1}}-{{y}_{2}} \nonumber\\
& {{u}_{2}}={{y}_{2}}-{{y}_{3}} \nonumber\\
& \ldots  \nonumber\\
& {{u}_{m-1}}={{y}_{m-1}}-{{y}_{m}} \nonumber\\
& {{u}_{m}}={{y}_{m}}
\end{align}
the inverse transform is
\begin{align}
  & {{y}_{1}}={{u}_{1}}+{{u}_{2}}+{{u}_{3}}+\ldots +{{u}_{m}}  \nonumber\\
 & {{y}_{2}}={{u}_{2}}+{{u}_{3}}+\ldots +{{u}_{m}}  \nonumber\\
 & \ldots   \nonumber\\
 & {{y}_{m-2}}={{u}_{m-2}}+{{u}_{m-1}}+{{u}_{m}} \nonumber\\
 & {{y}_{m-1}}={{u}_{m-1}}+{{u}_{m}}  \nonumber\\
 & {{y}_{m}}={{u}_{m}}
\end{align}
whose Jacobian is 1. Thus we obtain
\begin{align}
& \int{\prod\nolimits_{i=1}^{m}{d{{y}_{i}}}\exp (2\pi i\sum\limits_{j=1}^{m}{{{k}_{j}}{{y}_{j}}})s({{y}_{1}}-{{y}_{2}})}s({{y}_{2}}-{{y}_{3}})\ldots s({{y}_{m-1}}-{{y}_{m}})s({{y}_{m}}-{{y}_{1}}) \nonumber\\
& =\int{\prod\nolimits_{i=1}^{m}{d{{u}_{i}}}\exp (2\pi i\sum\limits_{l=1}^{m}{{{k}_{l}}\sum\limits_{\alpha =j}^{m}{{{u}_{\alpha }}}})s({{u}_{1}})}s({{u}_{2}})\ldots s({{u}_{m-1}})s(\sum\limits_{j=1}^{m-1}{{{u}_{j}}})  \nonumber\\
& =\int{\prod\nolimits_{i=1}^{m}{d{{u}_{i}}}\exp (2\pi i\sum\limits_{\alpha =1}^{m}{(\sum\limits_{l=1}^{\alpha }{{{k}_{l}}}){{u}_{\alpha }}})s({{u}_{1}})}s({{u}_{2}})\ldots s({{u}_{m-1}})s(\sum\limits_{j=1}^{m-1}{{{u}_{j}}})
\end{align}
From this, firstly we observe that we could firstly read off the integral over $u_m$, which is
\begin{align}
& \int{\prod\nolimits_{i=1}^{m}{d{{u}_{i}}}\exp (2\pi i\sum\limits_{\alpha =1}^{m}{(\sum\limits_{l=1}^{\alpha }{{{k}_{l}}}){{u}_{\alpha }}})}s({{u}_{1}})s({{u}_{2}})\ldots s({{u}_{m-1}})s(\sum\limits_{j=1}^{m-1}{{{u}_{j}}}) \nonumber\\
& =\int{\prod\nolimits_{i=1}^{m-1}{d{{u}_{i}}}\exp (2\pi i\sum\limits_{\alpha =1}^{m-1}{(\sum\limits_{l=1}^{\alpha }{{{k}_{l}}}){{u}_{\alpha }}})}s({{u}_{1}})s({{u}_{2}})\ldots s({{u}_{m-1}})s(\sum\limits_{j=1}^{m-1}{{{u}_{j}}})\nonumber\\
& \times\int{\exp (2\pi i(\sum\limits_{l=1}^{m}{{{k}_{l}}}){{u}_{m}})}d{{u}_{m}} \nonumber\\
& =\delta (\sum\limits_{l=1}^{m}{{{k}_{l}}})\int{\prod\nolimits_{i=1}^{m-1}{d{{u}_{i}}}\exp (2\pi i\sum\limits_{\alpha =1}^{m-1}{(\sum\limits_{l=1}^{\alpha }{{{k}_{l}}}){{u}_{\alpha }}})}s({{u}_{1}})s({{u}_{2}})\ldots s({{u}_{m-1}})s(\sum\limits_{j=1}^{m-1}{{{u}_{j}}})
\end{align}
How to deal with the last sine kernel? Now introduce a new variable $u$, which is
\begin{align}
s(\sum\limits_{j=1}^{m-1}{{{u}_{j}}})=s(-\sum\limits_{j=1}^{m-1}{{{u}_{j}}})=\int{du}s(u)\delta (u+\sum\limits_{j=1}^{m-1}{{{u}_{j}}})
\end{align}
and then, replace the delta function by exponential function
\begin{align}
s(\sum\limits_{j=1}^{m-1}{{{u}_{j}}})=\int{du}dks(u)\exp (2\pi ik(u+\sum\limits_{j=1}^{m-1}{{{u}_{j}}}))
\end{align}
Insert in the integral, we have
\begin{align}
& \int{\prod\nolimits_{i=1}^{m}{d{{y}_{i}}}\exp (2\pi i\sum\limits_{j=1}^{m}{{{k}_{j}}{{y}_{j}}})s({{y}_{1}}-{{y}_{2}})}s({{y}_{2}}-{{y}_{3}})\ldots s({{y}_{n-1}}-{{y}_{n}})s({{y}_{n}}-{{y}_{1}}) \nonumber\\
& =\delta (\sum\limits_{l=1}^{m}{{{k}_{l}}})\int{\prod\nolimits_{i=1}^{m-1}{d{{u}_{i}}}dudk\exp (2\pi i\sum\limits_{\alpha =1}^{m-1}{(\sum\limits_{l=1}^{\alpha }{{{k}_{l}}}){{u}_{\alpha }}})}\exp (2\pi ik(u+\sum\limits_{j=1}^{m-1}{{{u}_{j}}}))\nonumber\\
&\times s({{u}_{1}})s({{u}_{2}})\ldots s({{u}_{m-1}})s(u) \nonumber\\
& =\delta (\sum\limits_{l=1}^{m}{{{k}_{l}}})\int{\prod\nolimits_{i=1}^{m-1}{d{{u}_{i}}}dudk\exp (2\pi i\sum\limits_{\alpha =1}^{m-1}{(\sum\limits_{l=1}^{\alpha }{{{k}_{l}}}+k){{u}_{\alpha }}})}\exp (2\pi iku))\nonumber\\
& \times s({{u}_{1}})s({{u}_{2}})\ldots s({{u}_{m-1}})s(u) \nonumber\\
& =\delta (\sum\limits_{l=1}^{m}{{{k}_{l}}})\int{dk}\left( \prod\nolimits_{i=1}^{m-1}{\int{d{{u}_{i}}\exp (2\pi i(\sum\limits_{l=1}^{i}{{{k}_{l}}}+k){{u}_{i}})s({{u}_{i}})}} \right)\left( \int{du\exp (2\pi iku)s(u)} \right) \nonumber\\
& =\delta (\sum\limits_{l=1}^{m}{{{k}_{l}}})\int{dk}g(k)\prod\nolimits_{i=1}^{m-1}{g(\sum\limits_{l=1}^{i}{{{k}_{l}}}+k)}
\end{align}
as desired. 
\end{proof}
Now it is obvious to generalize this claim to large but finite $L$. We have
\begin{align}
&\int {\prod\limits_{i = 1}^m {d{\lambda _i}K({\lambda _1},{\lambda _2})K({\lambda _2},{\lambda _3}) \ldots K({\lambda _{m - 1}},{\lambda _m})K({\lambda _m},{\lambda _1}){e^{i\sum\limits_{i = 1}^m {{k_i}{\lambda _i}} }}} } \nonumber\\
&= \frac{L}{\pi }\int {d\lambda {e^{i\sum\limits_{i = 1}^m {{k_i}{\lambda}} }}} \int {dkg(k)g(k + \frac{{{k_1}}}{{2L}})g(k + \frac{{{k_2}}}{{2L}}) \ldots g(k + \frac{{{k_{m - 1}}}}{{2L}})} 
\end{align}
where the delta function is replaced by an integral over exponential function. We impose the box approximation again
\begin{align}
\int {d\lambda {e^{i\sum\limits_{i = 1}^m {{k_i}{\lambda}} }}}  \to \int_{ - \pi /2}^{\pi /2} {d\lambda {e^{i\sum\limits_{i = 1}^m {{k_i}{\lambda}} }}} 
\end{align}
which is always fixed by the normalization at initial time,
\begin{align}
\frac{L}{\pi }\int_{ - \pi /2}^{\pi /2} {d\lambda {e^{i\sum\limits_{i = 1}^m {{k_i}{\lambda}} }}} \int {dkg(k)g(k + \frac{{{k_1}}}{{2L}})g(k + \frac{{{k_2}}}{{2L}}) \ldots g(k + \frac{{{k_{m - 1}}}}{{2L}})} {|_{{k_1} = {k_2} =  \ldots {k_{m - 1}} = 0}} = L
\end{align}
and we find the number $\pi/2$ is universal for all $m$. So we finally get the useful formula
\begin{thm}[Convolution formula for finite large $L$]
\begin{align}
&\int {\prod\limits_{i = 1}^m {d{\lambda _i}K({\lambda _1},{\lambda _2})K({\lambda _2},{\lambda _3}) \ldots K({\lambda _{m - 1}},{\lambda _m})K({\lambda _m},{\lambda _1}){e^{i\sum\limits_{i = 1}^m {{k_i}{\lambda _i}} }}} } \nonumber\\
&= L{r_3}(\sum\limits_{i = 1}^m {{k_i}} )\int {dkg(k)g(k + \frac{{{k_1}}}{{2L}})g(k + \frac{{{k_2}}}{{2L}}) \ldots g(k + \frac{{{k_{m - 1}}}}{{2L}})} 
\end{align}
where we define the function
\begin{align}
{r_3}(t) = \frac{{\sin (\pi t/2)}}{{\pi t/2}}
\end{align}
\end{thm}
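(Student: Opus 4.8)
The plan is to reduce the finite-$L$ statement to the infinite-$L$ convolution formula of the previous theorem by a single rescaling of the eigenvalues, and then to reinstate the finite-$L$ physics through the box approximation already calibrated in the two-point analysis. First I would rewrite the finite-$L$ sine kernel in terms of the standard kernel $s(r)=\sin(\pi r)/(\pi r)$. Because
\begin{align}
K(\lambda_a,\lambda_b) = \frac{L}{\pi}\frac{\sin(L(\lambda_a-\lambda_b))}{L(\lambda_a-\lambda_b)} = \frac{L}{\pi}\,s\!\left(\frac{L}{\pi}(\lambda_a-\lambda_b)\right),
\end{align}
the substitution $y_i=\frac{L}{\pi}\lambda_i$ turns each kernel factor into $\frac{L}{\pi}s(y_a-y_b)$, while the measure contributes $\prod_i d\lambda_i=(\pi/L)^m\prod_i dy_i$; the $m$ factors of $L/\pi$ cancel against the Jacobian and leave exactly $\prod_i dy_i\,s(y_1-y_2)\cdots s(y_m-y_1)$. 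The phase rescales as $e^{i\sum_i k_i\lambda_i}=e^{2\pi i\sum_i (k_i/2L)y_i}$, so the integral is precisely the left-hand side of the infinite-$L$ theorem with the dummy frequencies sent to $k_i\mapsto k_i/(2L)$.

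Applying that theorem then gives
\begin{align}
\delta\!\left(\frac{1}{2L}\sum_{j=1}^m k_j\right)\int dk\, g(k)\,g\!\left(k+\frac{k_1}{2L}\right)\cdots g\!\left(k+\frac{k_{m-1}}{2L}\right),
\end{align}
so the overlap-of-boxes factor already appears in the form claimed by the theorem, and only the prefactor must still be processed. Using $\delta(\frac{1}{2L}\sum_j k_j)=2L\,\delta(\sum_j k_j)$ and representing the delta as a Fourier integral reproduces the factor $\frac{L}{\pi}\int d\lambda\,e^{i(\sum_j k_j)\lambda}$ displayed in the text; tracing the rescaling back shows this $\lambda$ is nothing but the overall (``center-of-mass'') eigenvalue coordinate whose integration over all of $\mathbb{R}$ is what produced the distributional delta in the first place.

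The key step, and the point where the argument stops being an identity, is the treatment of this overall integral. For finite $L$ the spectrum is supported on the semicircle, so $\lambda$ cannot genuinely range over all of $\mathbb{R}$; I would regularize it by the box approximation, cutting $\lambda$ to $[-\pi/2,\pi/2]$ with the cutoff value fixed exactly as in the two-point computation, where the linearization of the semicircle near the origin, $\rho(0)=1/\pi$, together with $\int\rho\,d\lambda=1$, forces $\mathrm{cut}=\pi/2$. The elementary evaluation then yields
\begin{align}
\frac{L}{\pi}\int_{-\pi/2}^{\pi/2} d\lambda\, e^{i(\sum_j k_j)\lambda} = \frac{2L\sin\!\big(\tfrac{\pi}{2}\sum_j k_j\big)}{\pi\sum_j k_j} = L\,r_3\!\left(\sum_{i} k_i\right),
\end{align}
which is precisely the claimed prefactor. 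The main obstacle is therefore conceptual rather than computational: one must justify replacing the exact delta by the smooth $L\,r_3$, i.e. that the box cutoff correctly captures the finite-support smearing of the overall mode while the relative modes are safely handled by the exact infinite-$L$ convolution. I would close the argument, and pin down the universality of the cutoff $\pi/2$ for every $m$, by checking the normalization at $k_1=\cdots=k_{m-1}=0$: since $r_3(0)=1$ and $\int dk\,g(k)^m=\int_{-1/2}^{1/2}dk=1$, the whole expression collapses to $L$, matching the $t=0$ value of the form factor.
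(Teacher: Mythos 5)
Your proposal is correct and follows essentially the same route as the paper: rescale $y_i=\tfrac{L}{\pi}\lambda_i$ so that the Jacobian cancels the kernel prefactors and the infinite-$L$ convolution theorem applies with frequencies $k_i/(2L)$, rewrite the resulting $2L\,\delta(\sum_i k_i)$ as $\tfrac{L}{\pi}\int d\lambda\, e^{i(\sum_i k_i)\lambda}$, and impose the box cutoff $[-\pi/2,\pi/2]$ fixed by demanding the value $L$ at vanishing frequencies, which evaluates to $L\,r_3(\sum_i k_i)$. You merely make explicit the kernel rescaling and the $t=0$ normalization check that the paper compresses into the remark that the finite-$L$ generalization is ``obvious.''
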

This convolution formula allows us to compute any higher point spectral form factors. We will show an example about how the four point form factor has been computed.  
\subsection{Four point form factor}
Now let us consider the four point form factor as an example
\begin{align}
{\mathcal{R}_{4}}=\sum\limits_{a,b,c,d=1}^{L}{\int{D\lambda}{{e}^{i({{\lambda }_{a}}+{{\lambda }_{b}}-{{\lambda }_{c}}-{{\lambda }_{d}})t}}}
\end{align}
Before our computation, we will define the following building block functions
\begin{align}
&{r_1}(t) = \frac{{{J_1}(2t)}}{t}\nonumber\\
&{r_2}(t) = \left\{ {\begin{array}{*{20}{c}}
{1 - \frac{t}{{2L}}}&{{\text{for  }}t < 2L}\\
0&{{\text{for  }}t > 2L}
\end{array}} \right. \nonumber\\
&{r_3}(t) = \frac{{\sin (\pi t/2)}}{{\pi t/2}}
\end{align}
Take a look at the classifications of combinations in ${\mathcal{R}_{4}}$, which is
\begin{itemize}
\item $a=b=c=d=e=f$: Contribute $L$.
\item $a=b$: Contribute $L(L-1)(L-2)\int{D\lambda}{{e}^{i(2{{\lambda }_{1}}-{{\lambda }_{2}}-{{\lambda }_{3}})t}}$.
\item $c=d$: Contribute $L(L-1)(L-2)\int{D\lambda}{{e}^{i({{\lambda }_{1}}+{{\lambda }_{2}}-2{{\lambda }_{3}})t}}$.
\item $a=c$ or $a=d$ or $b=c$ or $b=d$: Contribute $4L(L-1)(L-2)\int{D\lambda}{{e}^{i({{\lambda }_{1}}-{{\lambda }_{2}})t}}$.
\item $b=c=d$ or $a=c=d$ or $a=b=d$ or $a=b=c$: Contribute $4L(L-1)\int{D\lambda}{{e}^{i({{\lambda }_{1}}-{{\lambda }_{2}})t}}$.
\item $a=b$ and $c=d$:  Contribute $L(L-1)\int{D\lambda}{{e}^{i(2{{\lambda }_{1}}-2{{\lambda }_{2}})t}}$.
\item $a=c$ and $b=d$, or $a=d$ and $b=c$: Contribute $2L(L-1)$.
\item All inequal indexes: $L(L-1)(L-2)(L-3)\int{D\lambda}{{e}^{i({{\lambda }_{1}}+{{\lambda }_{2}}-{{\lambda }_{3}}-{{\lambda }_{4}})t}}$
\end{itemize}
Add the total prefactors will give $L^4$. Add them together we get
\begin{align}
  & {\mathcal{R}_{4}}=L(L-1)(L-2)(L-3)\int{D\lambda}{{e}^{i({{\lambda }_{1}}+{{\lambda }_{2}}-{{\lambda }_{3}}-{{\lambda }_{4}})t}} \nonumber\\
 & +2L(L-1)(L-2)\operatorname{Re}\int{D\lambda}{{e}^{i(2{{\lambda }_{1}}-{{\lambda }_{2}}-{{\lambda }_{3}})t}}  \nonumber\\
 & +L(L-1)\int{D\lambda}{{e}^{i(2{{\lambda }_{1}}-2{{\lambda }_{2}})t}}  \nonumber\\
 & +4L{{(L-1)}^{2}}\int{D\lambda}{{e}^{i({{\lambda }_{1}}-{{\lambda }_{2}})t}}  \nonumber\\
 & +2{{L}^{2}}-L  \nonumber\\
 & =L(L-1)(L-2)(L-3)\int{D\lambda}{{e}^{i({{\lambda }_{1}}+{{\lambda }_{2}}-{{\lambda }_{3}}-{{\lambda }_{4}})t}}  \nonumber\\
 & +2L(L-1)(L-2)\operatorname{Re}\int{D\lambda}{{e}^{i(2{{\lambda }_{1}}-{{\lambda }_{2}}-{{\lambda }_{3}})t}}  \nonumber\\
 & +{L}^{2}|{r}_{1}(2t)|^{2}-Lr_2(2t) \nonumber\\
 & +4(L-1)({{L}^{2}}|r_{1}(t)|^{2}-L{r_{2}}(t))  \nonumber\\
 & +2{{L}^{2}}-L
\end{align}
We already obtained what the last three terms are. Now we only need to consider the first two terms.
\subsubsection{The first term}
The first term is an actual four point function.
\begin{align}
L(L-1)(L-2)(L-3)\int{D\lambda}{{e}^{i({{\lambda }_{1}}+{{\lambda }_{2}}-{{\lambda }_{3}}-{{\lambda }_{4}})t}}
\end{align}
When expanding the determinant, the terms could be summarized as the following,
\begin{itemize}
\item 4-type: In this case we have
\begin{align}
  & -2\int{d\lambda_1d\lambda_2d\lambda_3d\lambda_4 {{K}}({{\lambda }_{1}},{{\lambda }_{3}}){{K}}({{\lambda }_{3}},{{\lambda }_{2}}){{K}}({{\lambda }_{2}},{{\lambda }_{4}}){{K}}({{\lambda }_{4}},{{\lambda }_{1}})}{{e}^{i({{\lambda }_{1}}+{{\lambda }_{2}}-{{\lambda }_{3}}-{{\lambda }_{4}})t}} \nonumber\\
 & -2\int{d\lambda_1d\lambda_2d\lambda_3d\lambda_4 {{K}}({{\lambda }_{1}},{{\lambda }_{2}}){{K}}({{\lambda }_{2}},{{\lambda }_{3}}){{K}}({{\lambda }_{3}},{{\lambda }_{4}}){{K}}({{\lambda }_{4}},{{\lambda }_{1}})}{{e}^{i({{\lambda }_{1}}+{{\lambda }_{2}}-{{\lambda }_{3}}-{{\lambda }_{4}})t}} \nonumber\\
 & -2\int{d\lambda_1d\lambda_2d\lambda_3d\lambda_4 {{K}}({{\lambda }_{1}},{{\lambda }_{2}}){{K}}({{\lambda }_{2}},{{\lambda }_{4}}){{K}}({{\lambda }_{4}},{{\lambda }_{3}}){{K}}({{\lambda }_{3}},{{\lambda }_{1}})}{{e}^{i({{\lambda }_{1}}+{{\lambda }_{2}}-{{\lambda }_{3}}-{{\lambda }_{4}})t}}
\end{align}
thus the result is
\begin{align}
-6Lr_{2}(2t)
\end{align}
\item 1-1-1-1-type: In this case we have
\begin{align}
\int{d{{\lambda }_{1}}}{{K}}({{\lambda }_{1}},{{\lambda }_{1}}){{e}^{i{{\lambda }_{1}}t}}\int{d{{\lambda }_{2}}{{K}}({{\lambda }_{2}},{{\lambda }_{2}}){{e}^{i{{\lambda }_{2}}t}}}\int{d{{\lambda }_{3}}{{K}}({{\lambda }_{3}},{{\lambda }_{3}}){{e}^{-i{{\lambda }_{3}}t}}}\int{d{{\lambda }_{4}}{{K}}({{\lambda }_{4}},{{\lambda }_{4}}){{e}^{-i{{\lambda }_{4}}t}}}
\end{align}
This term contributes
\begin{align}
{{L}^{4}}|r_{1}(t)|^{4}
\end{align}
\item 1-1-2-type: In this case we have
\begin{align}
  & -\int{d{{\lambda }_{1}}d{{\lambda }_{2}}{K^2}({{\lambda }_{1}},{{\lambda }_{2}})}{{e}^{i({{\lambda }_{1}}+{{\lambda }_{2}})t}}\int{d{{\lambda }_{3}}}{{K}}({{\lambda }_{3}},{{\lambda }_{3}}){{e}^{-i{{\lambda }_{3}}t}}\int{d{{\lambda }_{4}}{{K}}({{\lambda }_{4}},{{\lambda }_{4}}){{e}^{-i{{\lambda }_{4}}t}}}  \nonumber\\
 & -\int{d{{\lambda }_{3}}d{{\lambda }_{4}}{K^2}({{\lambda }_{3}},{{\lambda }_{3}})}{{e}^{-i({{\lambda }_{3}}+{{\lambda }_{4}})t}}\int{d{{\lambda }_{1}}}{{K}}({{\lambda }_{1}},{{\lambda }_{1}}){{e}^{i{{\lambda }_{1}}t}}\int{d{{\lambda }_{2}}{{K}}({{\lambda }_{2}},{{\lambda }_{2}}){{e}^{i{{\lambda }_{2}}t}}}  \nonumber\\
 & -\int{d{{\lambda }_{1}}d{{\lambda }_{4}}{K^2}({{\lambda }_{1}},{{\lambda }_{4}})}{{e}^{i({{\lambda }_{1}}-{{\lambda }_{4}})t}}\int{d{{\lambda }_{2}}}{{K}}({{\lambda }_{2}},{{\lambda }_{2}}){{e}^{i{{\lambda }_{2}}t}}\int{d{{\lambda }_{3}}{{K}}({{\lambda }_{3}},{{\lambda }_{3}}){{e}^{-i{{\lambda }_{3}}t}}}  \nonumber\\
 & -\int{d{{\lambda }_{1}}d{{\lambda }_{3}}{K^2}({{\lambda }_{1}},{{\lambda }_{3}})}{{e}^{i({{\lambda }_{1}}-{{\lambda }_{3}})t}}\int{d{{\lambda }_{2}}}{{K}}({{\lambda }_{2}},{{\lambda }_{2}}){{e}^{i{{\lambda }_{2}}t}}\int{d{{\lambda }_{4}}{{K}}({{\lambda }_{4}},{{\lambda }_{4}}){{e}^{-i{{\lambda }_{4}}t}}}  \nonumber\\
 & -\int{d{{\lambda }_{2}}d{{\lambda }_{4}}{K^2}({{\lambda }_{2}},{{\lambda }_{4}})}{{e}^{i({{\lambda }_{2}}-{{\lambda }_{4}})t}}\int{d{{\lambda }_{1}}}{{K}}({{\lambda }_{1}},{{\lambda }_{1}}){{e}^{i{{\lambda }_{1}}t}}\int{d{{\lambda }_{3}}{{K}}({{\lambda }_{3}},{{\lambda }_{3}}){{e}^{-i{{\lambda }_{3}}t}}}  \nonumber\\
 & -\int{d{{\lambda }_{2}}d{{\lambda }_{3}}{K^2}({{\lambda }_{2}},{{\lambda }_{3}})}{{e}^{i({{\lambda }_{2}}-{{\lambda }_{3}})t}}\int{d{{\lambda }_{1}}}{{K}}({{\lambda }_{1}},{{\lambda }_{1}}){{e}^{i{{\lambda }_{1}}t}}\int{d{{\lambda }_{4}}{{K}}({{\lambda }_{4}},{{\lambda }_{4}}){{e}^{-i{{\lambda }_{4}}t}}}
\end{align}
This term contributes
\begin{align}
-2{{L}^{3}}\text{Re}(r_{1}^{2}(t)){{r}_{2}}(t){{r}_{3}}(2t)-4{{L}^{3}}|r_{1}(t)|^2{{r}_{2}}(t)
\end{align}
\item 2-2-type: In this case we have
\begin{align}
  & +\int{d{{\lambda }_{1}}d{{\lambda }_{4}}{K^2}({{\lambda }_{1}},{{\lambda }_{4}}){{e}^{i({{\lambda }_{1}}-{{\lambda }_{4}})t}}}\int{d{{\lambda }_{2}}d{{\lambda }_{3}}{K^2}({{\lambda }_{2}},{{\lambda }_{3}}){{e}^{i({{\lambda }_{2}}-{{\lambda }_{3}})t}}} \nonumber\\
 & +\int{d{{\lambda }_{1}}d{{\lambda }_{3}}{K^2}({{\lambda }_{1}},{{\lambda }_{3}}){{e}^{i({{\lambda }_{1}}-{{\lambda }_{3}})t}}}\int{d{{\lambda }_{2}}d{{\lambda }_{4}}{K^2}({{\lambda }_{2}},{{\lambda }_{4}}){{e}^{i({{\lambda }_{2}}-{{\lambda }_{4}})t}}} \nonumber\\
 & +\int{d{{\lambda }_{1}}d{{\lambda }_{2}}{K^2}({{\lambda }_{1}},{{\lambda }_{2}}){{e}^{i({{\lambda }_{1}}+{{\lambda }_{2}})t}}}\int{d{{\lambda }_{3}}d{{\lambda }_{4}}{K^2}({{\lambda }_{3}},{{\lambda }_{4}}){{e}^{-i({{\lambda }_{3}}+{{\lambda }_{4}})t}}}
\end{align}
This term contributes
\begin{align}
2{{L}^{2}}r_{2}^{2}(t)+{{L}^{2}}r_{2}^{2}(t)r_{3}^{2}(2t)
\end{align}
\item 3-1-type: In this case we have
\begin{align}
8{{L}^{2}}\text{Re}(r_{1}(t))r_{2}(t){r_{3}}(t)
\end{align}
Finally as a summary we have
\begin{align}
& L(L-1)(L-2)(L-3)\int{D\lambda}{{e}^{i({{\lambda }_{1}}+{{\lambda }_{2}}-{{\lambda }_{3}}-{{\lambda }_{4}})t}} \nonumber\\
  &={{L}^{4}}|r_{1}(t)|^4 \nonumber\\
 & -2{{L}^{3}}\text{Re}(r_{1}^{2}(t)){{r}_{2}}(t){{r}_{3}}(2t)-4{{L}^{3}}|r_{1}(t)|^2{{r}_{2}}(t) \nonumber\\
 & +2{{L}^{2}}r_{2}^{2}(t)+{{L}^{2}}r_{2}^{2}(t)r_{3}^{2}(2t)+8{{L}^{2}}\text{Re}({{r}_{1}}(t)){{r}_{2}}(t){{r}_{3}}(t) \nonumber\\
 & -6L{r}_{2}(2t)
\end{align}
\end{itemize}
\subsubsection{The second term}
In this part we will evaluate the second term
\begin{align}
2L(L-1)(L-2)\operatorname{Re}\int{D\lambda}{{e}^{i(2{{\lambda }_{1}}-{{\lambda }_{2}}-{{\lambda }_{3}})t}}
\end{align}
Let us firstly consider it without a factor of 2
\begin{align}
L(L-1)(L-2)\operatorname{Re}\int{D\lambda}{{e}^{i(2{{\lambda }_{1}}-{{\lambda }_{2}}-{{\lambda }_{3}})t}}
\end{align}
Then we obtain
\begin{itemize}
\item 3-type: In this case we have
\begin{align}
2\text{Re}\int{d{{\lambda }_{1}}d{{\lambda }_{2}}d{{\lambda }_{3}}{{K}}({{\lambda }_{1}},{{\lambda }_{2}}){{K}}({{\lambda }_{2}},{{\lambda }_{3}}){{K}}({{\lambda }_{3}},{{\lambda }_{1}})}{{e}^{i(2{{\lambda }_{1}}-{{\lambda }_{2}}-{{\lambda }_{3}})t}}
\end{align}
This term contributes
\begin{align}
2Lr_3(t)
\end{align}
\item 1-1-1-type: In this case we have
\begin{align}
\text{Re}\int{d{{\lambda }_{1}}}{{K}}({{\lambda }_{1}},{{\lambda }_{1}}){{e}^{2i{{\lambda }_{1}}t}}\int{d{{\lambda }_{2}}}{{K}}({{\lambda }_{2}},{{\lambda }_{2}}){{e}^{-i{{\lambda }_{2}}t}}\int{d{{\lambda }_{3}}}{{K}}({{\lambda }_{3}},{{\lambda }_{3}}){{e}^{-i{{\lambda }_{3}}t}}
\end{align}
This term contributes
\begin{align}
{{L}^{3}}\text{Re}({{r}_{1}}(2t)r_{1}^{*2}(t))
\end{align}
\item 2-1-type: In this case we have
\begin{align}
& -\int{d{{\lambda }_{1}}}{{K}}({{\lambda }_{1}},{{\lambda }_{1}}){{e}^{i2{{\lambda }_{1}}t}}\int{d{{\lambda }_{2}}d{{\lambda }_{3}}{K^2}({{\lambda }_{2}},{{\lambda }_{3}})}{{e}^{-i({{\lambda }_{2}}+{{\lambda }_{3}})t}} \nonumber\\
& -\int{d{{\lambda }_{2}}}{{K}}({{\lambda }_{2}},{{\lambda }_{2}}){{e}^{-i{{\lambda }_{2}}t}}\int{d{{\lambda }_{1}}d{{\lambda }_{3}}{K^2}({{\lambda }_{1}},{{\lambda }_{3}})}{{e}^{i(2{{\lambda }_{1}}-{{\lambda }_{3}})t}} \nonumber\\
& -\int{d{{\lambda }_{3}}}{{K}}({{\lambda }_{3}},{{\lambda }_{3}}){{e}^{-i{{\lambda }_{3}}t}}\int{d{{\lambda }_{1}}d{{\lambda }_{2}}{K^2}({{\lambda }_{1}},{{\lambda }_{2}})}{{e}^{i(2{{\lambda }_{1}}-{{\lambda }_{2}})t}}
\end{align}
This term contributes
\begin{align}
-{{L}^{2}}\text{Re}({{r}_{1}}(2t)){{r}_{3}}(2t){{r }_{2}}(t)-2{{L}^{2}}\text{Re}({{r}_{1}^*}(t)){{r}_{3}}(t){{r }_{2}}(2t)
\end{align}
\end{itemize}
So finally we make a summary that
\begin{align}
  & 2L(L-1)(L-2)\operatorname{Re}\int{D\lambda }{{e}^{i(2{{\lambda }_{1}}-{{\lambda }_{2}}-{{\lambda }_{3}})t}} \nonumber\\
 & =2{{L}^{3}}\text{Re}({{r}_{1}}(2t)r_{1}^{*2}(t)) \nonumber\\
 & -2{{L}^{2}}\text{Re}({{r}_{1}}(2t)){{r}_{3}}(2t){{r }_{2}}(t)-4{{L}^{2}}\text{Re}({{r}_{1}^*}(t)){{r}_{3}}(t){{r }_{2}}(2t)\nonumber\\
 & +4L{{r}_{2}}(3t)
\end{align}
\subsubsection{The final result}
Here we make a summary. The final result for $\mathcal{R}_4$ is
\begin{align}
&{{\mathcal{R}}_{4}}={{L}^{4}}|{{r}_{1}}(t){{|}^{4}} \nonumber\\
&-2{{L}^{3}}\text{Re}(r_{1}^{2}(t)){{r}_{2}}(t){{r}_{3}}(2t)-4{{L}^{3}}|{{r}_{1}}(t){{|}^{2}}{{r}_{2}}(t)+2{{L}^{3}}\text{Re}({{r}_{1}}(2t)r_{1}^{*2}(t))+4{{L}^{3}}|{{r}_{1}}(t){{|}^{2}} \nonumber\\
&+2{{L}^{2}}r_{2}^{2}(t)+{{L}^{2}}r_{2}^{2}(t)r_{3}^{2}(2t)+8{{L}^{2}}\text{Re}({{r}_{1}}(t)){{r}_{2}}(t){{r}_{3}}(t)-2{{L}^{2}}\text{Re}({{r}_{1}}(2t)){{r}_{3}}(2t){{r}_{2}}(t)\nonumber\\
&-4{{L}^{2}}\text{Re}(r_{1}^{*}(t)){{r}_{3}}(t){{r}_{2}}(2t)+{{L}^{2}}|{{r}_{1}}(2t){{|}^{2}}-4{{L}^{2}}|{{r}_{1}}(t){{|}^{2}}-4{{L}^{2}}{{r}_{2}}(t)+2{{L}^{2}}\nonumber\\
&-7L{{r}_{2}}(2t)+4L{{r}_{2}}(3t)+4L{{r}_{2}}(t)-L
\end{align}
In the large $L$ limit, one can find some terms are suppressed in all time scale. We could also have an approximate formula 
\begin{align}
&{{\mathcal{R}}_{4}}={{L}^{4}}|{{r}_{1}}(t){{|}^{4}} \nonumber\\
&+2{{L}^{2}}r_{2}^{2}(t)-4{{L}^{2}}{{r}_{2}}(t)+2{{L}^{2}}\nonumber\\
&-7L{{r}_{2}}(2t)+4L{{r}_{2}}(3t)+4L{{r}_{2}}(t)-L
\end{align}
which capture main physics of the four point spectral form factor. 
\subsection{Finite temperature result}
Finally, we will take a look at the finite temperature result, where this result will also rely on the refined kernel and the interval splitting technology as mentioned before, so here we only precisely compute the two point case. The definition of the finite temperature two point form factor is
\begin{align}
{\mathcal{R}_2} = \sum\limits_{i,j} {\int {D\lambda } {e^{i({\lambda _i} - {\lambda _j})t}}{e^{ - \beta ({\lambda _i} - {\lambda _j})}}} 
\end{align}
Following from a simple analysis we have
\begin{align}
&{\mathcal{R}_2} = \sum\limits_{i,j} {\int {D\lambda } {e^{i({\lambda _i} - {\lambda _j})t}}{e^{ - \beta ({\lambda _i} + {\lambda _j})}}} \nonumber\\
&= L\int {d\lambda } \rho (\lambda ){e^{ - 2\beta \lambda }} + L(L - 1)\int {d{\lambda _1}d{\lambda _2}} {\rho ^{(2)}}(\lambda ){e^{i({\lambda _1} - {\lambda _2})t}}{e^{ - \beta ({\lambda _1} + {\lambda _2})}} \nonumber\\
&= L{r_1}(2i\beta ) + {L^2}{r_1}(t + i\beta ){r_1}(t - i\beta ) - \int {d{\lambda _1}d{\lambda _2}} {K^2}({\lambda _1},{\lambda _2}){e^{i({\lambda _1} - {\lambda _2})t}}{e^{ - \beta ({\lambda _1} + {\lambda _2})}}
\end{align}
So we have the seperation
\begin{align}
&{\cal R}_2^{{\rm{disc}}}(t,\beta ) = {L^2}{r_1}(t + i\beta ){r_1}(t - i\beta )\nonumber\\
&{\cal R}_2^{{\rm{conn}}}(t,\beta ) = L{r_1}(2i\beta ) - \int {d{\lambda _1}d{\lambda _2}} {K^2}({\lambda _1},{\lambda _2}){e^{i({\lambda _1} - {\lambda _2})t}}{e^{ - \beta ({\lambda _1} + {\lambda _2})}}
\end{align}
Thus for the connected part, we could compute the integral
\begin{align}
{L^2}\int {d{\lambda _1}d{\lambda _2}\frac{{{{\sin }^2}(\pi L({\lambda _1} - {\lambda _2})\rho (({\lambda _1} + {\lambda _2})/2))}}{{{{(\pi L({\lambda _i} - {\lambda _j}))}^2}}}{e^{i({\lambda _1} - {\lambda _2})t}}{e^{ - \beta ({\lambda _1} + {\lambda _2})}}} 
\end{align}
Transform the coordinate again we get
\begin{align}
&{L^2}\int {d{\lambda _1}d{\lambda _2}\frac{{{{\sin }^2}(\pi L({\lambda _1} - {\lambda _2})\rho (({\lambda _1} + {\lambda _2})/2))}}{{{{(\pi L({\lambda _i} - {\lambda _j}))}^2}}}{e^{i({\lambda _1} - {\lambda _2})t}}{e^{ - \beta ({\lambda _1} + {\lambda _2})}}} \nonumber\\
&= {L^2}\int {d{u_1}d{u_2}\frac{{{{\sin }^2}(\pi L{u_1}\rho ({u_2}))}}{{{{(\pi L{u_1})}^2}}}{e^{i{u_1}t}}{e^{ - 2\beta {u_2}}}} 
\end{align}
The small interval contribution will again give
\begin{align}
{L^2}\int_{ - {\Omega _0}/2}^{{\Omega _0}/2} {d{u_1}\frac{{{{\sin }^2}(\pi L{u_1}\rho ({u_2}))}}{{{{(\pi L{u_1})}^2}}}{e^{i{u_1}t}}{e^{ - 2\beta {u_2}}}}  \sim {e^{ - 2\beta {u_2}}}\max \left( {L\rho ({u_2}) - \frac{t}{{2\pi }},0} \right)
\end{align}
We cannot find an analytic formula for general $\beta$ if we wish to compute this integral over $u_2$. However, one can expand it over small $\beta$. We have
\begin{align}
&{L^2}\int {d{u_1}d{u_2}\frac{{{{\sin }^2}(\pi L{u_1}\rho ({u_2}))}}{{{{(\pi L{u_1})}^2}}}{e^{i{u_1}t}}{e^{ - 2\beta {u_2}}}} \nonumber\\
&= \left\{ {\begin{array}{*{20}{c}}
\begin{array}{l}
\frac{2}{\pi }L{\rm{arccsc}}\left( {\frac{{2L}}{{\sqrt {4{L^2} - {t^2}} }}} \right) - \frac{t}{{2\pi L}}\sqrt {4{L^2} - {t^2}} \\
 + \frac{{{\beta ^2}\left( { - 10{L^2}t\sqrt {4{L^2} - {t^2}}  + {t^3}\sqrt {4{L^2} - {t^2}}  + 24{L^4}{\rm{arccsc}}\left( {\frac{{2L}}{{\sqrt {4{L^2} - {t^2}} }}} \right)} \right)}}{{6\pi {L^3}}} + \mathcal{O}({\beta ^4})
\end{array}&{{\text{for  }}t < 2L}\\
0&{{\text{for  }}t > 2L}
\end{array}} \right.
\end{align}
\section{GOE/GSE spectral form factor}\label{GOE}
\subsection{Random matrix theory review}
GOE and GSE describe physical systems with discrete antiunitary symmetries. Here we will briefly review the mathematical construction. We define the joint distribution of eigenvalues for GOE and GSE as
\begin{align}
P(\lambda_1,\ldots,\lambda_L) \sim e^{-\tilde{\beta}\frac{L}{4} \sum_i \lambda_i^2} \prod_{i<j} (\lambda_i-\lambda_j)^{\tilde{\beta}}
\end{align}
where $\tilde{\beta}=1,4$ for GOE or GSE. Here our convention is $L\times L$ for GOE and $2L\times 2L$ for GSE (where the later only has $L$ independent eigenvalues because it has natural degeneracy two by construction). Again, we define the $n$-point correlation function as
\begin{align}
\rho^{(n)}(\lambda_1,\ldots,\lambda_n) = \int d\lambda_{n+1}\ldots d\lambda_L P(\lambda_1,\ldots,\lambda_L)
\end{align}
To go further, we need some quaternion matrix theories.  A quaternion is generated by four units $e_{1,2,3}$,
\begin{align}
q=q^{(0)}+q^{(1)}e_1+q^{(2)}e_2+q^{(3)}e_3
\end{align}
The units are defined to satisfy the following multiplication laws
\begin{align}
&1\times e_j=e_j\times 1=e_j\nonumber\\
&e_1^2=e_2^2=e_3^2=e_1e_2e_3=-1
\end{align}
These units have matrix representations:
\begin{align}
& 1\to \left( \begin{matrix}
   1 & 0  \\
   0 & 1  \\
\end{matrix} \right) ~~~~~~ {{e}_{1}}\to \left( \begin{matrix}
   i & 0  \\
   0 & -i  \\
\end{matrix} \right) \nonumber\\
 & {{e}_{2}}\to \left( \begin{matrix}
   0 & 1  \\
   -1 & 0  \\
\end{matrix} \right) ~~~~~~ {{e}_{3}}\to \left( \begin{matrix}
   0 & i  \\
   i & 0  \\
\end{matrix} \right)
\end{align}
The determinant of a $n \times n$ quaternion matrix $Q=(Q_{jk})$ is defined as
\begin{align}
\det Q=\sum\limits_{\sigma }{{{(-1)}^{n-\ell (\sigma ,Q)}}\prod\nolimits_{t=1}^{\ell (\sigma ,Q)}{\text{cy}{{\text{c}}_{t}}(\sigma ,Q)}}
\end{align}
where $\sigma$ is any possible permutations from $1$ to $n$, for corresponding permutations, we could find all $\ell$ closed cycles for that permutations. For instance, for a cycle $t$ like:
\begin{align}
t: ~~~a\to b \to c \to \cdots \to d \to a
\end{align}
the corresponding contribution in the product is
\begin{align}
{{\text{cy}{{\text{c}}_{t}}(\sigma ,Q)}}=(q_{ab}q_{bc}\cdots q_{cd}q_{da})^{(0)}
\end{align}
where the upper index ${(0)}$ means the scalar part, or equivalently
\begin{align}
(q_{ab}q_{bc}\cdots q_{cd}q_{da})^{(0)}=\frac{1}{2}\text{Tr}(q_{ab}q_{bc}\cdots q_{cd}q_{da})
\end{align}
There are some useful theorems to compute the quaternion determinant. For instance, if we use the matrix representation of quaternion by replacing $e_j$ to $2\times 2$ matrices, we can define a $2n\times 2n$ complex matrix $C(Q)$ for a $n\times n$ quaternion matrix $Q$. Now define $Z=C(e_2 I_n)$, where $I_n$ is the $n\times n$ unit matrix, then if $Q$ is a self-dual matrix, namely each cycle from the product in the definition of determinant are reversible (dual to each other), then one can show that
\begin{align}
\det Q=\text{Pf}(ZC(Q))=\text{det}^{1/2}(C(Q))
\end{align}
With these definitions, we could define the quaternion kernels for GOE and GSE. In GOE and GSE, the sine kernel $K$ is replaced by quaternion, which could be represented as a $2\times 2$ matrix. In fact, define the following function:
\begin{align}
&\hat{s}(r)=\frac{\sin (r)}{r}~~~~~~\epsilon(r)=\frac{1}{2}\text{sign}(r)\nonumber\\
&\mathbf{D}\hat{s}(r)=\partial_r \hat{s}(r)~~~~~~\mathbf{I}\hat{s}(r)=\int_{0}^{r}\hat{s}(t)dt
\end{align}
Thus the quaternion kernel for GOE is
\begin{align}
K({\lambda _i},{\lambda _j}) \equiv \left\{ {\begin{array}{*{20}{l}}
{\frac{L}{\pi }\left( {\begin{array}{*{20}{c}}
{\hat s(L({\lambda _i} - {\lambda _j}))}&{{\bf{D}}\hat s(L({\lambda _i} - {\lambda _j}))}\\
{{\bf{I}}\hat s(L({\lambda _i} - {\lambda _j})) - \epsilon (L({\lambda _i} - {\lambda _j}))}&{\hat s(L({\lambda _i} - {\lambda _j}))}
\end{array}} \right)}&{{\rm{for}}\;\;i\not  = j}\\
{}&{}\\
{\frac{L}{{2\pi }}\sqrt {4 - \lambda _i^2} \left( {\begin{array}{*{20}{c}}
1&0\\
0&1
\end{array}} \right)\;}&{{\rm{for}}\;\;i = j}
\end{array}} \right.
\end{align}
while for GSE it is
\begin{align}
K({\lambda _i},{\lambda _j}) \equiv \left\{ {\begin{array}{*{20}{l}}
{\frac{L}{\pi }\left( {\begin{array}{*{20}{c}}
{\hat s(2L({\lambda _i} - {\lambda _j}))}&{{\bf{D}}\hat s(2L({\lambda _i} - {\lambda _j}))}\\
{{\bf{I}}\hat s(2L({\lambda _i} - {\lambda _j}))}&{\hat s(2L({\lambda _i} - {\lambda _j}))}
\end{array}} \right)}&{{\rm{for}}\;\;i\not  = j}\\
{}&{}\\
{\frac{L}{{2\pi }}\sqrt {4 - \lambda _i^2} \left( {\begin{array}{*{20}{c}}
1&0\\
0&1
\end{array}} \right)\;}&{{\rm{for}}\;\;i = j}
\end{array}} \right.
\end{align}
The structure of GOE and GSE is not called simply determined by the ordinary meaning of a determinant of some two point functions. It is not called the \emph{determinantal point process} in random matrix theory literature, but it is called the \emph{Pfaffian point process}. For our practical motivation, we may define the joint eigenvalue distribution as some linear combination of \emph{cluster function} $T$,
\begin{align}
{{\rho }^{(n)}}({{\lambda }_{1}},\ldots ,{{\lambda }_{n}})=\frac{(L-n)!}{L!}\sum\limits_{m\ge 1}^{\bigcupdot\nolimits_{i=1}^{m}{{{S}_{i}}}=\left\{ 1,\ldots ,n \right\}}{{{(-1)}^{n-m}}{{T}_{{{S}_{1}}}}\ldots {{T}_{{{S}_{m}}}}}
\end{align}
where $T_S=T_l(x_{i_1},x_{i_2},\cdots,x_{i_l})$ and for $S=\left\{i_1,i_2,\cdots,i_l\right\}$, and the sum is over all possible decompositions of $\left\{ 1,\ldots ,n \right\}$ ($\bigcupdot$ means disjoint union). For instance
\begin{align}
{{\rho }^{(2)}}({{\lambda }_{1}},{{\lambda }_{2}})=\frac{1}{L(L-1)}\left( {{T}_{1}}({{\lambda }_{1}}){{T}_{1}}({{\lambda }_{2}})-{{T}_{2}}({{\lambda }_{1}},{{\lambda }_{2}}) \right)
\end{align}
The cluster function could be computed by the quaternion kernel as
\begin{align}
{{T}_{n}}({{\lambda }_{1}},\ldots ,{{\lambda }_{n}})=\frac{1}{2n}\sum\limits_{\sigma }{\text{Tr}\left( K({{\lambda }_{{{\sigma }_{1}}}},{{\lambda }_{{{\sigma }_{2}}}})K({{\lambda }_{{{\sigma }_{2}}}},{{\lambda }_{{{\sigma }_{3}}}})\ldots K({{\lambda }_{{{\sigma }_{n}}}},{{\lambda }_{{{\sigma }_{1}}}}) \right)}
\end{align}
where the sum is taken over all permutations $\sigma$ of $\left\{ 1,\ldots ,n \right\}$. Thus, from these computations, we could be in principle reduce the correlation functions into cluster functions, and then the products of trace of kernels which are essentially computable. There are some simplest examples for those formulas, for instance,
\begin{align}
& {{\rho }^{(1)}}({{\lambda }_{1}})=\frac{1}{L}\times \frac{1}{2}\text{Tr}\left( K({{\lambda }_{1}},{{\lambda }_{1}}) \right) \nonumber\\
& {{\rho }^{(2)}}({{\lambda }_{1}},{{\lambda }_{2}})=\frac{1}{L(L-1)}\times \left( \frac{1}{4}\text{Tr}\left( K({{\lambda }_{1}},{{\lambda }_{1}}) \right)\text{Tr}\left( K({{\lambda }_{2}},{{\lambda }_{2}}) \right)-\frac{1}{2}\text{Tr}\left( K({{\lambda }_{1}},{{\lambda }_{2}})K({{\lambda }_{2}},{{\lambda }_{1}}) \right) \right)
\end{align}
With these knowledge, we could start to compute spectral form factors. 
\subsection{Form factor computation with box approximation}
\subsubsection{Theorems}
It is straightforward to generalize our previous formula of convolution kernels to the quaternion matrix theory. We have  
\begin{thm}[Convolution formula for GOE]
We have 
\begin{align}
&\int{\prod\limits_{i=1}^{m}{d{{\lambda }_{i}}}}{K}({{\lambda }_{1}},{{\lambda }_{2}}){K}({{\lambda }_{2}},{{\lambda }_{3}})\ldots {K}({{\lambda }_{m-1}},{{\lambda }_{m}}){K}({{\lambda }_{m}},{{\lambda }_{1}}) \,e^{i \sum_{i=1}^{m} k_i \lambda_i} \nonumber\\
& = Lr_3(\sum_{i=1}^m k_i)\int{dk}\,G(k)G\Big(k+\frac{{{k}_{1}}}{2L}\Big) G\Big(k+\frac{{{k}_{2}}}{2L}\Big) \ldots G\Big(k+\frac{{{k}_{m-1}}}{2L}\Big)
\end{align}
where
\begin{align}
G(k)=\left( \begin{matrix}
   g(k) & (-2\pi i k)g(k)  \\
   \frac{g(k)-1}{-2\pi i k} & g(k)  \\
\end{matrix} \right)
\end{align}
\end{thm}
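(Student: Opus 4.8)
The plan is to run the proof of the finite-$L$ GUE convolution formula of Section~\ref{GUE} essentially verbatim, but carrying the $2\times 2$ matrix structure of the quaternion kernel through every step. The chain $K(\lambda_1,\lambda_2)K(\lambda_2,\lambda_3)\cdots K(\lambda_m,\lambda_1)$ is now an \emph{ordered} matrix product, so each scalar manipulation must be replaced by its matrix analogue with the order of factors preserved. First I would apply the same change of variables $u_i=\lambda_i-\lambda_{i+1}$ for $i=1,\dots,m-1$ together with $u_m=\lambda_m$, whose Jacobian is $1$; since each factor $K(\lambda_i,\lambda_{i+1})$ depends only on the difference $\lambda_i-\lambda_{i+1}=u_i$ and the closing factor $K(\lambda_m,\lambda_1)$ on $-(u_1+\cdots+u_{m-1})$, the integral over $u_m=\lambda_m$ decouples and the remaining $u_i$-integrals form a matrix convolution. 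Inserting the identity device used in the GUE proof (or, equivalently, invoking the convolution theorem for matrix-valued functions, under which the Fourier transform of an ordered convolution is the ordered product of the Fourier transforms) turns this into an integral of a product of $G$'s with the factors kept in sequence, so non-commutativity is respected automatically by the chain structure.

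The technical core is a lemma computing the Fourier transform, in the difference variable, of each entry of the off-diagonal kernel and checking that they assemble into $G(k)$. With the convention $\int e^{2\pi ikr}f(r)\,dr=\hat f(k)$, the two diagonal entries $\hat s$ both transform to $g(k)$ by the scalar identity $\int e^{2\pi ikr}s(r)\,dr=g(k)$ established in Section~\ref{GUE} (after the same rescaling of frequencies that already produced the shifts $k_j\mapsto k_j/2L$ there, which I would import rather than rederive). The upper-right entry $\mathbf{D}\hat s=\partial_r\hat s$ transforms by the derivative rule $\widehat{f'}(k)=-2\pi ik\,\hat f(k)$ to $(-2\pi ik)g(k)$, matching $G_{12}$. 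This fixes three of the four entries of $G$.

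The crux, and the step I expect to be the main obstacle, is the lower-left entry $\mathbf{I}\hat s(r)-\epsilon(r)$. Here $\mathbf{I}\hat s(r)=\int_0^r\hat s(t)\,dt$ approaches $\pm\pi/2$ as $r\to\pm\infty$ and is not integrable, so its Fourier transform does not exist naively; the subtraction of $\epsilon(r)=\tfrac{1}{2}\operatorname{sign}(r)$ is exactly what cancels this constant asymptote and makes the combination decay. I would sidestep the non-integrability by differentiating: since $(\mathbf{I}\hat s-\epsilon)'(r)=\hat s(r)-\delta(r)$, the derivative rule gives $-2\pi ik\,\widehat{(\mathbf{I}\hat s-\epsilon)}(k)=g(k)-1$, so $\widehat{(\mathbf{I}\hat s-\epsilon)}(k)=\frac{g(k)-1}{-2\pi ik}$, which is precisely $G_{21}$. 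The $-1$ in the numerator is the Fourier transform of the $\delta$ produced by $\epsilon'=\delta$, and it is exactly what keeps the ratio finite at $k=0$ where $g(0)=1$. The delicate points to be justified carefully are the vanishing of the boundary term in the integration by parts (supplied by the $\epsilon$-regularized decay) and the principal-value reading of the $1/k$ factor.

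Finally I would assemble the pieces. The matrix convolution delivers the ordered product $\int dk\,G(k)G(k+\tfrac{k_1}{2L})\cdots G(k+\tfrac{k_{m-1}}{2L})$, while the decoupled center-of-mass integral, evaluated under the same box approximation and normalization fixed throughout, gives $\frac{L}{\pi}\int_{-\pi/2}^{\pi/2}e^{i(\sum_i k_i)\lambda}\,d\lambda=Lr_3(\sum_i k_i)$. Being a scalar, this prefactor simply multiplies the matrix product, yielding the stated identity.
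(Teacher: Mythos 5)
Your proposal is correct and follows essentially the paper's own route: the paper establishes this theorem simply by declaring it a straightforward generalization of the finite-$L$ GUE convolution formula (same change of variables, delta-function device, and box-approximation normalization producing the $Lr_3(\sum_i k_i)$ prefactor), citing Mehta's eq.~(6.4.21) for the infinite-$L$ entrywise Fourier transforms, and your derivation of $G_{12}=(-2\pi i k)g(k)$ by the derivative rule and of $G_{21}=\frac{g(k)-1}{-2\pi i k}$ from the distributional identity $(\mathbf{I}s-\epsilon)'=s-\delta$ (with the oddness of $\mathbf{I}s-\epsilon$ ruling out a $\delta(k)$ ambiguity) just fills in the content of that citation. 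One minor convention slip: the asymptote cancellation in $\mathbf{I}s-\epsilon$ is exact in the convention $s(r)=\sin(\pi r)/(\pi r)$, where $\mathbf{I}s(\pm\infty)=\pm\frac{1}{2}$ matches $\epsilon(r)=\frac{1}{2}\operatorname{sign}(r)$ and $g$ is supported on $|k|<\frac{1}{2}$, not for $\hat s(r)=\sin(r)/r$ where the limits are $\pm\frac{\pi}{2}$ as you wrote --- the rescaling you import from the GUE section is precisely what puts the kernel in the former convention, so your argument goes through unchanged.
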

and 
\begin{thm}[Convolution formula for GSE]
We have 
\begin{align}
&\int{\prod\limits_{i=1}^{m}{d{{\lambda }_{i}}}}{K}({{\lambda }_{1}},{{\lambda }_{2}}){K}({{\lambda }_{2}},{{\lambda }_{3}})\ldots {K}({{\lambda }_{m-1}},{{\lambda }_{m}}){K}({{\lambda }_{m}},{{\lambda }_{1}}) \,e^{i \sum_{i=1}^{m} k_i \lambda_i} \nonumber\\
& = Lr_3(\sum_{i=1}^m k_i)\int{dk}\,H(k)H\Big(k+\frac{{{k}_{1}}}{2L}\Big) H\Big(k+\frac{{{k}_{2}}}{2L}\Big) \ldots H\Big(k+\frac{{{k}_{m-1}}}{2L}\Big)
\end{align}
where 
\begin{align}
H(k)=\frac{1}{2}g(\frac{k}{2})\left( \begin{matrix}
   1 & -\pi ik  \\
   \frac{1}{-\pi ik} & 1  \\
\end{matrix} \right)
\end{align}
\end{thm}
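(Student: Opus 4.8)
The plan is to run the matrix-valued analogue of the derivation that produced the GUE finite-$L$ convolution formula, since the GSE kernel differs from the GUE one only by promoting the scalar sine kernel to a $2\times2$ quaternion block and rescaling its argument from $L$ to $2L$. The three structural moves of the infinite-$L$ proof --- the change of variables to difference coordinates, the isolation of the center-of-mass integral, and the factorization of the remaining integrals into single-leg Fourier transforms --- carry over essentially verbatim; the only genuinely new work is the single-leg matrix Fourier transform, which must be shown to equal $H(k)$.

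First I would use that in the off-diagonal large-$L$ regime the kernel is translation invariant, $K(\lambda_a,\lambda_b)=\mathcal K(\lambda_a-\lambda_b)$ with $\mathcal K(r)=\frac{L}{\pi}\left(\begin{matrix}\hat s(2Lr)&\mathbf D\hat s(2Lr)\\ \mathbf I\hat s(2Lr)&\hat s(2Lr)\end{matrix}\right)$, and apply the same substitution $u_i=\lambda_i-\lambda_{i+1}$ for $i=1,\dots,m-1$ together with $u_m=\lambda_m$ used in the proof of the infinite-$L$ formula. The Jacobian is $1$, the cyclic matrix product collapses to $\mathcal K(u_1)\mathcal K(u_2)\cdots\mathcal K(u_{m-1})\,\mathcal K\!\left(-\sum_{j=1}^{m-1}u_j\right)$, and the phase becomes $\exp\!\left(i\sum_{\alpha=1}^m(\sum_{l=1}^\alpha k_l)u_\alpha\right)$. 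Because $u_m=\lambda_m$ enters only through the phase, integrating it out and imposing the box approximation $\int_{-\pi/2}^{\pi/2}d\lambda_m\,e^{i(\sum_l k_l)\lambda_m}$ reproduces exactly the center-of-mass prefactor $L\,r_3(\sum_i k_i)$, identical to the GUE and GOE cases.

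Next I would dispose of the closing factor $\mathcal K(-\sum_j u_j)$ by inserting its matrix Fourier representation with an auxiliary variable $k$; the resulting exponential $e^{\mp ik\sum_j u_j}$ shifts the phase of every remaining leg by a common amount. The crucial observation is that after the change of variables each difference $u_i$ appears in one and only one kernel factor, so the non-commutativity of the matrix product does not obstruct the factorization: the $u_1,\dots,u_{m-1}$ integrals decouple into independent single-leg transforms, each yielding a matrix $H$ at a shifted argument, and these matrices multiply in the inherited cyclic order. Since this product is ultimately inserted inside the trace defining the cluster functions $T_n$, its value is only fixed up to cyclic permutation, which is why the closing factor $H(k)$ may be written first, giving $\int dk\,H(k)H(k+\frac{k_1}{2L})\cdots H(k+\frac{k_{m-1}}{2L})$.

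The heart of the argument --- and the step I expect to be most delicate --- is verifying that the single-leg matrix transform of $\mathcal K$ is precisely $H(k)=\frac{1}{2}g(k/2)\left(\begin{matrix}1&-\pi ik\\ \frac{1}{-\pi ik}&1\end{matrix}\right)$. The diagonal entries come from $\int e^{ikr}\hat s(2Lr)\,dr$, where the $2L$ rescaling (versus $L$ in GOE) is exactly what turns the box transform into $g(k/2)$ and supplies the overall factor $\frac{1}{2}$. The off-diagonal entries test the operational identities $\mathbf D\hat s=\partial_r\hat s$ and $\mathbf I\hat s=\int_0^r\hat s$: differentiation and integration respectively multiply and divide the diagonal transform by a factor proportional to $k$, producing the off-diagonal entries $-\pi ik$ and $\frac{1}{-\pi ik}$, the factors of $\pi$ tracking the mismatch between $\hat s(r)=\sin r/r$ and the $\pi$-normalized sine kernel $s(r)$ whose transform defines $g$. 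The new feature relative to GOE is that the GSE lower-left block carries no $\epsilon$ subtraction, so I must check that $\mathbf I\hat s(2Lr)$ already yields the clean $\frac{1}{-\pi ik}$ with no additive constant; this is precisely what removes the ``$-1$'' present in the GOE transform $G(k)$ and collapses $H(k)$ to its degenerate, determinant-zero form. Once these four blocks are confirmed, the pieces assemble into the claimed identity.
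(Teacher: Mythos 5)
Your proposal is correct and takes essentially the same route the paper intends: the paper gives no separate proof of the GSE case, simply asserting that the GUE change-of-variables argument generalizes to the quaternion kernel (with the infinite-$L$ version credited to eq.\,(7.2.10) of Mehta), and your derivation --- difference coordinates with unit Jacobian, the box approximation producing the $L\,r_3(\sum_i k_i)$ prefactor, and the decoupling of single-leg matrix Fourier transforms into $H$-factors --- is precisely that generalization. Your treatment of the genuinely GSE-specific points is also sound: the $2L$ rescaling turning the diagonal transform into $\tfrac{1}{2}g(k/2)$, the absence of the $\epsilon$-subtraction eliminating the GOE-style ``$-1$'' and rendering $H(k)$ degenerate, and the observation that the off-diagonal normalization (and the cyclic ordering of the $H$-factors) is only fixed up to conventions that drop out of the traced cluster functions.
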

The original infinite $L$ versions of these formulas are hidden in eq.(6.4.21) and eq.(7.2.10) in \cite{book2}. 
\subsubsection{Two point form factor}
Based on our GUE knowledge, we will briefly describe how to compute form factors. 
\\
\\
We start by computing $\mathcal{R}_2$ at infinite temperature for GOE.
\begin{align}
{{\mathcal{R}}_{2}}(t)=L+\int{d}{{\lambda }_{1}}d{{\lambda }_{2}}\left( \frac{1}{4}\text{Tr}\left( K({{\lambda }_{1}},{{\lambda }_{1}}) \right)\text{Tr}\left( K({{\lambda }_{2}},{{\lambda }_{2}}) \right)-\frac{1}{2}\text{Tr}\left( K({{\lambda }_{1}},{{\lambda }_{2}})K({{\lambda }_{2}},{{\lambda }_{1}}) \right) \right){{e}^{i({{\lambda }_{1}}-{{\lambda }_{2}})t}}
\end{align}
Evaluating the first term in the integral, we find
\begin{align}
\int{d}{{\lambda }_{1}}d{{\lambda }_{2}}\left( \frac{1}{4}\text{Tr}\left( K({{\lambda }_{1}},{{\lambda }_{1}}) \right)\text{Tr}\left( K({{\lambda }_{2}},{{\lambda }_{2}}) \right) \right){{e}^{i({{\lambda }_{1}}-{{\lambda }_{2}})t}}={{L}^{2}} {r}_{1}^{2}(t)
\end{align}
The second term can be evaluated as
\begin{equation}
\int{d}{{\lambda }_{1}}d{{\lambda }_{2}}\left( \frac{1}{2}\text{Tr}\left( K({{\lambda }_{1}},{{\lambda }_{2}})K({{\lambda }_{2}},{{\lambda }_{1}}) \right) \right){{e}^{i({{\lambda }_{1}}-{{\lambda }_{2}})t}} = L r_2(t)
\end{equation}
where
\begin{align}
r_2(t)=\left\{ \begin{matrix}
   1-\frac{t}{L}+\frac{t}{2L}\log \left( 1+\frac{t}{L} \right) & t<2L  \\
   -1+\frac{t}{2L}\log \left( \frac{t+L}{t-L} \right) & t>2L  \\
\end{matrix} \right.
\end{align}
The final result is
\begin{align}
{{\mathcal {R}}_{2}}(t)=L+{{L}^{2}} {r}_{1}^{2}(t)-Lr_2(t)
\end{align}
In GSE, the only difference between GOE and GSE is
\begin{align}
\int{d}{{\lambda }_{1}}d{{\lambda }_{2}}\left( \frac{1}{2}\text{Tr}\left( K({{\lambda }_{1}},{{\lambda }_{2}})K({{\lambda }_{2}},{{\lambda }_{1}}) \right) \right){{e}^{i({{\lambda }_{1}}-{{\lambda }_{2}})t}} = L\left\{ \begin{matrix}
   1-\frac{1}{2}\cdot \frac{t}{2L}+\frac{1}{4}\cdot \frac{t}{2L}\cdot \log \left| 1-\frac{t}{2L} \right| & t<4L  \\
   0 & t>4L  \\
\end{matrix} \right.
\end{align}
This integration is, in fact, divergent between $0<t\le 2L$. It is because of a pole $1/k$ in the expression of $H(k)$. However, that is an artifact of the Fourier transformation of the integral of the sine kernel $\text{sinc}(x)$. Besides the methods of explicitly computing the Fourier transform, we could also understand the time before $2L$ as a continuation. As a result, there is a pole at $t=2L$.
\\
\\
So as a conclusion, in GSE we have to replace the result of $r_2$ by
\begin{align}
{{r }_{2}}=\left\{ \begin{matrix}
   1-\frac{t}{4L}+\frac{t}{8L}\log \left| 1-\frac{t}{2L} \right| & t<4L  \\
   0 & t>4L  \\
\end{matrix} \right.
\end{align}
\subsubsection{Four point form factor}
In this part we need to compute $\mathcal{R}_4$ in GOE, which is
\begin{align}
{\mathcal{R}_{4}}=\sum\limits_{a,b,c,d=1}^{L}{\int{D\lambda}{{e}^{i({{\lambda }_{a}}+{{\lambda }_{b}}-{{\lambda }_{c}}-{{\lambda }_{d}})t}}}
\end{align}
Take a look at the classifications of combinations in ${\mathcal{R}_{4}}$, which is
\begin{itemize}
\item $a=b=c=d=e=f$: Contribute $L$.
\item $a=b$: Contribute $L(L-1)(L-2)\int{D\lambda}{{e}^{i(2{{\lambda }_{1}}-{{\lambda }_{2}}-{{\lambda }_{3}})t}}$.
\item $c=d$: Contribute $L(L-1)(L-2)\int{D\lambda}{{e}^{i({{\lambda }_{1}}+{{\lambda }_{2}}-2{{\lambda }_{3}})t}}$.
\item $a=c$ or $a=d$ or $b=c$ or $b=d$: Contribute $4L(L-1)(L-2)\int{D\lambda}{{e}^{i({{\lambda }_{1}}-{{\lambda }_{2}})t}}$.
\item $b=c=d$ or $a=c=d$ or $a=b=d$ or $a=b=c$: Contribute $4L(L-1)\int{D\lambda}{{e}^{i({{\lambda }_{1}}-{{\lambda }_{2}})t}}$.
\item $a=b$ and $c=d$:  Contribute $L(L-1)\int{D\lambda}{{e}^{i(2{{\lambda }_{1}}-2{{\lambda }_{2}})t}}$.
\item $a=c$ and $b=d$, or $a=d$ and $b=c$: Contribute $2L(L-1)$.
\item All inequal indexes: $L(L-1)(L-2)(L-3)\int{D\lambda}{{e}^{i({{\lambda }_{1}}+{{\lambda }_{2}}-{{\lambda }_{3}}-{{\lambda }_{4}})t}}$
\end{itemize}
Add the total prefactors will give $L^4$. Add them together we get
\begin{align}
  & {\mathcal{R}_{4}}=L(L-1)(L-2)(L-3)\int{D\lambda}{{e}^{i({{\lambda }_{1}}+{{\lambda }_{2}}-{{\lambda }_{3}}-{{\lambda }_{4}})t}} \nonumber\\
 & +2L(L-1)(L-2)\operatorname{Re}\int{D\lambda}{{e}^{i(2{{\lambda }_{1}}-{{\lambda }_{2}}-{{\lambda }_{3}})t}}  \nonumber\\
 & +L(L-1)\int{D\lambda}{{e}^{i(2{{\lambda }_{1}}-2{{\lambda }_{2}})t}}  \nonumber\\
 & +4L{{(L-1)}^{2}}\int{D\lambda}{{e}^{i({{\lambda }_{1}}-{{\lambda }_{2}})t}}  \nonumber\\
 & +2{{L}^{2}}-L  \nonumber\\
 & =L(L-1)(L-2)(L-3)\int{D\lambda}{{e}^{i({{\lambda }_{1}}+{{\lambda }_{2}}-{{\lambda }_{3}}-{{\lambda }_{4}})t}}  \nonumber\\
 & +2L(L-1)(L-2)\operatorname{Re}\int{D\lambda}{{e}^{i(2{{\lambda }_{1}}-{{\lambda }_{2}}-{{\lambda }_{3}})t}}  \nonumber\\
 & +{{L}^{2}}{r}_{1}^{2}(2t)-Lr_2(2t) \nonumber\\
 & +4(L-1)({{L}^{2}}r_{1}^{2}(t)-L{r_{2}}(t))  \nonumber\\
 & +2{{L}^{2}}-L
\end{align}
We already obtained what the last three terms are. Now we only need to consider the first two terms.
\paragraph{The first term}
The first term is an actual four point function.
\begin{align}
L(L-1)(L-2)(L-3)\int{D\lambda}{{e}^{i({{\lambda }_{1}}+{{\lambda }_{2}}-{{\lambda }_{3}}-{{\lambda }_{4}})t}}
\end{align}
In order to compute it we will use the following decomposition from the correlation function to cluster function
\begin{align}
  & L(L-1)(L-2)(L-3){{\rho }^{(4)}}({{\lambda }_{1}},{{\lambda }_{2}},{{\lambda }_{3}},{{\lambda }_{4}}) \nonumber\\
 & =-{{T}_{4}}({{\lambda }_{1}},{{\lambda }_{2}},{{\lambda }_{3}},{{\lambda }_{4}})  \nonumber\\
 & +{{T}_{3}}({{\lambda }_{2}},{{\lambda }_{3}},{{\lambda }_{4}}){{T}_{1}}({{\lambda }_{1}})+{{T}_{3}}({{\lambda }_{1}},{{\lambda }_{3}},{{\lambda }_{4}}){{T}_{1}}({{\lambda }_{2}})+{{T}_{3}}({{\lambda }_{1}},{{\lambda }_{2}},{{\lambda }_{4}}){{T}_{1}}({{\lambda }_{3}})+{{T}_{3}}({{\lambda }_{1}},{{\lambda }_{2}},{{\lambda }_{3}}){{T}_{1}}({{\lambda }_{4}})  \nonumber\\
 & +{{T}_{2}}({{\lambda }_{1}},{{\lambda }_{2}}){{T}_{2}}({{\lambda }_{3}},{{\lambda }_{4}})+{{T}_{2}}({{\lambda }_{1}},{{\lambda }_{3}}){{T}_{2}}({{\lambda }_{2}},{{\lambda }_{4}})+{{T}_{2}}({{\lambda }_{1}},{{\lambda }_{4}}){{T}_{2}}({{\lambda }_{2}},{{\lambda }_{3}})  \nonumber\\
 & -{{T}_{1}}({{\lambda }_{1}}){{T}_{1}}({{\lambda }_{2}}){{T}_{2}}({{\lambda }_{3}},{{\lambda }_{4}})-{{T}_{1}}({{\lambda }_{1}}){{T}_{1}}({{\lambda }_{3}}){{T}_{2}}({{\lambda }_{2}},{{\lambda }_{4}})-{{T}_{1}}({{\lambda }_{1}}){{T}_{1}}({{\lambda }_{4}}){{T}_{2}}({{\lambda }_{2}},{{\lambda }_{3}})  \nonumber\\
 & -{{T}_{1}}({{\lambda }_{2}}){{T}_{1}}({{\lambda }_{3}}){{T}_{2}}({{\lambda }_{1}},{{\lambda }_{4}})-{{T}_{1}}({{\lambda }_{2}}){{T}_{1}}({{\lambda }_{4}}){{T}_{2}}({{\lambda }_{1}},{{\lambda }_{3}})-{{T}_{1}}({{\lambda }_{3}}){{T}_{1}}({{\lambda }_{4}}){{T}_{2}}({{\lambda }_{1}},{{\lambda }_{2}})  \nonumber\\
 & +{{T}_{1}}({{\lambda }_{1}}){{T}_{1}}({{\lambda }_{2}}){{T}_{1}}({{\lambda }_{3}}){{T}_{1}}({{\lambda }_{4}})
\end{align}
From the previous discussions, we have
\begin{align}
  & {{T}_{1}}({{\lambda }_{a}})=\frac{1}{2}\text{Tr}\left( K({{\lambda }_{a}},{{\lambda }_{a}}) \right)  \nonumber\\
 & {{T}_{2}}({{\lambda }_{a}},{{\lambda }_{b}})=\frac{1}{2}\text{Tr}\left( K({{\lambda }_{a}},{{\lambda }_{b}})K({{\lambda }_{b}},{{\lambda }_{a}}) \right)  \nonumber\\
 & {{T}_{3}}({{\lambda }_{a}},{{\lambda }_{b}},{{\lambda }_{c}})=\frac{1}{2}\text{Tr}\left( K({{\lambda }_{a}},{{\lambda }_{b}})K({{\lambda }_{b}},{{\lambda }_{c}})K({{\lambda }_{c}},{{\lambda }_{a}}) \right)+\frac{1}{2}\text{Tr}\left( K({{\lambda }_{a}},{{\lambda }_{c}})K({{\lambda }_{c}},{{\lambda }_{b}})K({{\lambda }_{b}},{{\lambda }_{a}}) \right)  \nonumber\\
 & {{T}_{4}}({{\lambda }_{a}},{{\lambda }_{b}},{{\lambda }_{c}},{{\lambda }_{d}})=\frac{1}{2}\text{Tr}\left( K({{\lambda }_{a}},{{\lambda }_{b}})K({{\lambda }_{b}},{{\lambda }_{c}})K({{\lambda }_{c}},{{\lambda }_{d}})K({{\lambda }_{d}},{{\lambda }_{a}}) \right)  \nonumber\\
 & +\frac{1}{2}\text{Tr}\left( K({{\lambda }_{a}},{{\lambda }_{b}})K({{\lambda }_{b}},{{\lambda }_{d}})K({{\lambda }_{d}},{{\lambda }_{c}})K({{\lambda }_{c}},{{\lambda }_{a}}) \right)  \nonumber\\
 & +\frac{1}{2}\text{Tr}\left( K({{\lambda }_{a}},{{\lambda }_{c}})K({{\lambda }_{c}},{{\lambda }_{b}})K({{\lambda }_{b}},{{\lambda }_{d}})K({{\lambda }_{d}},{{\lambda }_{a}}) \right)  \nonumber\\
 & +\frac{1}{2}\text{Tr}\left( K({{\lambda }_{a}},{{\lambda }_{c}})K({{\lambda }_{c}},{{\lambda }_{d}})K({{\lambda }_{d}},{{\lambda }_{b}})K({{\lambda }_{b}},{{\lambda }_{a}}) \right)  \nonumber\\
 & +\frac{1}{2}\text{Tr}\left( K({{\lambda }_{a}},{{\lambda }_{d}})K({{\lambda }_{d}},{{\lambda }_{b}})K({{\lambda }_{b}},{{\lambda }_{c}})K({{\lambda }_{c}},{{\lambda }_{a}}) \right)  \nonumber\\
 & +\frac{1}{2}\text{Tr}\left( K({{\lambda }_{a}},{{\lambda }_{d}})K({{\lambda }_{d}},{{\lambda }_{c}})K({{\lambda }_{c}},{{\lambda }_{b}})K({{\lambda }_{b}},{{\lambda }_{a}}) \right)
\end{align}
where we have already used the property of cyclic invariance for the trace operator. We can separately discuss these terms as the following,
\begin{itemize}
\item 4-type: In this case we only have the $T_4$, In this case, all six terms in the expansion give the same answer, which is
\begin{align}
-\int{d}{{\lambda }_{1}}d{{\lambda }_{2}}d{{\lambda }_{3}}d{{\lambda }_{4}}{{T}_{4}}({{\lambda }_{1}},{{\lambda }_{2}},{{\lambda }_{3}},{{\lambda }_{4}}){{e}^{i({{\lambda }_{1}}+{{\lambda }_{2}}-{{\lambda }_{3}}-{{\lambda }_{4}})t}}=-6L{{r }_{4}}(t)
\end{align}
where
\begin{align}
{{r }_{4}}(t)=\left\{ \begin{matrix}
   1-\frac{7t}{4L}+\frac{5t}{4L}\log \left( 1+\frac{t}{L} \right) & 0<t<L  \\
   -\frac{3}{2}+\frac{3t}{4L}+\frac{5t}{4L}\log \left( \frac{1}{2}\left( 1+\frac{3L}{2t-L} \right) \right) & L<t<2L  \\
   0 & t>2L  \\
\end{matrix} \right.
\end{align}
\item 1-1-1-1-type: In this case we only have $T_1$. Thus we have
\begin{align}
\int{d}{{\lambda }_{1}}d{{\lambda }_{2}}d{{\lambda }_{3}}d{{\lambda }_{4}}{{T}_{1}}({{\lambda }_{1}}){{T}_{1}}({{\lambda }_{2}}){{T}_{{1}}}({{\lambda }_{3}}){{T}_{1}}({{\lambda }_{4}}){{e}^{i({{\lambda }_{1}}+{{\lambda }_{2}}-{{\lambda }_{3}}-{{\lambda }_{4}})t}}={{L}^{4}}r_{1}^{4}(t)
\end{align}
\item 1-1-2-type: In this case we have both $T_1$ and $T_2$. Thus we have
\begin{align}
  & -\int{d}{{\lambda }_{1}}d{{\lambda }_{2}}d{{\lambda }_{3}}d{{\lambda }_{4}}{{T}_{1}}({{\lambda }_{1}}){{T}_{1}}({{\lambda }_{2}}){{T}_{2}}({{\lambda }_{3}},{{\lambda }_{4}}){{e}^{i({{\lambda }_{1}}+{{\lambda }_{2}}-{{\lambda }_{3}}-{{\lambda }_{4}})t}} \nonumber\\
 & -\int{d}{{\lambda }_{1}}d{{\lambda }_{2}}d{{\lambda }_{3}}d{{\lambda }_{4}}{{T}_{1}}({{\lambda }_{1}}){{T}_{1}}({{\lambda }_{3}}){{T}_{2}}({{\lambda }_{2}},{{\lambda }_{4}}){{e}^{i({{\lambda }_{1}}+{{\lambda }_{2}}-{{\lambda }_{3}}-{{\lambda }_{4}})t}}  \nonumber\\
 & -\int{d}{{\lambda }_{1}}d{{\lambda }_{2}}d{{\lambda }_{3}}d{{\lambda }_{4}}{{T}_{1}}({{\lambda }_{1}}){{T}_{1}}({{\lambda }_{4}}){{T}_{2}}({{\lambda }_{2}},{{\lambda }_{3}}){{e}^{i({{\lambda }_{1}}+{{\lambda }_{2}}-{{\lambda }_{3}}-{{\lambda }_{4}})t}}  \nonumber\\
 & -\int{d}{{\lambda }_{1}}d{{\lambda }_{2}}d{{\lambda }_{3}}d{{\lambda }_{4}}{{T}_{1}}({{\lambda }_{2}}){{T}_{1}}({{\lambda }_{3}}){{T}_{2}}({{\lambda }_{1}},{{\lambda }_{4}}){{e}^{i({{\lambda }_{1}}+{{\lambda }_{2}}-{{\lambda }_{3}}-{{\lambda }_{4}})t}}  \nonumber\\
 & -\int{d}{{\lambda }_{1}}d{{\lambda }_{2}}d{{\lambda }_{3}}d{{\lambda }_{4}}{{T}_{1}}({{\lambda }_{2}}){{T}_{1}}({{\lambda }_{4}}){{T}_{2}}({{\lambda }_{1}},{{\lambda }_{3}}){{e}^{i({{\lambda }_{1}}+{{\lambda }_{2}}-{{\lambda }_{3}}-{{\lambda }_{4}})t}}  \nonumber\\
 & -\int{d}{{\lambda }_{1}}d{{\lambda }_{2}}d{{\lambda }_{3}}d{{\lambda }_{4}}{{T}_{1}}({{\lambda }_{3}}){{T}_{1}}({{\lambda }_{4}}){{T}_{2}}({{\lambda }_{1}},{{\lambda }_{2}}){{e}^{i({{\lambda }_{1}}+{{\lambda }_{2}}-{{\lambda }_{3}}-{{\lambda }_{4}})t}} \nonumber\\
 & =-2{{L}^{3}}r_{1}^{2}(t){{r}_{3}}(2t){{r }_{2}}(t)-4{{L}^{3}}r_{1}^{2}(t){{r }_{2}}(t)
\end{align}
\item 2-2-type: In this case we only have $T_2$. Thus we have
\begin{align}
  & +\int{d}{{\lambda }_{1}}d{{\lambda }_{2}}d{{\lambda }_{3}}d{{\lambda }_{4}}{{T}_{2}}({{\lambda }_{1}},{{\lambda }_{2}}){{T}_{2}}({{\lambda }_{3}},{{\lambda }_{4}}){{e}^{i({{\lambda }_{1}}+{{\lambda }_{2}}-{{\lambda }_{3}}-{{\lambda }_{4}})t}} \nonumber\\
 & +\int{d}{{\lambda }_{1}}d{{\lambda }_{2}}d{{\lambda }_{3}}d{{\lambda }_{4}}{{T}_{2}}({{\lambda }_{1}},{{\lambda }_{3}}){{T}_{2}}({{\lambda }_{2}},{{\lambda }_{4}}){{e}^{i({{\lambda }_{1}}+{{\lambda }_{2}}-{{\lambda }_{3}}-{{\lambda }_{4}})t}} \nonumber\\
 & +\int{d}{{\lambda }_{1}}d{{\lambda }_{2}}d{{\lambda }_{3}}d{{\lambda }_{4}}{{T}_{2}}({{\lambda }_{1}},{{\lambda }_{4}}){{T}_{2}}({{\lambda }_{2}},{{\lambda }_{3}}){{e}^{i({{\lambda }_{1}}+{{\lambda }_{2}}-{{\lambda }_{3}}-{{\lambda }_{4}})t}} \nonumber\\
 & =2{{L}^{2}}r _{2}^{2}(t)+{{L}^{2}}r _{2}^{2}(t)r_{3}^{2}(2t)
\end{align}
\item 3-1-type: In this case we both have $T_3$ and $T_1$. Thus we have
\begin{align}
  & +\int{d}{{\lambda }_{1}}d{{\lambda }_{2}}d{{\lambda }_{3}}d{{\lambda }_{4}}{{T}_{3}}({{\lambda }_{2}},{{\lambda }_{3}},{{\lambda }_{4}}){{T}_{1}}({{\lambda }_{1}}){{e}^{i({{\lambda }_{1}}+{{\lambda }_{2}}-{{\lambda }_{3}}-{{\lambda }_{4}})t}} \nonumber\\
 & +\int{d}{{\lambda }_{1}}d{{\lambda }_{2}}d{{\lambda }_{3}}d{{\lambda }_{4}}{{T}_{3}}({{\lambda }_{1}},{{\lambda }_{3}},{{\lambda }_{4}}){{T}_{1}}({{\lambda }_{2}}){{e}^{i({{\lambda }_{1}}+{{\lambda }_{2}}-{{\lambda }_{3}}-{{\lambda }_{4}})t}}  \nonumber\\
 & +\int{d}{{\lambda }_{1}}d{{\lambda }_{2}}d{{\lambda }_{3}}d{{\lambda }_{4}}{{T}_{3}}({{\lambda }_{1}},{{\lambda }_{2}},{{\lambda }_{4}}){{T}_{1}}({{\lambda }_{3}}){{e}^{i({{\lambda }_{1}}+{{\lambda }_{2}}-{{\lambda }_{3}}-{{\lambda }_{4}})t}}  \nonumber\\
 & +\int{d}{{\lambda }_{1}}d{{\lambda }_{2}}d{{\lambda }_{3}}d{{\lambda }_{4}}{{T}_{3}}({{\lambda }_{1}},{{\lambda }_{2}},{{\lambda }_{3}}){{T}_{1}}({{\lambda }_{4}}){{e}^{i({{\lambda }_{1}}+{{\lambda }_{2}}-{{\lambda }_{3}}-{{\lambda }_{4}})t}}  \nonumber\\
 & =6{{L}^{2}}{{r}_{1}}(t){{r }_{3,1}}(t){{r}_{3}}(t)+2{{L}^{2}}{{r}_{1}}(t){{r }_{3,2}}(t){{r}_{3}}(t)
\end{align}
where
\begin{align}
{{r }_{3,1}}(t)=\left\{ \begin{matrix}
   1-\frac{2t}{L}+\frac{3t}{2L}\log (1+\frac{t}{L}) & 0<t<L  \\
   -2+\frac{t}{L}+\frac{3t}{2L}\log \left( \frac{1}{2}\left( 1+\frac{3L}{2t-L} \right) \right) & L<t<2L  \\
   0 & t>2L  \\
\end{matrix} \right.
\end{align}
and
\begin{align}
r_{3,2}(t)=r_2(t)
\end{align}
\end{itemize}
So we have the total expression as
\begin{align}
& L(L-1)(L-2)(L-3)\int{D\lambda }{{e}^{i({{\lambda }_{1}}+{{\lambda }_{2}}-{{\lambda }_{3}}-{{\lambda }_{4}})t}} \nonumber\\
& ={{L}^{4}}r_{1}^{4}(t) \nonumber\\
 & -2{{L}^{3}}r_{1}^{2}(t){{r}_{3}}(2t){{r }_{2}}(t)-4{{L}^{3}}r_{1}^{2}(t){{r }_{2}}(t) \nonumber\\
 & +2{{L}^{2}}r _{2}^{2}(t)+{{L}^{2}}r _{2}^{2}(t)r_{3}^{2}(2t)+6{{L}^{2}}{{r}_{1}}(t){{r }_{3,1}}(t){{r}_{3}}(t)+2{{L}^{2}}{{r}_{1}}(t){{r }_{3,2}}(t){{r}_{3}}(t) \nonumber\\
 & -6L{{r }_{4}}(t)
\end{align}
\paragraph{The second term}
In this part we will evaluate the second term
\begin{align}
2L(L-1)(L-2)\operatorname{Re}\int{D\lambda}{{e}^{i(2{{\lambda }_{1}}-{{\lambda }_{2}}-{{\lambda }_{3}})t}}
\end{align}
Let us firstly consider it without a factor of 2
\begin{align}
L(L-1)(L-2)\operatorname{Re}\int{D\lambda}{{e}^{i(2{{\lambda }_{1}}-{{\lambda }_{2}}-{{\lambda }_{3}})t}}
\end{align}
Do the same cluster decomposition
\begin{align}
  & L(L-1)(L-2){{\rho }^{(3)}}({{\lambda }_{1}},{{\lambda }_{2}},{{\lambda }_{3}}) \nonumber\\
 & ={{T}_{3}}({{\lambda }_{1}},{{\lambda }_{2}},{{\lambda }_{3}})-{{T}_{2}}({{\lambda }_{2}},{{\lambda }_{3}}){{T}_{1}}({{\lambda }_{1}})-{{T}_{2}}({{\lambda }_{1}},{{\lambda }_{3}}){{T}_{1}}({{\lambda }_{2}})-{{T}_{2}}({{\lambda }_{1}},{{\lambda }_{2}}){{T}_{1}}({{\lambda }_{3}}) \nonumber\\
 & +{{T}_{1}}({{\lambda }_{1}}){{T}_{1}}({{\lambda }_{2}}){{T}_{1}}({{\lambda }_{3}})
\end{align}
Then we obtain
\begin{itemize}
\item 3-type: In this case we have
\begin{align}
\operatorname{Re}\int{d}{{\lambda }_{1}}d{{\lambda }_{2}}d{{\lambda }_{3}}{{T}_{3}}({{\lambda }_{1}},{{\lambda }_{2}},{{\lambda }_{3}}){{e}^{i(2{{\lambda }_{1}}-{{\lambda }_{2}}-{{\lambda }_{3}})t}}=2L{{r }_{3,3}}(t)
\end{align}
where
\begin{align}
{{r }_{3,3}}(t)=\left\{ \begin{matrix}
   1-\frac{3t}{L}+\frac{t}{L}\log (1+\frac{t}{L})+\frac{5t}{4L}\log (1+\frac{2t}{L}) & 0<t<\frac{2}{3}L  \\
   -2+\frac{3t}{2L}+\frac{t}{L}\log \left( \frac{L+t}{3t-L} \right)+\frac{5t}{4L}\log \left( \frac{L+2t}{3t-L} \right) & \frac{2}{3}L<t<L  \\
   -1+\frac{t}{2L}+\frac{5t}{4L}\log \left( \frac{L+2t}{3t-L} \right) & L<t<2L  \\
   0 & t>2L  \\
\end{matrix} \right.
\end{align}
\item 1-1-1-type: In this case we have
\begin{align}
\operatorname{Re}\int{d}{{\lambda }_{1}}d{{\lambda }_{2}}d{{\lambda }_{3}}{{T}_{1}}({{\lambda }_{1}}){{T}_{1}}({{\lambda }_{2}}){{T}_{1}}({{\lambda }_{3}}){{e}^{i(2{{\lambda }_{1}}-{{\lambda }_{2}}-{{\lambda }_{3}})t}}={{L}^{3}}{{r}_{1}}(2t)r_{1}^{2}(t)
\end{align}
\item 2-1-type: In this case we have
\begin{align}
  & -\operatorname{Re}\int{d}{{\lambda }_{1}}d{{\lambda }_{2}}d{{\lambda }_{3}}{{T}_{2}}({{\lambda }_{2}},{{\lambda }_{3}}){{T}_{1}}({{\lambda }_{1}}){{e}^{i(2{{\lambda }_{1}}-{{\lambda }_{2}}-{{\lambda }_{3}})t}} \nonumber\\
 & -\operatorname{Re}\int{d}{{\lambda }_{1}}d{{\lambda }_{2}}d{{\lambda }_{3}}{{T}_{2}}({{\lambda }_{1}},{{\lambda }_{3}}){{T}_{1}}({{\lambda }_{2}}){{e}^{i(2{{\lambda }_{1}}-{{\lambda }_{2}}-{{\lambda }_{3}})t}} \nonumber\\
 & -\operatorname{Re}\int{d}{{\lambda }_{1}}d{{\lambda }_{2}}d{{\lambda }_{3}}{{T}_{2}}({{\lambda }_{1}},{{\lambda }_{2}}){{T}_{1}}({{\lambda }_{3}}){{e}^{i(2{{\lambda }_{1}}-{{\lambda }_{2}}-{{\lambda }_{3}})t}} \nonumber\\
 & =-{{L}^{2}}{{r}_{1}}(2t){{r}_{3}}(2t){{r }_{2}}(t)-2{{L}^{2}}{{r}_{1}}(t){{r}_{3}}(t){{r }_{2}}(2t)
\end{align}
\end{itemize}
Finally, we can collect all those terms together, and then we get
\begin{align}
  & 2L(L-1)(L-2)\operatorname{Re}\int{D\lambda }{{e}^{i(2{{\lambda }_{1}}-{{\lambda }_{2}}-{{\lambda }_{3}})t}} \nonumber\\
 & =2{{L}^{3}}{{r}_{1}}(2t)r_{1}^{2}(t)-2{{L}^{2}}{{r}_{1}}(2t){{r}_{3}}(2t){{r }_{2}}(t)-4{{L}^{2}}{{r}_{1}}(t){{r}_{3}}(t){{r }_{2}}(2t)+4L{{r }_{3,3}}(t)
\end{align}
\paragraph{Final expression and summary}
From those calculations we could obtain the final expression for $\mathcal{R}_4$, which is
\begin{align}
  & {{\mathcal{R}}_{4}}=+{{L}^{4}}r_{1}^{4}(t) \nonumber\\
 & -2{{L}^{3}}r_{1}^{2}(t){{r }_{2}}(t){{r}_{3}}(2t)-4{{L}^{3}}r_{1}^{2}(t){{r }_{2}}(t)+2{{L}^{3}}{{r}_{1}}(2t)r_{1}^{2}(t)+4{{L}^{3}}r_{1}^{2}(t) \nonumber\\
 & +2{{L}^{2}}r _{2}^{2}(t)+{{L}^{2}}r _{2}^{2}(t)r_{3}^{2}(2t)+6{{L}^{2}}{{r}_{1}}(t){{r }_{3,1}}(t){{r}_{3}}(t)+2{{L}^{2}}{{r}_{1}}(t){{r }_{3,2}}(t){{r}_{3}}(t) \nonumber\\
 & -2{{L}^{2}}{{r}_{1}}(2t){{r}_{3}}(2t){{r }_{2}}(t)-4{{L}^{2}}{{r}_{1}}(t){{r}_{3}}(t){{r }_{2}}(2t)+{{L}^{2}}r_{1}^{2}(2t)-4{{L}^{2}}r_{1}^{2}(t)-4{{L}^{2}}{{r }_{2}}(t)+2{{L}^{2}} \nonumber\\
 & -6L{{r }_{4}}(t)-L{{r }_{2}}(2t)+4L{{r }_{3,3}}(t)+4L{{r }_{2}}(t)-L
\end{align}
where
\begin{align}
r_2(t)=\left\{ \begin{matrix}
   1-\frac{t}{L}+\frac{t}{2L}\log \left( 1+\frac{t}{L} \right) & t<2L  \\
   -1+\frac{t}{2L}\log \left( \frac{t+L}{t-L} \right) & t>2L  \\
\end{matrix} \right.
\end{align}
\begin{align}
{{r }_{3,1}}(t)=\left\{ \begin{matrix}
   1-\frac{2t}{L}+\frac{3t}{2L}\log (1+\frac{t}{L}) & 0<t<L  \\
   -2+\frac{t}{L}+\frac{3t}{2L}\log \left( \frac{1}{2}\left( 1+\frac{3L}{2t-L} \right) \right) & L<t<2L  \\
   0 & t>2L  \\
\end{matrix} \right.
\end{align}
\begin{align}
r_{3,2}(t)=r_2(t)
\end{align}
\begin{align}
{{r }_{3,3}}(t)=\left\{ \begin{matrix}
   1-\frac{3t}{L}+\frac{t}{L}\log (1+\frac{t}{L})+\frac{5t}{4L}\log (1+\frac{2t}{L}) & 0<t<\frac{2}{3}L  \\
   -2+\frac{3t}{2L}+\frac{t}{L}\log \left( \frac{L+t}{3t-L} \right)+\frac{5t}{4L}\log \left( \frac{L+2t}{3t-L} \right) & \frac{2}{3}L<t<L  \\
   -1+\frac{t}{2L}+\frac{5t}{4L}\log \left( \frac{L+2t}{3t-L} \right) & L<t<2L  \\
   0 & t>2L  \\
\end{matrix} \right.
\end{align}
\begin{align}
{{r }_{4}}(t)=\left\{ \begin{matrix}
   1-\frac{7t}{4L}+\frac{5t}{4L}\log \left( 1+\frac{t}{L} \right) & 0<t<L  \\
   -\frac{3}{2}+\frac{3t}{4L}+\frac{5t}{4L}\log \left( \frac{1}{2}\left( 1+\frac{3L}{2t-L} \right) \right) & L<t<2L  \\
   0 & t>2L  \\
\end{matrix} \right.
\end{align}
After dropping out the less-dominated terms, we could obtain
\begin{align}
  & {{\mathcal{R}}_{4}}\sim +{{L}^{4}}r_{1}^{4}(t) \nonumber\\
 & +2{{L}^{2}}r _{2}^{2}(t)-4{{L}^{2}}{{r }_{2}}(t)+2{{L}^{2}} \nonumber\\
 & -6L{{r }_{4}}(t)-L{{r }_{2}}(2t)+4L{{r }_{3,3}}(t)+4L{{r }_{2}}(t)-L
\end{align}
\paragraph{GSE}
In GSE the computations are very similar, and we have to replace these block functions by
\begin{align}
{{r }_{4}}(t)=\left\{ \begin{matrix}
   1-\frac{1}{2}\frac{t}{L}+\frac{3}{16}\frac{t}{L}\log \left| \frac{t}{L}-1 \right| & t<2L  \\
   0 & t>2L  \\
\end{matrix} \right.
\end{align}
\begin{align}
r_{3,1}=r_4
\end{align}
\begin{align}
r_{3,2}=r_2
\end{align}
\begin{align}
{{r }_{3,3}}=\left\{ \begin{matrix}
   1-\frac{3t}{4L}+\frac{t}{32L}\log \left| \frac{2t-2L}{2L-t} \right|+\frac{9t}{32L}\log |\frac{3}{2}\frac{t}{L}-1| & t<\frac{4}{3}L  \\
   0 & t>\frac{4}{3}L  \\
\end{matrix} \right.
\end{align}
\subsection{Refined two point form factor}
Now we discuss the trick that is similar with our previous improvement. Let us start from GOE. We will use the short distance refined kernel,
\begin{align}
&\tilde{ K}({\lambda _i},{\lambda _j}) \equiv L\rho((\lambda_i+\lambda_j)/2)\times \nonumber\\
&\left( {\begin{array}{*{20}{c}}
{\hat s(\pi L\rho (({\lambda _i} + {\lambda _j})/2)({\lambda _i} - {\lambda _j}))}&{{\bf{D}}\hat s(\pi L\rho (({\lambda _i} + {\lambda _j})/2)({\lambda _i} - {\lambda _j}))}\\
{{\bf{I}}\hat s(\pi L\rho (({\lambda _i} + {\lambda _j})/2)({\lambda _i} - {\lambda _j})) - \epsilon (\pi L\rho (({\lambda _i} + {\lambda _j})/2)({\lambda _i} - {\lambda _j}))}&{\hat s(\pi L\rho (({\lambda _i} + {\lambda _j})/2)({\lambda _i} - {\lambda _j}))}
\end{array}} \right)
\end{align}
We will try to use this formula to evaluate the two point form factor, at a generic finite temperature, $\beta$ (while the refined infinite temperature form factor could be obtained by sending $\beta\to 0$). We have
\begin{align}
&{{\cal R}_2}(t,\beta )\nonumber\\
&=L{r_1}(2i\beta ) + {L^2}{r_1}(t + i\beta ){r_1}(t - i\beta )\nonumber\\
&- \frac{1}{2}\int d {\lambda _1}d{\lambda _2}\left( {{\rm{Tr}}\left( {K({\lambda _1},{\lambda _2})K({\lambda _2},{\lambda _1})} \right)} \right){e^{i({\lambda _1} - {\lambda _2})t}}{e^{ - \beta ({\lambda _1} + {\lambda _2})}}
\end{align}
while for the later integral we expand it as
\begin{align}
&- \frac{1}{2}\int d {\lambda _1}d{\lambda _2}\left( {{\rm{Tr}}\left( {K({\lambda _1},{\lambda _2})K({\lambda _2},{\lambda _1})} \right)} \right){e^{i({\lambda _1} - {\lambda _2})t}}{e^{ - \beta ({\lambda _1} + {\lambda _2})}}\nonumber\\
&=- {L^2}\int d {\lambda _1}d{\lambda _2}\frac{{{{\sin }^2}(\pi L\rho (({\lambda _1} + {\lambda _2})/2)({\lambda _1} - {\lambda _2}))}}{{{{(\pi L({\lambda _1} - {\lambda _2}))}^2}}}{e^{i({\lambda _1} - {\lambda _2})t}}{e^{ - \beta ({\lambda _1} + {\lambda _2})}}\nonumber\\
&+ {L^2}\int d {\lambda _1}d{\lambda _2}{\rho ^2}(\frac{{{\lambda _1} + {\lambda _2}}}{2})\left( {{\bf{D}}\hat s(\pi L\rho (\frac{{{\lambda _1} + {\lambda _2}}}{2})({\lambda _1} - {\lambda _2})){\bf{I}}\hat s(\pi L\rho (\frac{{{\lambda _1} + {\lambda _2}}}{2})({\lambda _i} - {\lambda _j}))} \right){e^{i({\lambda _1} - {\lambda _2})t}}{e^{ - \beta ({\lambda _1} + {\lambda _2})}}\nonumber\\
&- {L^2}\int d {\lambda _1}d{\lambda _2}{\rho ^2}(\frac{{{\lambda _1} + {\lambda _2}}}{2})\left( {{\bf{D}}\hat s(\pi L\rho (\frac{{{\lambda _1} + {\lambda _2}}}{2})({\lambda _1} - {\lambda _2}))\epsilon (\pi L\rho (\frac{{{\lambda _1} + {\lambda _2}}}{2})({\lambda _i} - {\lambda _j}))} \right){e^{i({\lambda _1} - {\lambda _2})t}}{e^{ - \beta ({\lambda _1} + {\lambda _2})}}
\end{align}
Again, changing the variable 
\begin{align}
&{u_1} = {\lambda _1} - {\lambda _2}\nonumber\\
&{u_2} = \frac{{{\lambda _1} + {\lambda _2}}}{2}
\end{align}
We simplify it as
\begin{align}
&- \frac{1}{2}\int d {\lambda _1}d{\lambda _2}\left( {{\rm{Tr}}\left( {K({\lambda _1},{\lambda _2})K({\lambda _2},{\lambda _1})} \right)} \right){e^{i({\lambda _1} - {\lambda _2})t}}{e^{ - \beta ({\lambda _1} + {\lambda _2})}}\nonumber\\
&=- {L^2}\int d {u_1}d{u_2}\frac{{{{\sin }^2}(\pi L\rho ({u_2}){u_1})}}{{{{(\pi L{u_1})}^2}}}{e^{i{u_1}t}}{e^{ - 2\beta {u_2}}}\nonumber\\
&+ {L^2}\int d {u_1}d{u_2}{\rho ^2}({u_2})\left( {{\bf{D}}\hat s(\pi L\rho ({u_2}){u_1}){\bf{I}}\hat s(\pi L\rho ({u_2}){u_1})} \right){e^{i{u_1}t}}{e^{ - 2\beta {u_2}}}\nonumber\\
&- {L^2}\int d {u_1}d{u_2}{\rho ^2}({u_2})\left( {{\bf{D}}\hat s(\pi L\rho ({u_2}){u_1})\epsilon (\pi L\rho ({u_2}){u_1})} \right){e^{i{u_1}t}}{e^{ - 2\beta {u_2}}}
\end{align}
We could firstly perform the integral over $u_1$, and the result is
\begin{align}
&Le^{-2\beta u_2}\rho ({u_2})\left\{ {\begin{array}{*{20}{c}}
{1 - \frac{t}{{\pi \rho ({u_2})L}} + \frac{t}{{2\pi \rho ({u_2})L}}\log \left( {1 + \frac{t}{{\pi \rho ({u_2})L}}} \right)}&{t < 2\pi \rho ({u_2})L}\\
{ - 1 + \frac{t}{{2\pi \rho ({u_2})L}}\log \left( {\frac{{1 + \frac{t}{{\pi \rho ({u_2})L}}}}{{\frac{t}{{\pi \rho ({u_2})L}} - 1}}} \right)}&{t > 2\pi \rho ({u_2})L}
\end{array}} \right.\nonumber\\
&=e^{-2\beta u_2} \max \left( {L\rho ({u_2}) - \frac{t}{\pi } + \frac{t}{{2\pi }}\log \left( {1 + \frac{t}{{\pi \rho ({u_2})L}}} \right), - L\rho ({u_2}) + \frac{t}{{2\pi }}\log \left( {\frac{{1 + \frac{t}{{\pi \rho ({u_2})L}}}}{{\frac{t}{{\pi \rho ({u_2})L}} - 1}}} \right)} \right)
\end{align}
In the GSE case we have the refined kernel,
\begin{align}
&\tilde K({\lambda _i},{\lambda _j}) \equiv L\rho (({\lambda _i} + {\lambda _j})/2) \times \nonumber\\
&\left( {\begin{array}{*{20}{c}}
{\hat s(2\pi L\rho (({\lambda _i} + {\lambda _j})/2)({\lambda _i} - {\lambda _j}))}&{{\bf{D}}\hat s(2\pi L\rho (({\lambda _i} + {\lambda _j})/2)({\lambda _i} - {\lambda _j}))}\\
{{\bf{I}}\hat s(2\pi L\rho (({\lambda _i} + {\lambda _j})/2)({\lambda _i} - {\lambda _j}))}&{\hat s(2\pi L\rho (({\lambda _i} + {\lambda _j})/2)({\lambda _i} - {\lambda _j}))}
\end{array}} \right)
\end{align}
The same technology gives
\begin{align}
&{{\cal R}_2}(t,\beta )\nonumber\\
&=L{r_1}(2i\beta ) + {L^2}{r_1}(t + i\beta ){r_1}(t - i\beta )\nonumber\\
&- \frac{1}{2}\int d {\lambda _1}d{\lambda _2}\left( {{\rm{Tr}}\left( {K({\lambda _1},{\lambda _2})K({\lambda _2},{\lambda _1})} \right)} \right){e^{i({\lambda _1} - {\lambda _2})t}}{e^{ - \beta ({\lambda _1} + {\lambda _2})}}
\end{align}
where
\begin{align}
&- \frac{1}{2}\int d {\lambda _1}d{\lambda _2}\left( {{\rm{Tr}}\left( {K({\lambda _1},{\lambda _2})K({\lambda _2},{\lambda _1})} \right)} \right){e^{i({\lambda _1} - {\lambda _2})t}}{e^{ - \beta ({\lambda _1} + {\lambda _2})}}\nonumber\\
&=-\int {d{u_2}} {e^{ - 2\beta {u_2}}}\max \left( {L\rho ({u_2}) - \frac{t}{{4\pi }} + \frac{t}{{8\pi }}\log \left| {1 - \frac{t}{{2\pi \rho ({u_2})L}}} \right|,0} \right)
\end{align}
Although we didn't find an analytic result, one can compute those integrals numerically.
\section{Wishart-Laguerre spectral form factor}\label{L}
\subsection{Random matrix theory review}
In this part, we will consider the Wishart-Laguerre random matrices from product of square Gaussian ensembles. These Wishart-Laguerre random matrices are square of standard Gaussian random matrices, which we call Wishart-Laguerre Unitary Ensemble (LUE),  Wishart-Laguerre Orthogonal Ensemble (LOE), and Wishart-Laguerre Symplectic Ensemble (LSE), for square of GUE, GOE and GSE distribution (For LSE we also mean $2L\times 2L$ matrix, while LUE and LOE we mean $L\times L$ matrices). The joint eigenvalue distribution is given by
\begin{align}
P(\lambda )\sim {\left| {\Delta (\lambda )} \right|^{\tilde \beta }}\prod\nolimits_{k = 1}^L {{e^{ - \frac{{\tilde \beta L}}{4}{\lambda _k}}}} 
\end{align}
where $\tilde{\beta}=1,2,4$ corresponds to LOE, LUE and LSE ensembles. We are also interested in the $n$ point correlation functions
\begin{align}
\rho^{(n)}(\lambda_1,\ldots,\lambda_n) = \int d\lambda_{n+1}\ldots d\lambda_L P(\lambda_1,\ldots,\lambda_L)
\end{align}
The one point function is the square of the semicircle law in the large $L$ limit, which we could call it as Pastur-Marchenko distribution
\begin{align}
{\rho ^{(1)}}(\lambda ) = \rho (\lambda ) = \frac{1}{{2\pi \lambda }}\sqrt {\lambda (4 - \lambda )} 
\end{align}
where now the value of $\lambda$ is ranging from 0 to 4. 
\\
\\
We will use the kernels in the large $L$ limit to compute correlation functions and the form factors in terms of box approximation. Similarly, for LUE it is a determinant point process, so we could determine the correlation functions as 
\begin{align}
{\rho ^{(n)}}({\lambda _1}, \ldots ,{\lambda _n}) = \frac{{(L - n)!}}{{L!}}\det (K({\lambda _i},{\lambda _j}))_{i,j = 1}^n
\end{align}
where
\begin{align}
K({\lambda _i},{\lambda _j}) \equiv \left\{ {\begin{array}{*{20}{l}}
{\frac{{\sin (L\rho (u)\pi ({\lambda _i} - {\lambda _j}))}}{{\pi ({\lambda _i} - {\lambda _j})}}{\text{         for }}i\not  = j}\\
{\frac{L}{{2\pi {\lambda _i}}}\sqrt {{\lambda _i}(4 - {\lambda _i})} {\text{             for }}i = j}
\end{array}} \right.
\end{align}
with an undetermined constant $u$ from $[0,4]$. The origination of this constant is from the approximation method, finding an average to put a box in the whole interval $[0,4]$. In the GUE case, we naturally choose $u$ to be 0 because the interval is symmetric in the range $[-2,2]$. However, here in the range with a positive definite eigenvalue we cannot use such a prescription.  
\\
\\
Similarly, for the LOE case we have
\begin{align}
K({\lambda _i},{\lambda _j}) \equiv \left\{ {\begin{array}{*{20}{l}}
{L\rho (u)\left( {\begin{array}{*{20}{c}}
{\hat s(L\rho (u)\pi ({\lambda _i} - {\lambda _j}))}&{{\bf{D}}\hat s(L\rho (u)\pi ({\lambda _i} - {\lambda _j}))}\\
{{\bf{I}}\hat s(L\rho (u)\pi ({\lambda _i} - {\lambda _j})) - \epsilon (L\rho (u)\pi ({\lambda _i} - {\lambda _j}))}&{\hat s(L\rho (u)\pi ({\lambda _i} - {\lambda _j}))}
\end{array}} \right)}&{{\rm{for}}\;\;i\not  = j}\\
{}&{}\\
{\frac{L}{{2\pi \lambda }}\sqrt {\lambda (4 - \lambda )} \left( {\begin{array}{*{20}{c}}
1&0\\
0&1
\end{array}} \right)\;}&{{\rm{for}}\;\;i = j}
\end{array}} \right.
\end{align}
and for the LSE we have 
\begin{align}
K({\lambda _i},{\lambda _j}) \equiv \left\{ {\begin{array}{*{20}{l}}
{L\rho (u)\left( {\begin{array}{*{20}{c}}
{\hat s(2\pi L\rho (u)({\lambda _i} - {\lambda _j}))}&{{\bf{D}}\hat s(2\pi L\rho (u)({\lambda _i} - {\lambda _j}))}\\
{{\bf{I}}\hat s(2\pi L\rho (u)({\lambda _i} - {\lambda _j}))}&{\hat s(2\pi L\rho (u)({\lambda _i} - {\lambda _j}))}
\end{array}} \right)}&{{\rm{for}}\;\;i\not  = j}\\
{}&{}\\
{\frac{L}{{2\pi \lambda }}\sqrt {\lambda (4 - \lambda )} \left( {\begin{array}{*{20}{c}}
1&0\\
0&1
\end{array}} \right)\;}&{{\rm{for}}\;\;i = j}
\end{array}} \right.
\end{align}
and the Pfaffian point process will determine the structure of the correlation functions in terms of form factors as 
\begin{align}
{{\rho }^{(n)}}({{\lambda }_{1}},\ldots ,{{\lambda }_{n}})=\frac{(L-n)!}{L!}\sum\limits_{m\ge 1}^{\bigcupdot\nolimits_{i=1}^{m}{{{S}_{i}}}=\left\{ 1,\ldots ,n \right\}}{{{(-1)}^{n-m}}{{T}_{{{S}_{1}}}}\ldots {{T}_{{{S}_{m}}}}}
\end{align}
where
\begin{align}
{{T}_{n}}({{\lambda }_{1}},\ldots ,{{\lambda }_{n}})=\frac{1}{2n}\sum\limits_{\sigma }{\text{Tr}\left( K({{\lambda }_{{{\sigma }_{1}}}},{{\lambda }_{{{\sigma }_{2}}}})K({{\lambda }_{{{\sigma }_{2}}}},{{\lambda }_{{{\sigma }_{3}}}})\ldots K({{\lambda }_{{{\sigma }_{n}}}},{{\lambda }_{{{\sigma }_{1}}}}) \right)}
\end{align}
About convolution theorems, in this case we could obtain
\begin{thm}[Convolution formula for LUE]
We have 
\begin{align}
&\int {\prod\limits_{i = 1}^m {d{\lambda _i}} } K({\lambda _1},{\lambda _2})K({\lambda _2},{\lambda _3}) \ldots K({\lambda _{m - 1}},{\lambda _m})K({\lambda _m},{\lambda _1}){e^{\sum\nolimits_{i = 1}^m {i{k_i}{\lambda _i}} }}\nonumber\\
&= L{r_3}(\sum\nolimits_{i = 1}^m {{k_i}} )\int {dk} g(k)g(k + \frac{{{k_1}}}{{2\pi {\alpha _L}}})g(k + \frac{{{k_2}}}{{2\pi {\alpha _L}}}) \ldots g(k + \frac{{{k_{m - 1}}}}{{2\pi {\alpha _L}}})
\end{align}
where
\begin{align}
{\alpha _L} = L\rho (u)
\end{align}
and
\begin{align}
{r_3}(t) = \frac{{\sin (t/2\rho (u))}}{{t/2\rho (u)}}
\end{align}
\end{thm}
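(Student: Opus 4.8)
The plan is to reduce the LUE convolution to the already-proven infinite-$L$ sine-kernel convolution formula by a single rescaling, and then to promote the resulting delta function to finite $L$ by exactly the box approximation used for GUE. First I would observe that the off-diagonal LUE kernel is nothing but a rescaled sine kernel: writing $\alpha_L=L\rho(u)$ and $s(r)=\sin(\pi r)/(\pi r)$, one has $K(\lambda_i,\lambda_j)=\alpha_L\,s(\alpha_L(\lambda_i-\lambda_j))$, the only change from GUE being that the scale $L/\pi$ is replaced by $L\rho(u)$. Substituting this into the cyclic product of $m$ kernels pulls out an overall factor $\alpha_L^{m}$ in front of $\int\prod_i d\lambda_i\, s(\alpha_L(\lambda_1-\lambda_2))\cdots s(\alpha_L(\lambda_m-\lambda_1))e^{i\sum_i k_i\lambda_i}$. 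I would then change variables to $y_i=\alpha_L\lambda_i$, whose Jacobian $\alpha_L^{-m}$ cancels the prefactor exactly and leaves $\int\prod_i dy_i\, s(y_1-y_2)\cdots s(y_m-y_1)e^{i\sum_i k_i y_i/\alpha_L}$.

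Next I would match the exponent to the normalization used in the infinite-$L$ convolution formula, writing $e^{i\sum_i k_i y_i/\alpha_L}=e^{2\pi i\sum_i \tilde k_i y_i}$ with $\tilde k_i=k_i/(2\pi\alpha_L)$, and then apply that theorem verbatim. This produces $\delta(\sum_i\tilde k_i)\int dk\, g(k)g(k+\tilde k_1)\cdots g(k+\tilde k_{m-1})$, and undoing the substitution $\tilde k_i\mapsto k_i/(2\pi\alpha_L)$ reproduces precisely the product $\int dk\, g(k)g(k+k_1/2\pi\alpha_L)\cdots g(k+k_{m-1}/2\pi\alpha_L)$ on the right-hand side of the claim.

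The remaining, and genuinely delicate, step is to convert $\delta(\sum_i\tilde k_i)$ into the finite-$L$ factor $L\,r_3(\sum_i k_i)$, which is where the box approximation enters. Tracing where this delta came from in the proof of the infinite-$L$ formula, it arose from the unrestricted center-of-mass integral $\int dy_m\,\exp(2\pi i(\sum_i\tilde k_i)y_m)$; undoing the rescaling, this is $\alpha_L\int d\lambda\,\exp(i(\sum_i k_i)\lambda)$ over all of $\mathbb{R}$. At finite $L$ the spectral support is the finite Pastur--Marchenko interval $[0,4]$, so I would replace the unbounded integral by a local box of half-width $\mathrm{cut}=1/(2\rho(u))$ in which the density is treated as the constant $\rho(u)$, the half-width being fixed by the local normalization $2\,\mathrm{cut}\,\rho(u)=1$ (equivalently, by demanding the whole formula collapse to the plateau value $L$ at $k_i=0$, just as $\mathrm{cut}=\pi/2$ was fixed in GUE, where $\rho(0)=1/\pi$ gives back both $\alpha_L=L/\pi$ and the GUE $r_3$). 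Evaluating $\alpha_L\int_{-\mathrm{cut}}^{\mathrm{cut}}d\lambda\,e^{i(\sum_i k_i)\lambda}=L\,\sin((\sum_i k_i)/2\rho(u))/((\sum_i k_i)/2\rho(u))=L\,r_3(\sum_i k_i)$ yields exactly the stated $r_3$.

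I expect this final step to be the main obstacle, not because the computation is hard but because it is an approximation rather than a limit. Unlike the symmetric interval $[-2,2]$ of GUE, the support $[0,4]$ has no canonical center, so the reference point $u$ — and with it both $\alpha_L$ and $r_3$ — remains genuinely undetermined; the cleanest honest statement is that the density is frozen at its local value $\rho(u)$, and this $u$-dependence must be carried along throughout rather than resolved.
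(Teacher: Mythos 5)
Your proposal is correct and takes essentially the same route as the paper: recognize the off-diagonal LUE kernel as the rescaled sine kernel $\alpha_L\,s(\alpha_L(\lambda_i-\lambda_j))$ with $\alpha_L=L\rho(u)$, reduce to the infinite-$L$ convolution theorem by the rescaling $y_i=\alpha_L\lambda_i$, and regularize the leftover delta function by the box approximation with half-width $1/(2\rho(u))$ fixed by demanding the formula return $L$ at $k_1=\dots=k_{m-1}=0$, which yields exactly $L\,r_3(\sum_i k_i)$ with the stated $r_3$ and the shifts $k_i/(2\pi\alpha_L)$ in the $g$-arguments. Your closing caveat that $u$ is genuinely undetermined at this stage also matches the paper, which leaves $u$ free in the theorem and only later fixes $\rho(u)=1/4$ by matching the early-time slope of the refined two-point form factor.
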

\begin{thm}[Convolution formula for LOE]
We have 
\begin{align}
&\int{\prod\limits_{i=1}^{m}{d{{\lambda }_{i}}}}{K}({{\lambda }_{1}},{{\lambda }_{2}}){K}({{\lambda }_{2}},{{\lambda }_{3}})\ldots {K}({{\lambda }_{m-1}},{{\lambda }_{m}}){K}({{\lambda }_{m}},{{\lambda }_{1}}) \,e^{i \sum_{i=1}^{m} k_i \lambda_i} \nonumber\\
& = L{r_3}(\sum\nolimits_{i = 1}^m {{k_i}} )\int {dk} \,G(k)G(k + \frac{{{k_1}}}{{2\pi {\alpha _L}}})G(k + \frac{{{k_2}}}{{2\pi {\alpha _L}}}) \ldots G(k + \frac{{{k_{m - 1}}}}{{2\pi {\alpha _L}}})
\end{align}
\end{thm}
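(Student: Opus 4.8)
The plan is to deduce this formula from the already-established GOE convolution formula by a single rescaling, exactly paralleling the way the LUE formula mirrors the GUE one. The starting observation is that the LOE kernel is, entry for entry, the GOE kernel with the Gaussian scale $L/\pi$ replaced by the Laguerre scale $\alpha_L = L\rho(u)$. Indeed, writing $K(\lambda_i,\lambda_j) = c\,M(\pi c(\lambda_i-\lambda_j))$ with the fixed matrix $M(r) = \bigl(\begin{smallmatrix}\hat s(r) & \mathbf{D}\hat s(r)\\ \mathbf{I}\hat s(r)-\epsilon(r) & \hat s(r)\end{smallmatrix}\bigr)$, the GOE kernel is the case $c = L/\pi$ and the LOE kernel is the case $c = \alpha_L$; the factor $\rho(u)$ is simply the local Pastur--Marchenko density at the box point, which sets the effective mean level spacing on the one-sided support $[0,4]$.

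First I would rescale $y_i = c\lambda_i$. Because $\hat s(\pi x) = s(x)$, and because $\mathbf{D}$, $\mathbf{I}$ and $\epsilon$ carry the compensating factors of $\pi$ through this substitution, the dimensionless matrix kernel $M(\pi(y_i-y_j))$ is \emph{identical} for GOE and LOE; only the physical scale $c$ differs. The cyclic product then takes exactly the form handled by the matrix convolution identity whose infinite-$L$ statement is eq.(6.4.21) of \cite{book2}, so I would carry out the same change of variables $u_1 = y_1-y_2,\ldots,u_{m-1}=y_{m-1}-y_m$, $u_m = y_m$ used in the proof of the GUE theorem, integrate out the center coordinate $u_m$ to produce the constraint on $\sum_i k_i$, and Fourier-transform the remaining factors. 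Since the dimensionless kernel is unchanged, the matrix Fourier transforms reassemble into the same $G(k)$ as in the GOE case; the only effect of $c\mapsto\alpha_L$ is to turn each momentum shift $k_i/(2L)$ into $k_i/(2\pi\alpha_L)$, since $2L = 2\pi(L/\pi)$.

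Finally I would reinstate the finite-$L$ box approximation on the center integral, which promotes the delta constraint to the prefactor $L\,r_3(\sum_i k_i)$ exactly as in the Gaussian case, with the box size fixed by normalizing the $k_i\to 0$ limit to the eigenvalue count $L$. The main obstacle is bookkeeping rather than conceptual: one must track the factors of $\pi$ generated by $\mathbf{D}\hat s$, $\mathbf{I}\hat s$ and $\epsilon$ under the rescaling so that the off-diagonal entries of $G(k)$ emerge as $(-2\pi i k)g(k)$ and $(g(k)-1)/(-2\pi i k)$ without spurious constants, and one must fix the averaging point $u$ and the box normalization on $[0,4]$, where the symmetric $[-\pi/2,\pi/2]$ prescription of the Gaussian case no longer applies and $\rho(u)$ enters through the Laguerre form of $r_3$. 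The matrix non-commutativity is already resolved in the GOE proof and is untouched by a scalar rescaling, so no genuinely new difficulty arises there.
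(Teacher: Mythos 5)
Your proposal is correct and takes essentially the same route as the paper: the paper obtains the LOE theorem exactly as the GOE convolution formula (itself resting on the GUE change-of-variables proof and the infinite-$L$ identity in eq.(6.4.21) of \cite{book2}) with the Gaussian scale $L/\pi$ replaced by $\alpha_L = L\rho(u)$, so the momentum shifts $k_i/(2L)$ become $k_i/(2\pi\alpha_L)$ while the matrix Fourier data $G(k)$ is unchanged. Your handling of the box approximation, with the box length fixed by normalizing the $k_i\to 0$ limit to $L$ so that the prefactor becomes $L\,r_3(\sum_i k_i)$ with the Laguerre form of $r_3$, is precisely the paper's (largely implicit) normalization argument.
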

\begin{thm}[Convolution formula for LSE]
We have 
\begin{align}
&\int{\prod\limits_{i=1}^{m}{d{{\lambda }_{i}}}}{K}({{\lambda }_{1}},{{\lambda }_{2}}){K}({{\lambda }_{2}},{{\lambda }_{3}})\ldots {K}({{\lambda }_{m-1}},{{\lambda }_{m}}){K}({{\lambda }_{m}},{{\lambda }_{1}}) \,e^{i \sum_{i=1}^{m} k_i \lambda_i} \nonumber\\
& = L{r_3}(\sum\nolimits_{i = 1}^m {{k_i}} )\int {dk} \,H(k)H(k + \frac{{{k_1}}}{{2\pi {\alpha _L}}})H(k + \frac{{{k_2}}}{{2\pi {\alpha _L}}}) \ldots H(k + \frac{{{k_{m - 1}}}}{{2\pi {\alpha _L}}})
\end{align}
\end{thm}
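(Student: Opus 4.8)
The plan is to reduce the LSE identity to the already-established \emph{Convolution formula for GSE} by a local rescaling, rather than redoing the full Mehta-style computation from scratch. The key observation is that, within the box approximation, the LSE quaternion kernel is structurally identical to the GSE kernel evaluated at an effective matrix-size parameter. Writing $\alpha_L = L\rho(u)$ and $r = \lambda_i - \lambda_j$, the off-diagonal LSE kernel is
\begin{align}
K(\lambda_i,\lambda_j) = \alpha_L \left(\begin{matrix} \hat s(2\pi\alpha_L r) & \mathbf{D}\hat s(2\pi\alpha_L r) \\ \mathbf{I}\hat s(2\pi\alpha_L r) & \hat s(2\pi\alpha_L r) \end{matrix}\right),
\end{align}
which coincides with the GSE kernel $\frac{L_{\rm eff}}{\pi}(\cdots)$ upon identifying $L_{\rm eff} = \pi\alpha_L = \pi L\rho(u)$: then the prefactor $L_{\rm eff}/\pi = \alpha_L$ and the sinc argument $2L_{\rm eff} = 2\pi\alpha_L$ both match. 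This is precisely the Laguerre analogue of how the GSE kernel generalizes Mehta's infinite-$L$ quaternion sine kernel, eq.(7.2.10) of \cite{book2}.

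First I would separate the center-of-mass from the relative coordinates exactly as in the proof of the GUE convolution formula: set $u_i = \lambda_i - \lambda_{i+1}$ for $i = 1,\ldots,m-1$ and keep $u_m = \lambda_m$ as the overall translation variable. The cyclic product $K(\lambda_1,\lambda_2)\cdots K(\lambda_m,\lambda_1)$ depends only on the differences $u_1,\ldots,u_{m-1}$, so the $u_m$ integral factors out and the remaining relative-coordinate integral is a matrix convolution of quaternion sine kernels. Rescaling $\lambda \mapsto \alpha_L\lambda$ (equivalently $L \mapsto \pi\alpha_L$) puts the arguments $2\pi\alpha_L r$ into standard GSE form, so the infinite-$L$ matrix identity applies verbatim and produces $\int dk\, H(k)H(k+\frac{k_1}{2\pi\alpha_L})\cdots H(k+\frac{k_{m-1}}{2\pi\alpha_L})$; the shifts acquire the factor $1/(2\pi\alpha_L)=1/(2L_{\rm eff})$ exactly because of this rescaling. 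The three scalar entries transform into the same matrix $H(k)$ as in the GSE case — the factor-$2$ scaling producing the halved argument in $g(k/2)$, the derivative entry $\mathbf{D}\hat s$ contributing the $-\pi i k$ factor, and the integral entry $\mathbf{I}\hat s$ the $1/(-\pi i k)$ factor — and crucially the LSE lower-left entry carries no $\epsilon$-subtraction, unlike LOE, so no extra term appears.

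Finally, the $u_m$ integral is handled by the box approximation: the divergent integral over the Pastur–Marchenko support $[0,4]$ is replaced by a window of width $1/\rho(u)$ centered at $u$, fixed by demanding that the whole expression reduce to $L$ at $k_1 = \cdots = k_{m-1} = 0$. This yields the prefactor $L\,r_3(\sum_i k_i)$ with $r_3(t) = \sin(t/2\rho(u))/(t/2\rho(u))$, matching the convention already introduced for LUE. I would note here that only the relative-coordinate part is a pure rescaling of the GSE computation; the center-of-mass box must be treated separately because the semicircle and Pastur–Marchenko densities differ and the support is no longer symmetric about the origin.

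The hard part will be the distributional and matrix aspects of the Fourier transform rather than the bookkeeping. The lower-left entry $\mathbf{I}\hat s$ transforms with a $1/(-\pi i k)$ pole, so $H(k)$ and its convolution product must be interpreted as principal values — the same subtlety that already surfaced in the GSE/GOE two-point analysis, where a genuine pole at $t=2L$ and an apparent divergence for $0<t<2L$ appeared, and the same continuation prescription must be imposed. Because matrix multiplication is noncommutative, I would also verify that the ordered product of the $H$-factors emerges in the correct cyclic order around the loop. The remaining delicate point is the normalization step itself: since the density is not constant and $[0,4]$ is not symmetric, one must argue that the local constant-density approximation $\rho\approx\rho(u)$ with window width $1/\rho(u)$ is the right prescription and that the resulting normalization gives $L$ uniformly in $m$, as was checked for the GUE cutoff $\pi/2$.
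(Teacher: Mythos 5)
Your proposal is correct and takes essentially the same route the paper intends: the paper states the LSE theorem without a separate proof, obtaining it by running the GUE change-of-variables argument on Mehta's infinite-$L$ quaternion convolution identity (eq.~(7.2.10) of \cite{book2}) with the local density $\rho(u)$ inserted, and fixing the prefactor $L\,r_3(\sum_i k_i)$, $r_3(t)=\frac{\sin(t/2\rho(u))}{t/2\rho(u)}$, by the box-approximation normalization at vanishing frequencies. Your explicit reduction via $L_{\rm eff}=\pi L\rho(u)$ — which maps the LSE kernel verbatim onto the GSE kernel, yields the shifts $k_i/(2L_{\rm eff})=k_i/(2\pi\alpha_L)$, and isolates the center-of-mass box (width $1/\rho(u)$, so that $\alpha_L\cdot\frac{1}{\rho(u)}=L$) as the only genuinely Laguerre-specific step — is exactly this substitution made formal, and your flagged caveats (principal-value treatment of the $1/k$ pole in $H$, cyclic ordering of the noncommuting matrix factors) correctly identify the same subtleties the paper acknowledges in its GSE two-point discussion.
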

Notice that in order to mimic the delta function, we have to integrate over $\mathbb{R}$ for all variables instead of bounded range (Similar with the treatment in the Gaussian random matrices). Based on these knowledge, we could start to summarize the results for form factors in the case of Wishart-Laguerre matrices.
\subsection{Result summary}
\subsubsection{Two point form factor}
The two point form factor has the universal form
\begin{align}
{\mathcal{R}_2} = L + {L^2}{r_1}(t)r_1^*(t) - L{r_2}(t)
\end{align}
where we always have 
\begin{align}
{{r}_{1}}(t)={{e}^{2it}}({{J}_{0}}(2t)-i{{J}_{1}}(2t)) 
\end{align}
and for LUE we have
\begin{align}
 {{r}_{2}}(t)=\left\{ \begin{array}{*{35}{l}}
   1-\frac{t}{2\pi L\rho (u)} & \text{for}~~0<t<2\pi L\rho (u)  \\
   {} & {}  \\
   0 & \text{for}~~t>2\pi L\rho (u)  \\
\end{array} \right. 
\end{align}
for LOE we have
\begin{align}
{r _2}(t) = \left\{ {\begin{array}{*{20}{c}}
{1 - \frac{t}{{\pi \rho (u)L}} + \frac{t}{{2\pi \rho (u)L}}\log \left( {1 + \frac{t}{{\pi \rho (u)L}}} \right)}&{t < 2L\pi \rho (u)}\\
{ - 1 + \frac{t}{{2\pi \rho (u)L}}\log \left( {\frac{{1 + t/\pi \rho (u)L}}{{t/L\pi \rho (u) - 1}}} \right)}&{t > 2L\pi \rho (u)}
\end{array}} \right.
\end{align}
for LSE we have
\begin{align}
{r_2}(t) = \left\{ {\begin{array}{*{20}{c}}
{1 - \frac{t}{{4\pi \rho (u)L}} + \frac{t}{{8\pi \rho (u)L}}\log \left| {1 - \frac{t}{{2\pi \rho (u)L}}} \right|}&{t < 4L\pi \rho (u)}\\
0&{t > 4L\pi \rho (u)}
\end{array}} \right.
\end{align}
\subsubsection{Four point form factor}
The four point form factor has the universal form 
\begin{align}
&{{\cal R}_4} =  {L^4}|{r_1}(t){|^4}\nonumber\\
&- 2{L^3}{\mathop{\rm Re}\nolimits} (r_1^2(t)){r_2}(t){r_3}(2t) - 4{L^3}|{r_1}(t){|^2}{r_2}(t) + 2{L^3}{\rm{Re}}({r_1}(2t)r_1^{*2}(t)) + 4{L^3}|{r_1}(t){|^2}\nonumber\\
&+ 2{L^2}r_2^2(t) + {L^2}r_2^2(t)r_3^2(2t) + 6{L^2}{\rm{Re}}({r_1}(t)){r _{3,1}}(t){r_3}(t) + 2{L^2}{\rm{Re}}({r_1}(t)){r _{3,2}}(t){r_3}(t)\nonumber\\
&- 2{L^2}{\rm{Re}}({r_1}(2t){r_3}(2t){r _2}(t) - 4{L^2}{\rm{Re}}(r_1^*(t)){r_3}(t){r _2}(2t) + {L^2}|{r_1}(2t){|^2} - 4{L^2}|{r_1}(t){|^2}\nonumber\\
&- 4{L^2}{r_2}(t) + 2{L^2}\nonumber\\
&- 6L{r_4}(t) - L{r_2}(2t) + 4L{r_{3,3}}(t) + 4L{r_2}(t) - L
\end{align}
where the dominated term is 
\begin{align}
&{{\cal R}_4}\sim{L^4}|{r_1}(t){|^4}\nonumber\\
&+ 2{L^2}r_2^2(t) - 4{L^2}{r_2}(t) + 2{L^2}\nonumber\\
&- 6L{r_4}(t) - L{r_2}(2t) + 4L{r_{3,3}}(t) + 4L{r_2}(t) - L
\end{align}
Now we specifying these block functions for different ensembles. For all three ensembles we still have 
\begin{align}
&{{r}_{1}}(t)={{e}^{2it}}({{J}_{0}}(2t)-i{{J}_{1}}(2t))\nonumber\\ 
&{r_3}(t) =\frac{{\sin (t/2\rho (u))}}{{t/2\rho (u)}}
\end{align}
For LUE we have
\begin{align}
&{r_{3,1}}(t) = {r_{3,2}}(t) =r_{3,3}(t/3)= {r_4}(t/2) = {r_2}(t) =\left\{ \begin{array}{*{35}{l}}
   1-\frac{t}{2\pi L\rho (u)} & \text{for}~~0<t<2\pi L\rho (u)  \\
   {} & {}  \\
   0 & \text{for}~~t>2\pi L\rho (u)  \\
\end{array} \right. 
\end{align}
For LOE we have 
\begin{align}
&{r_{3,2}}(t) = {r_2}(t) = \left\{ {\begin{array}{*{20}{c}}
{1 - \frac{t}{{\pi \rho (u)L}} + \frac{t}{{2\pi \rho (u)L}}\log \left( {1 + \frac{t}{{\pi \rho (u)L}}} \right)}&{t < 2L\pi \rho (u)}\\
{ - 1 + \frac{t}{{2\pi \rho (u)L}}\log \left( {\frac{{t/\pi \rho (u)L + 1}}{{t/\pi \rho (u) - 1}}} \right)}&{t > 2L\pi \rho (u)}
\end{array}} \right.\nonumber\\
&{r_{3,1}}(t) = \left\{ {\begin{array}{*{20}{c}}
{1 - \frac{{2t}}{{\pi \rho (u)L}} + \frac{{3t}}{{2\pi \rho (u)L}}\log (1 + \frac{t}{{\pi \rho (u)L}})}&{0 < t < \pi \rho (u)L}\\
{ - 2 + \frac{t}{{\pi \rho (u)L}} + \frac{{3t}}{{2\pi \rho (u)L}}\log \left( {\frac{1}{2}\left( {1 + \frac{3}{{2t/\pi \rho (u) - 1}}} \right)} \right)}&{\pi \rho (u)L < t < 2\pi \rho (u)L}\\
0&{t > 2\pi \rho (u)L}
\end{array}} \right.\nonumber\\
&{r_{3,3}}(t) = \left\{ {\begin{array}{*{20}{c}}
{1 - \frac{{3t}}{{\pi \rho (u)L}} + \frac{t}{{\pi \rho (u)L}}\log (1 + \frac{t}{{\pi \rho (u)L}}) + \frac{{5t}}{{4\pi \rho (u)L}}\log (1 + \frac{{2t}}{{\pi \rho (u)L}})}&{0 < t < \frac{2}{3}\pi \rho (u)L}\\
{ - 2 + \frac{{3t}}{{2\pi \rho (u)L}} + \frac{t}{{\pi \rho (u)L}}\log \left( {\frac{{1 + t/\pi \rho (u)L}}{{3t/\pi \rho (u)L - 1}}} \right) + \frac{{5t}}{{4\pi \rho (u)L}}\log \left( {\frac{{1 + 2t/\pi \rho (u)L}}{{3t/\pi \rho (u)L - 1}}} \right)}&{\frac{2}{3}\pi \rho (u)L < t < \pi \rho (u)L}\\
{ - 1 + \frac{t}{{2\pi \rho (u)L}} + \frac{{5t}}{{4\pi \rho (u)L}}\log \left( {\frac{{1 + 2t/\pi \rho (u)L}}{{3t/\pi \rho (u)L - 1}}} \right)}&{\pi \rho (u)L < t < 2\pi \rho (u)L}\\
0&{t > 2\pi \rho (u)L}
\end{array}} \right.\nonumber\\
&{r_4}(t) = \left\{ {\begin{array}{*{20}{c}}
{1 - \frac{{7t}}{{4\pi \rho (u)L}} + \frac{{5t}}{{4\pi \rho (u)L}}\log \left( {1 + \frac{t}{{\pi \rho (u)L}}} \right)}&{0 < t < \pi \rho (u)L}\\
{ - \frac{3}{2} + \frac{{3t}}{{4\pi \rho (u)L}} + \frac{{5t}}{{4\pi \rho (u)L}}\log \left( {\frac{1}{2}\left( {1 + \frac{3}{{2t/\pi \rho (u)L - 1}}} \right)} \right)}&{L < t < 2\pi \rho (u)L}\\
0&{t > 2\pi \rho (u)L}
\end{array}} \right.
\end{align}
For LSE we have
\begin{align}
&{r_{3,1}} = {r_4}(t) = \left\{ {\begin{array}{*{20}{c}}
{1 - \frac{1}{2}\frac{t}{{\pi \rho (u)L}} + \frac{3}{{16}}\frac{t}{{\pi \rho (u)L}}\log \left| {\frac{t}{{\pi \rho (u)L}} - 1} \right|}&{t < 2\pi \rho (u)L}\\
0&{t > 2\pi \rho (u)L}
\end{array}} \right.\nonumber\\
&{r_{3,2}}(t) = {r_2}(t)\nonumber\\
&{r_{3,3}} = \left\{ {\begin{array}{*{20}{c}}
{1 - \frac{{3t}}{{4\pi \rho (u)L}} + \frac{t}{{32\pi \rho (u)L}}\log \left| {\frac{{2t/\pi \rho (u)L - 2}}{{2 - t/\pi \rho (u)L}}} \right| + \frac{{9t}}{{32\pi \rho (u)L}}\log |\frac{3}{2}\frac{t}{{\pi \rho (u)L}} - 1|}&{t < \frac{4}{3}\pi \rho (u)L}\\
0&{t > \frac{4}{3}\pi \rho (u)L}
\end{array}} \right.
\end{align}
\subsubsection{Refined two point form factor}
We will discuss the improvement of two point form factor with finite temperature, by interval method in this section. For LUE, we have
\begin{align}
&{{\cal R}_2}(t,\beta ) = L{r_1}(2i\beta ) + {L^2}{r_1}(t + i\beta ){r_1}(t - i\beta ) - \int {d{u_2}{e^{ - 2\beta {u_2}}}\max \left( {L\rho ({u_2}) - \frac{t}{{2\pi }},0} \right)} 
\end{align}
When $\beta=0$ the integral is 
\begin{align}
\int {d{u_2}{e^{ - 2\beta {u_2}}}\max \left( {L\rho ({u_2}) - \frac{t}{{2\pi }},0} \right)}  = \frac{{L\left( {2\arctan \left( {\frac{{L - t}}{{2t}}} \right) + \pi } \right)}}{{2\pi }}
\end{align}
So we get
\begin{align}
{{\cal R}_2}(t) ={L^2}{\left| {{r_1}(t)} \right|^2} + \frac{{L\left( {\pi  - 2\arctan \left( {\frac{L}{{2t}} - \frac{t}{{2L}}} \right)} \right)}}{{2\pi }}
\end{align}
The early time expansion of the connected piece gives
\begin{align}
R_2^{\text{conn}}(t) \approx \frac{{2t}}{\pi } + \mathcal{O}\left( {{t^3}} \right)
\end{align}
which means that in the box approximation, we could approximately set 
\begin{align}
\rho (u) = \frac{1}{4} \Rightarrow u = \frac{{16}}{{4 + {\pi ^2}}}
\end{align}
then it could approximately capture the form factor dynamics.
\\
\\
For LOE we have 
\begin{align}
&{\mathcal{R}_2}(t,\beta ) = L{r_1}(2i\beta ) + {L^2}{r_1}(t + i\beta ){r_1}(t - i\beta )\nonumber\\
&- \int {d{u_2}} {e^{ - 2\beta {u_2}}}\max \left( {L\rho ({u_2}) - \frac{t}{\pi } + \frac{t}{{2\pi }}\log \left( {1 + \frac{t}{{\pi \rho ({u_2})L}}} \right), - L\rho ({u_2}) + \frac{t}{{2\pi }}\log \left( {\frac{{1 + \frac{t}{{\pi \rho ({u_2})L}}}}{{\frac{t}{{\pi \rho ({u_2})L}} - 1}}} \right)} \right)
\end{align}
This time in $\beta=0$ case, the expansion gives
\begin{align}
\mathcal{R}_2^{\text{conn}}(t) \approx \frac{{4t}}{\pi } + \mathcal{O}\left( {{t^3}} \right)
\end{align}
But we still have 
\begin{align}
\rho (u) = \frac{1}{4} \Rightarrow u = \frac{{16}}{{4 + {\pi ^2}}}
\end{align}
For LSE case, we have
\begin{align}
&{{\cal R}_2}(t,\beta ) = L{r_1}(2i\beta ) + {L^2}{r_1}(t + i\beta ){r_1}(t - i\beta )\nonumber\\
&- \int {d{u_2}} {e^{ - 2\beta {u_2}}}\max \left( {L\rho ({u_2}) - \frac{t}{{4\pi }} + \frac{t}{{8\pi }}\log \left| {1 - \frac{t}{{2\pi \rho ({u_2})L}}} \right|,0} \right)
\end{align}
where in $\beta=0$ case, the expansion gives
\begin{align}
\mathcal{R}_2^{\text{conn}}(t) \approx \frac{{t}}{\pi } + \mathcal{O}\left( {{t^3}} \right)
\end{align}
And the solution of $u$ is still the same
\begin{align}
\rho (u) = \frac{1}{4} \Rightarrow u = \frac{{16}}{{4 + {\pi ^2}}}
\end{align}
\section{Figures}\label{fg}
We obtain numerous analytic results in the previous sections. In this section, we will try to plot some of those results. 
\\
\\
Figures in \ref{fig1} are describing the two point spectral form factors in Gaussian ensembles. One could observe that the main difference among three ensembles is the behavior around the plateau time. We have a smooth corner for GOE, a kink for GUE and a sharp peak for GSE. These features are universal also for because of different sine kernels.
\begin{figure}[htbp]
  \centering
  \includegraphics[width=0.8\textwidth]{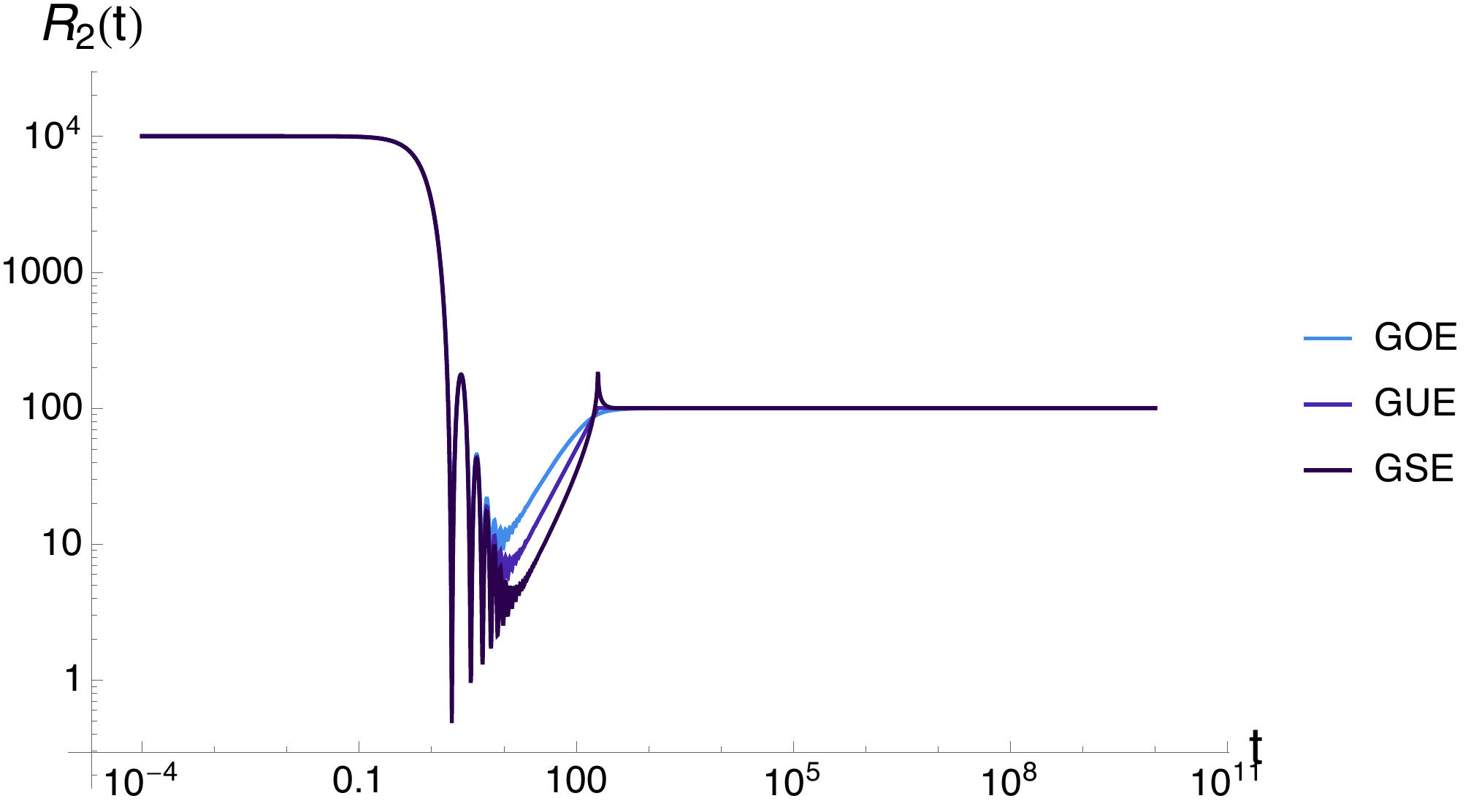}
  \includegraphics[width=0.8\textwidth]{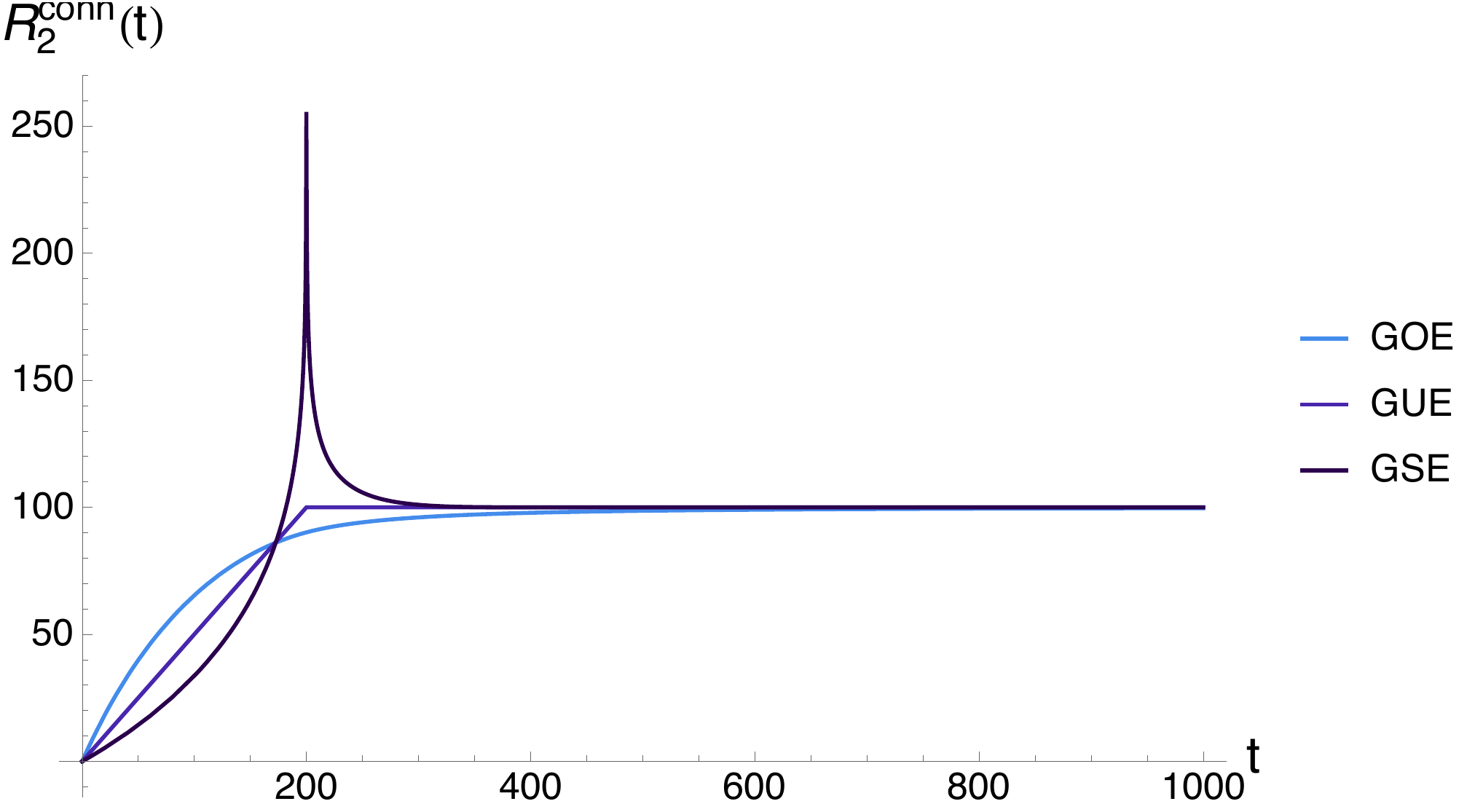}
  \caption{\label{fig1} GOE, GUE, GSE two point form factors $\mathcal{R}_2(t)$ with box cutoff and infinite temperature. We choose $L=100$. Up: full form factor; Down: connected form factor.}
\end{figure}
\\
\\
Figure \ref{fig2} shows a similar behavior four four point form factor $\mathcal{R}_4$.
\begin{figure}[htbp]
\centering
\includegraphics[width=12cm]{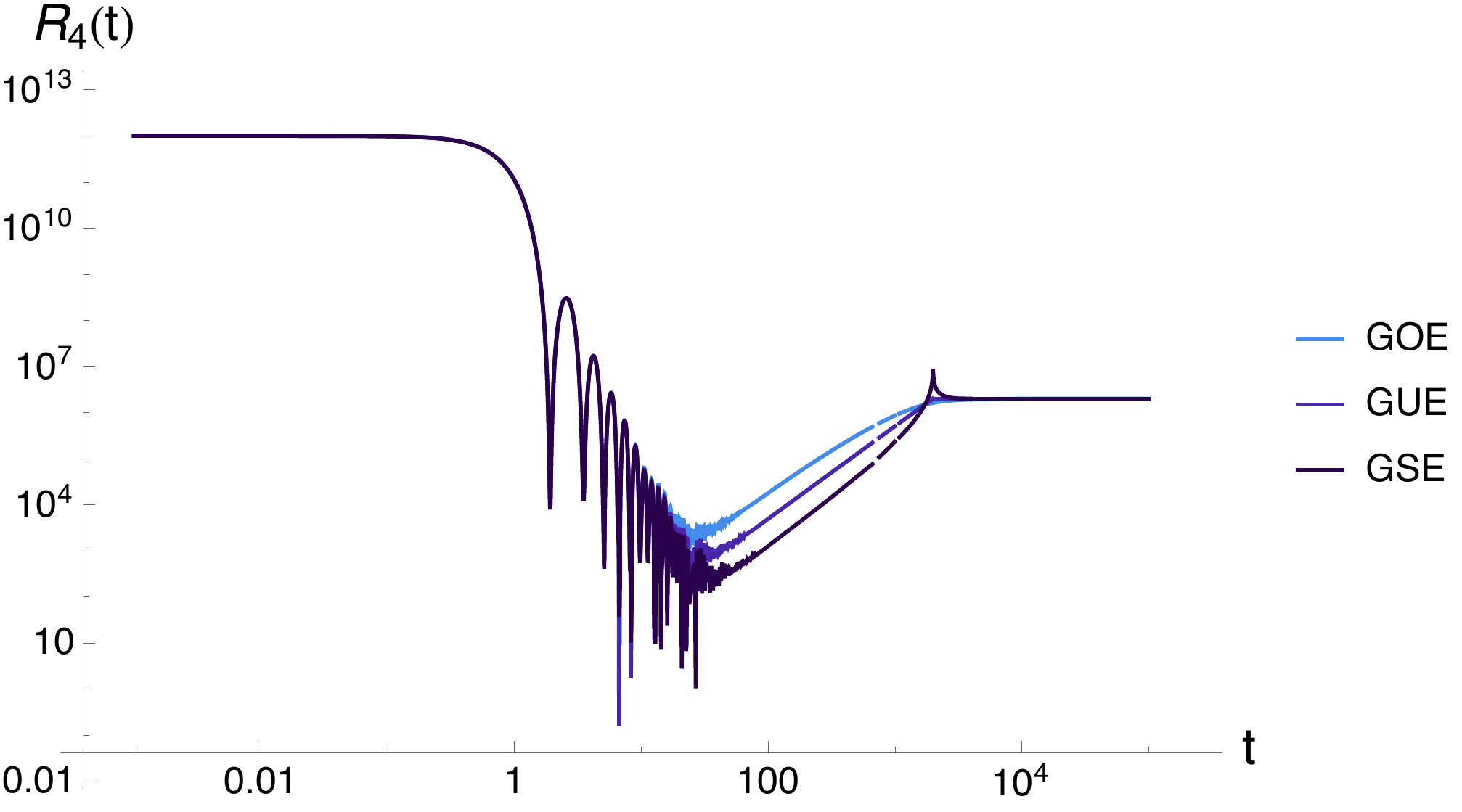}
\caption{GOE, GUE, GSE four point form factors $\mathcal{R}_4(t)$ with box cutoff and infinite temperature. We choose $L=1000$.}\label{fig2}
\end{figure}
\\
\\
We plot similar figures for Wishart-Laguerre ensembles in Figure \ref{fig3} and \ref{fig4}. The main difference is the decaying rate in the relatively early time from $r_1$. Expanding $r_1(t)$ we get $r^{-3/2}$ for Gaussian ensembles and $r^{-1/2}$ for Wishart-Laguerre ensembles. A direct comparison is displayed in Figure \ref{fig5}.
\begin{figure}[htbp]
  \centering
  \includegraphics[width=0.8\textwidth]{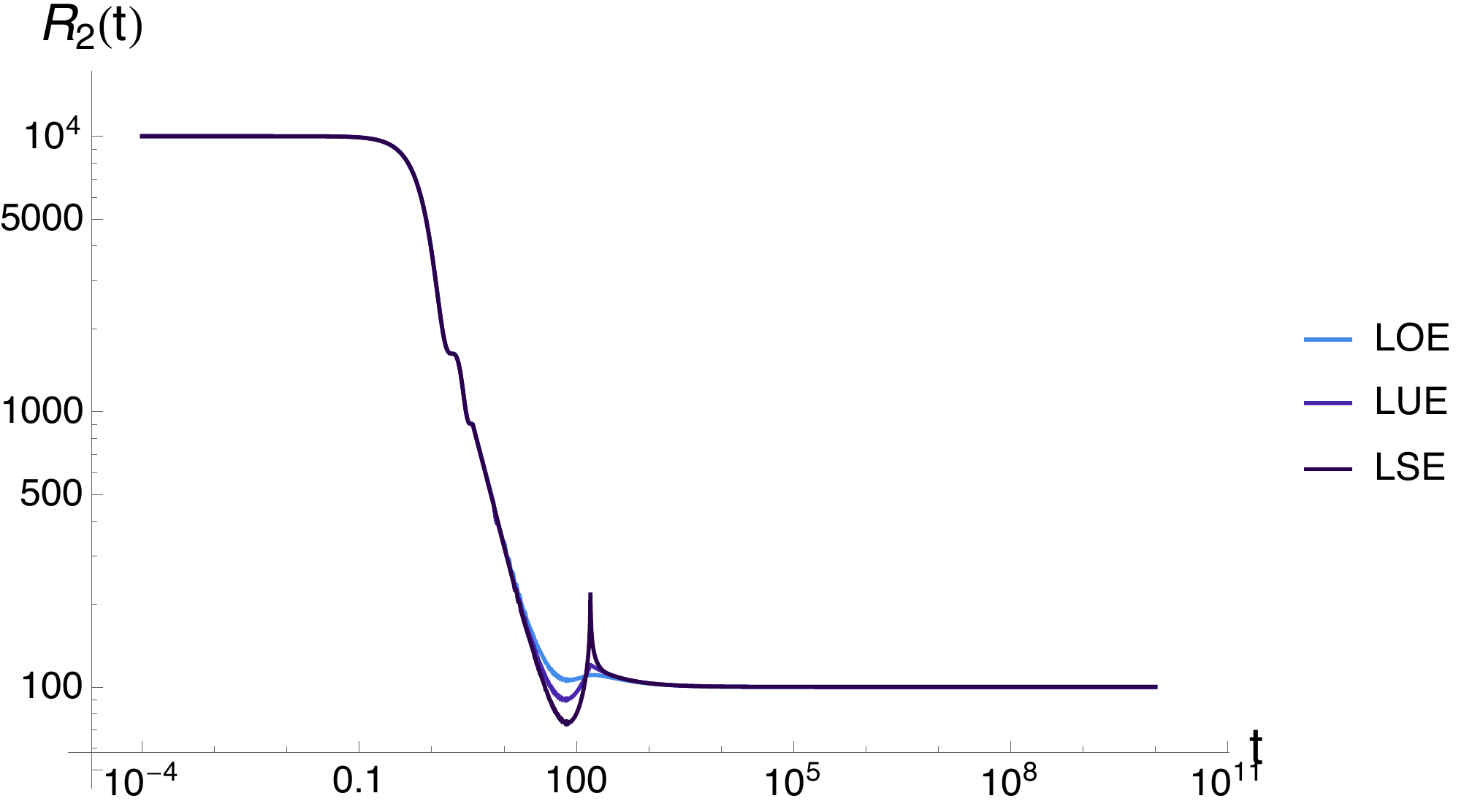}
  \includegraphics[width=0.8\textwidth]{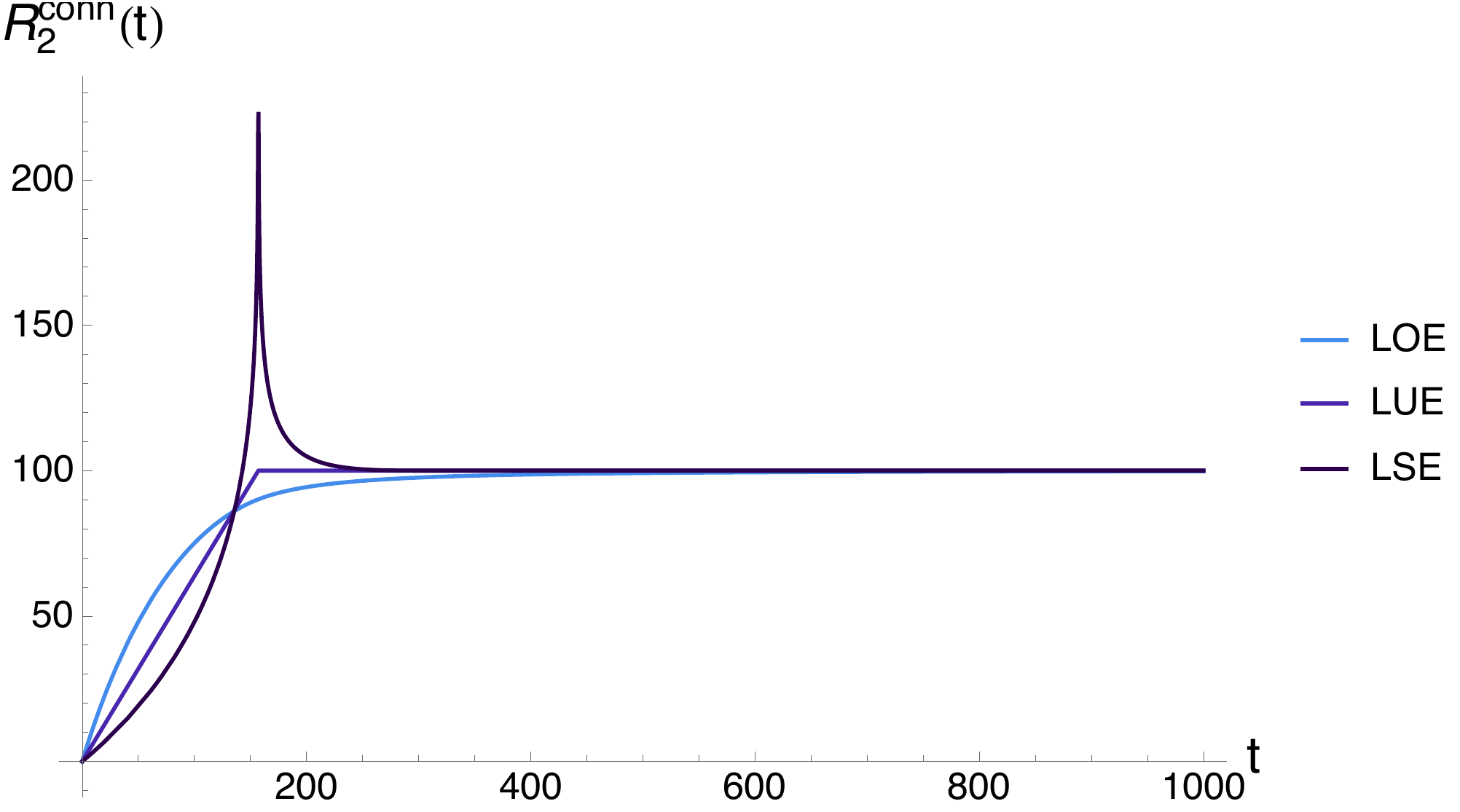}
  \caption{\label{fig3} LOE, LUE, LSE two point form factors $\mathcal{R}_2(t)$ with box cutoff and infinite temperature. We choose $L=100$. Up: full form factor; Down: connected form factor.}
\end{figure}
\begin{figure}[htbp]
\centering
\includegraphics[width=12cm]{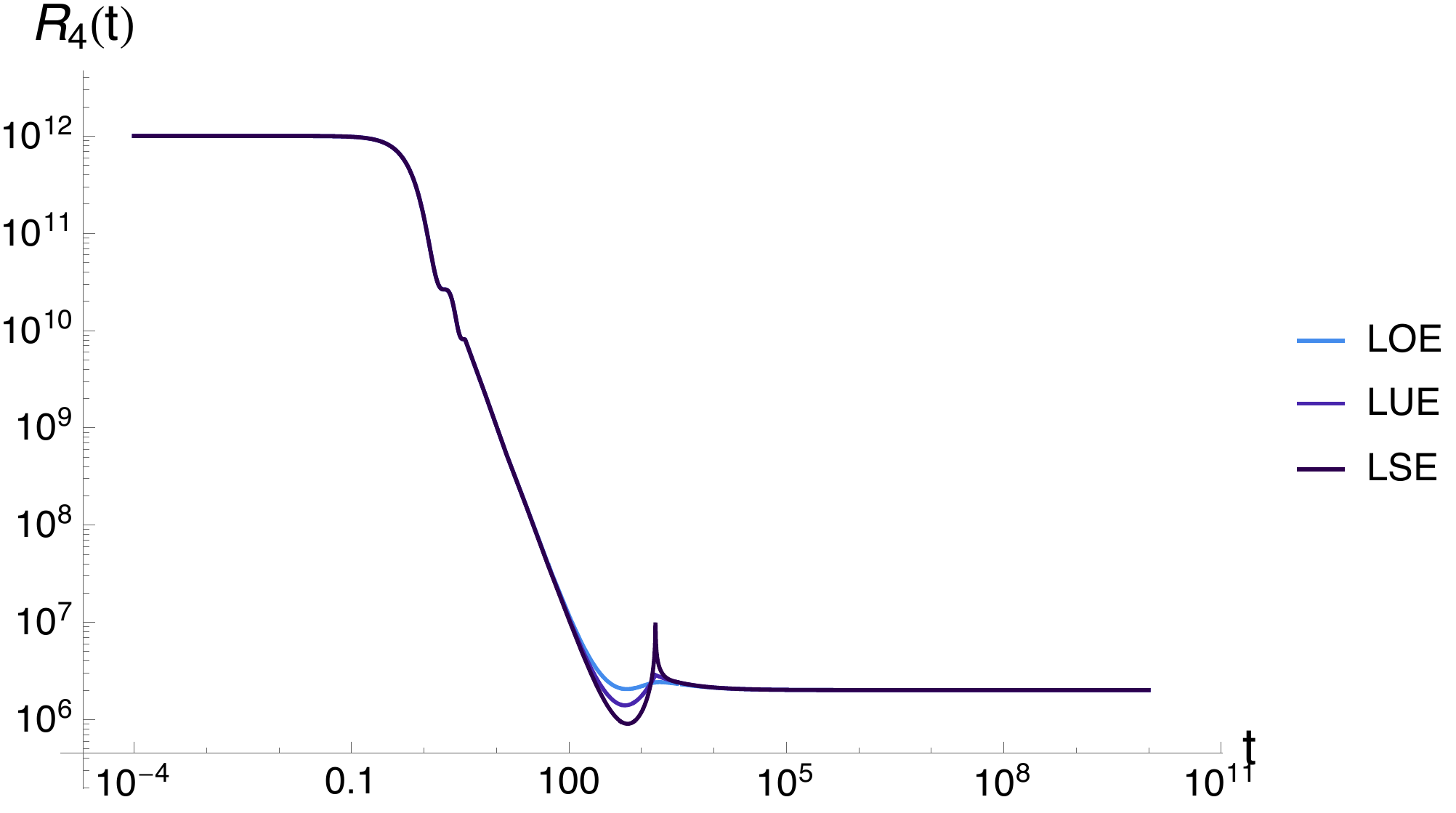}
\caption{LOE, LUE, LSE four point form factors $\mathcal{R}_4(t)$ with box cutoff and infinite temperature. We choose $L=1000$.}\label{fig4}
\end{figure}
\begin{figure}[htbp]
  \centering
  \includegraphics[width=0.8\textwidth]{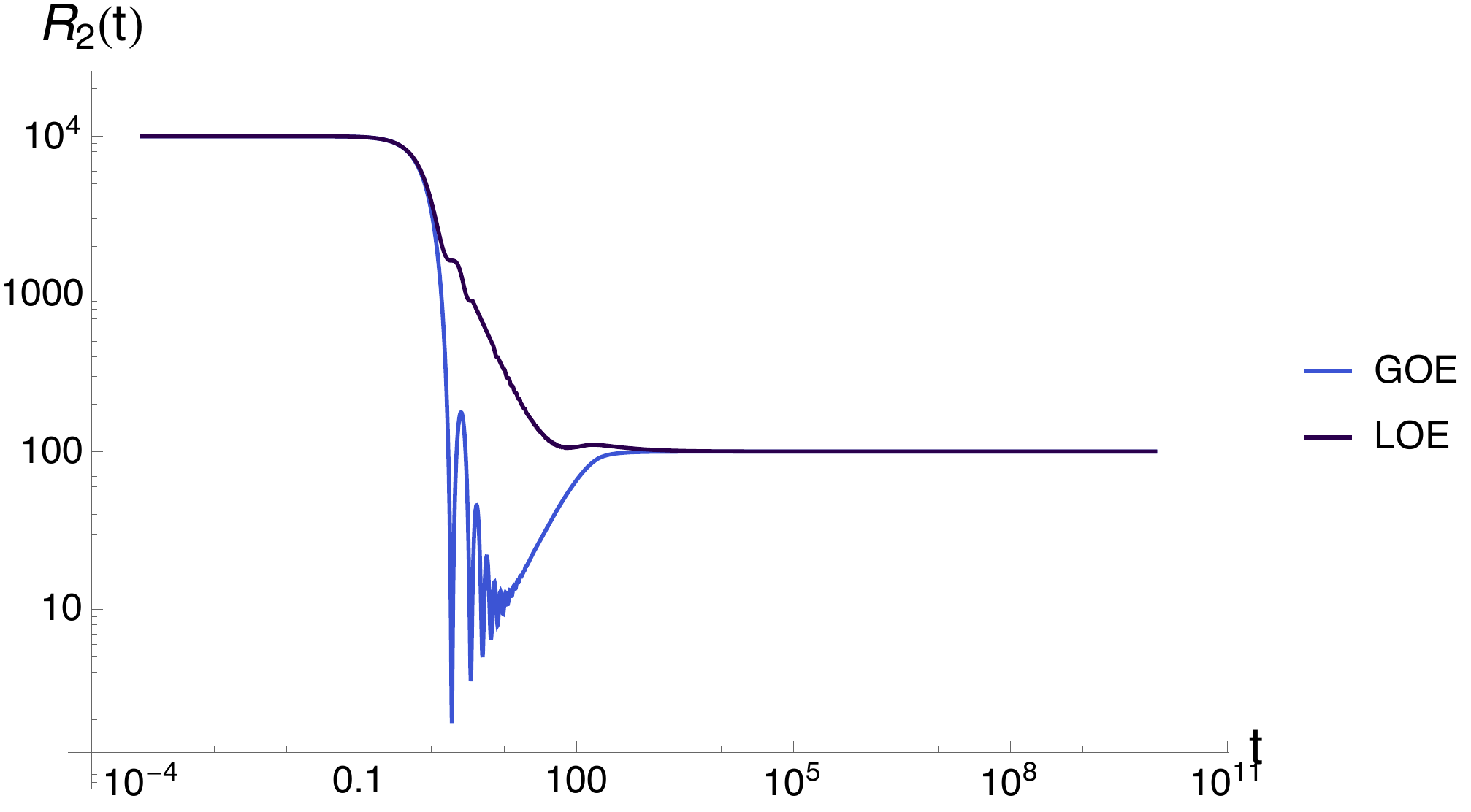}
  \includegraphics[width=0.8\textwidth]{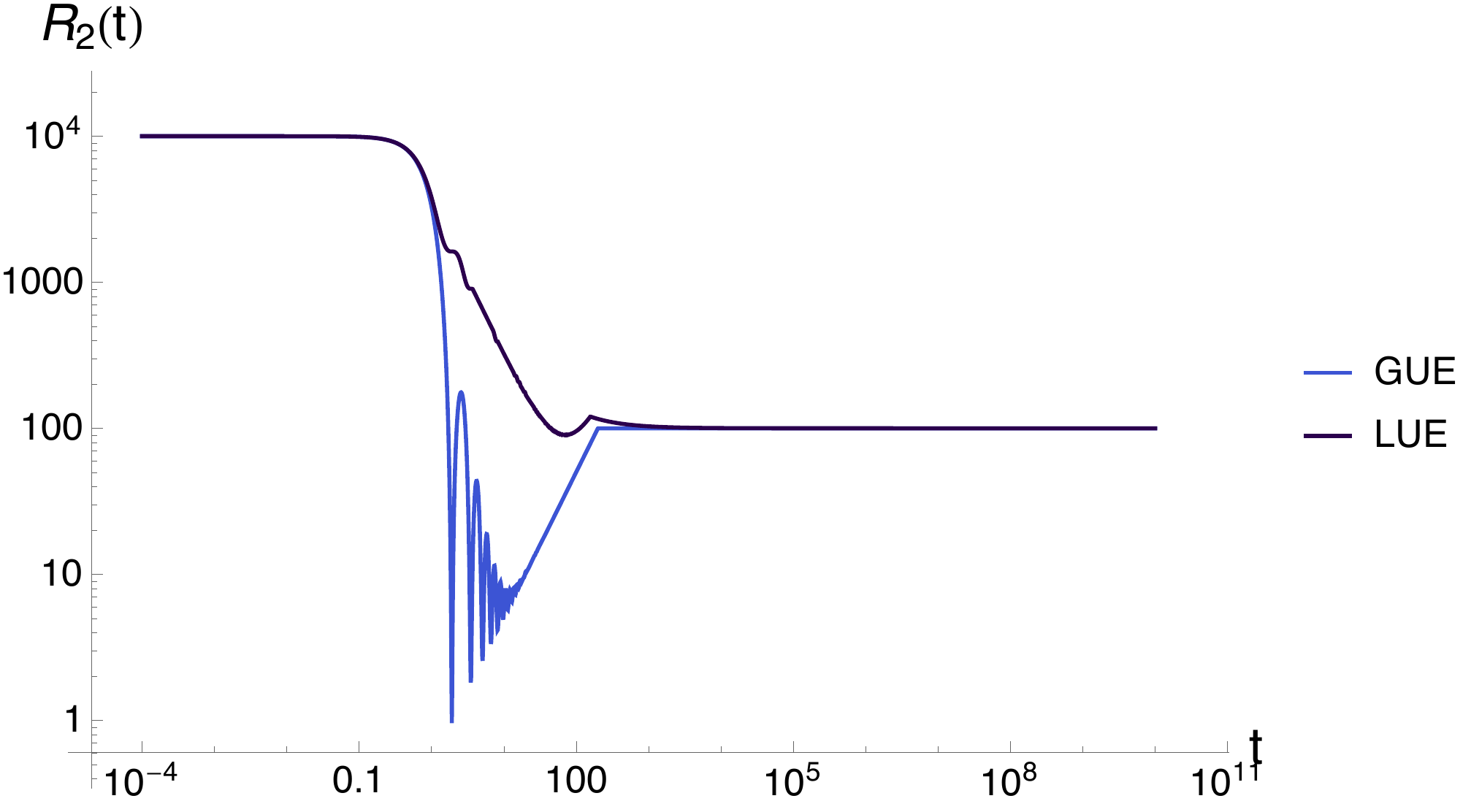}
  \includegraphics[width=0.8\textwidth]{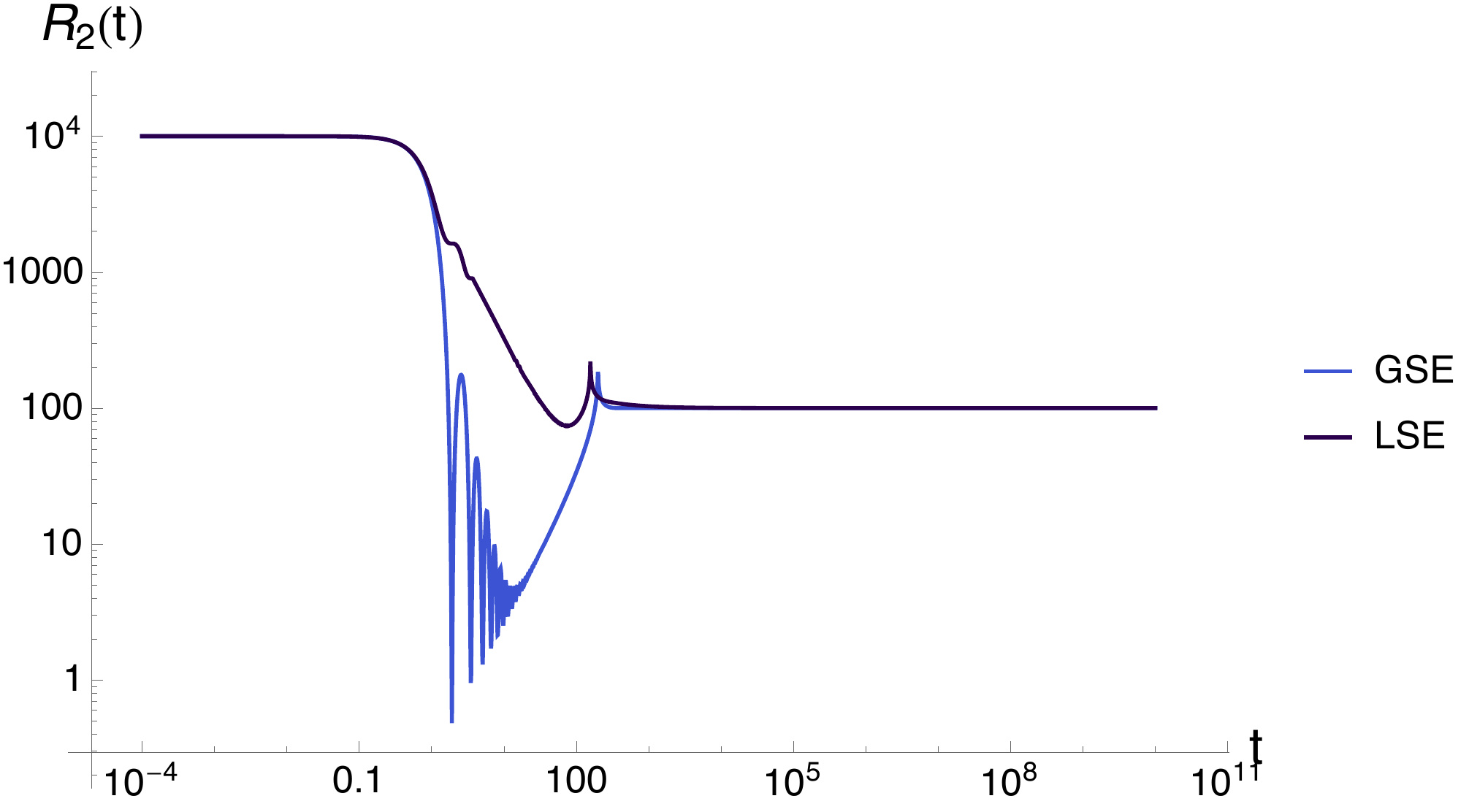}
  \caption{\label{fig5} A direct comparison between Gaussian ensembles and Wishart-Laguerre ensembles in terms of two point form factor $\mathcal{R}_2(t)$ with box cutoff and infinite temperature. We choose $L=100$. Up GOE/LUE; Middle: GUE/LUE; Down: GSE/LSE.}
\end{figure}
\\
\\
There will be an interesting comparison about showing the improvement from the box approximation to the refined form factors. Thus we give the Figure \ref{fig9} for the connected piece of GUE. The box approximation gives a linear result from $(0,0)$ to $(2L,L)$. The plateau value $L$ and the plateau time $2L$ are both correct. However, the correct slope, should be $2/\pi$ instead of $1/2$. Thus, one may consider the Taylor expansion (a naive approximation only chooses the slope, namely the derivative, at relatively early time) to capture the correct slope. Maintaining the correct slope and the plateau value, the plateau time is inaccurate. Thus, a refinement will be consider a nonlinear improvement, which is given by our previous small interval integrals over the short distance kernel. The situation is precisely described in \ref{fig9}.
\begin{figure}[htbp]
\centering
\includegraphics[width=12cm]{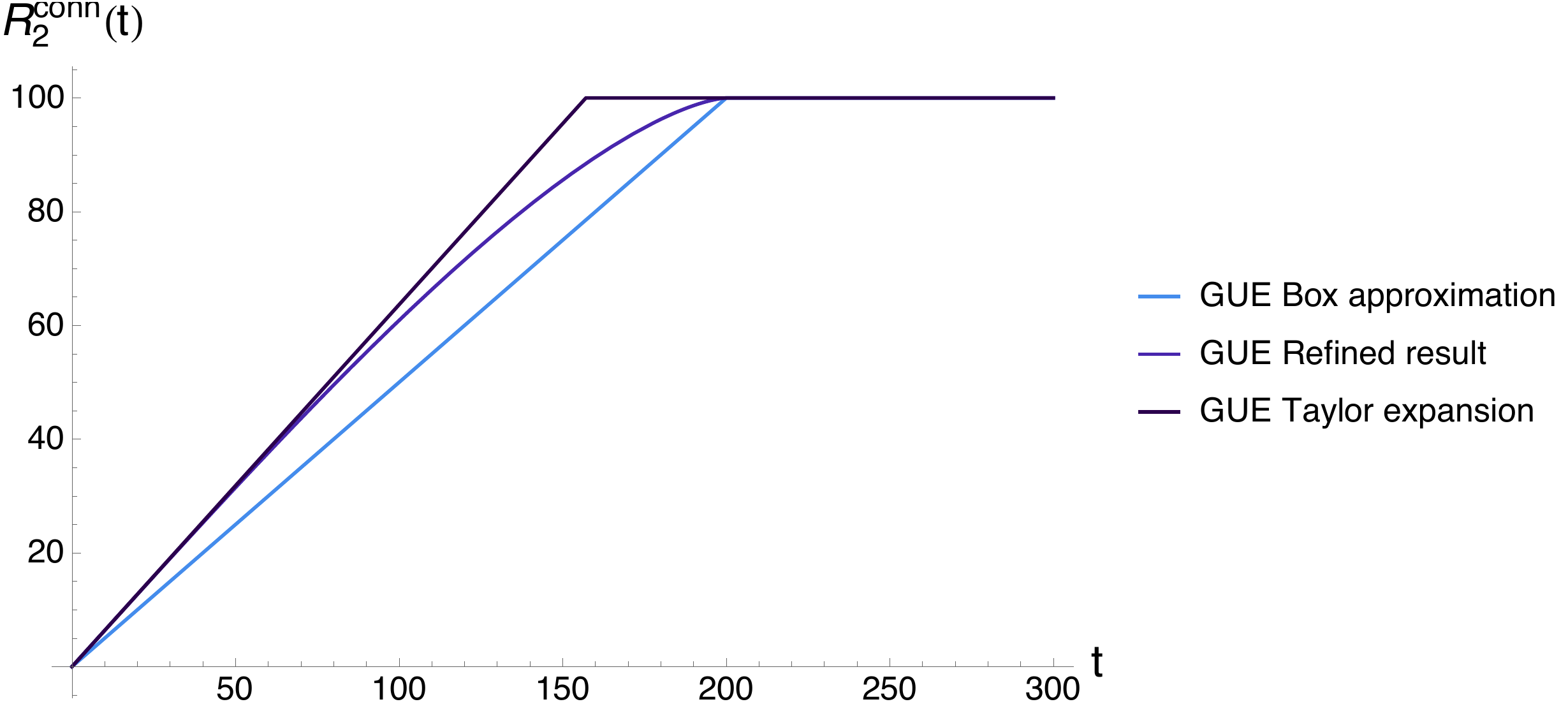}
\caption{GUE connected form factor $\mathcal{R}_2^\text{conn}(t)$ with different approximations in the infinite temperature. We choose $L=100$.}\label{fig9}
\end{figure}
\\
\\
One can generalize this analysis to other Gaussian ensembles and also Wishart-Laguerre ensembles, which are described in Figure \ref{fig8} and \ref{fig11}. One can notice that there is an interesting observation, where the kinky behavior near the plateau time for GSE and LSE ensembles is suppressed, which causes a deviation between the box approximation and the small interval approximation. 
\begin{figure}[htbp]
  \centering
  \includegraphics[width=0.8\textwidth]{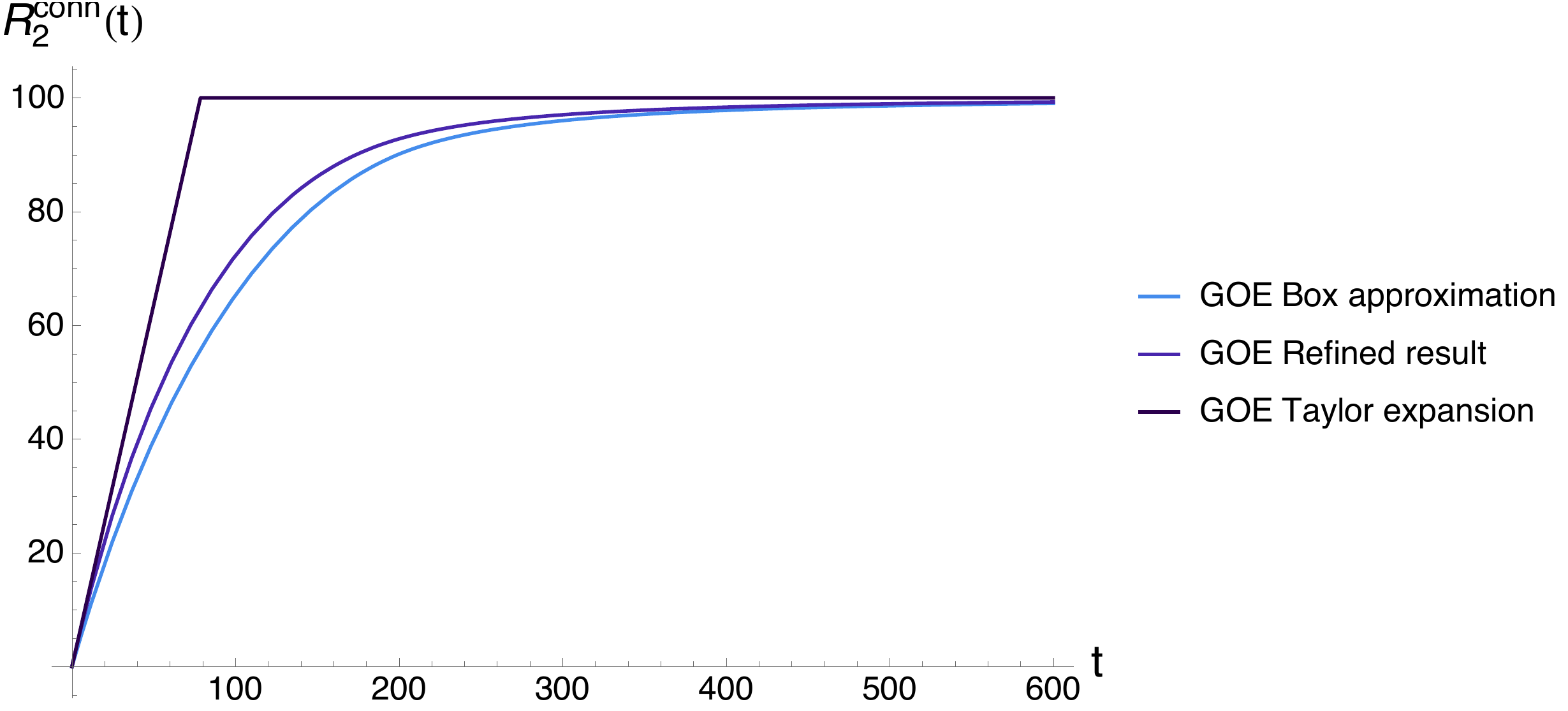}
  \includegraphics[width=0.8\textwidth]{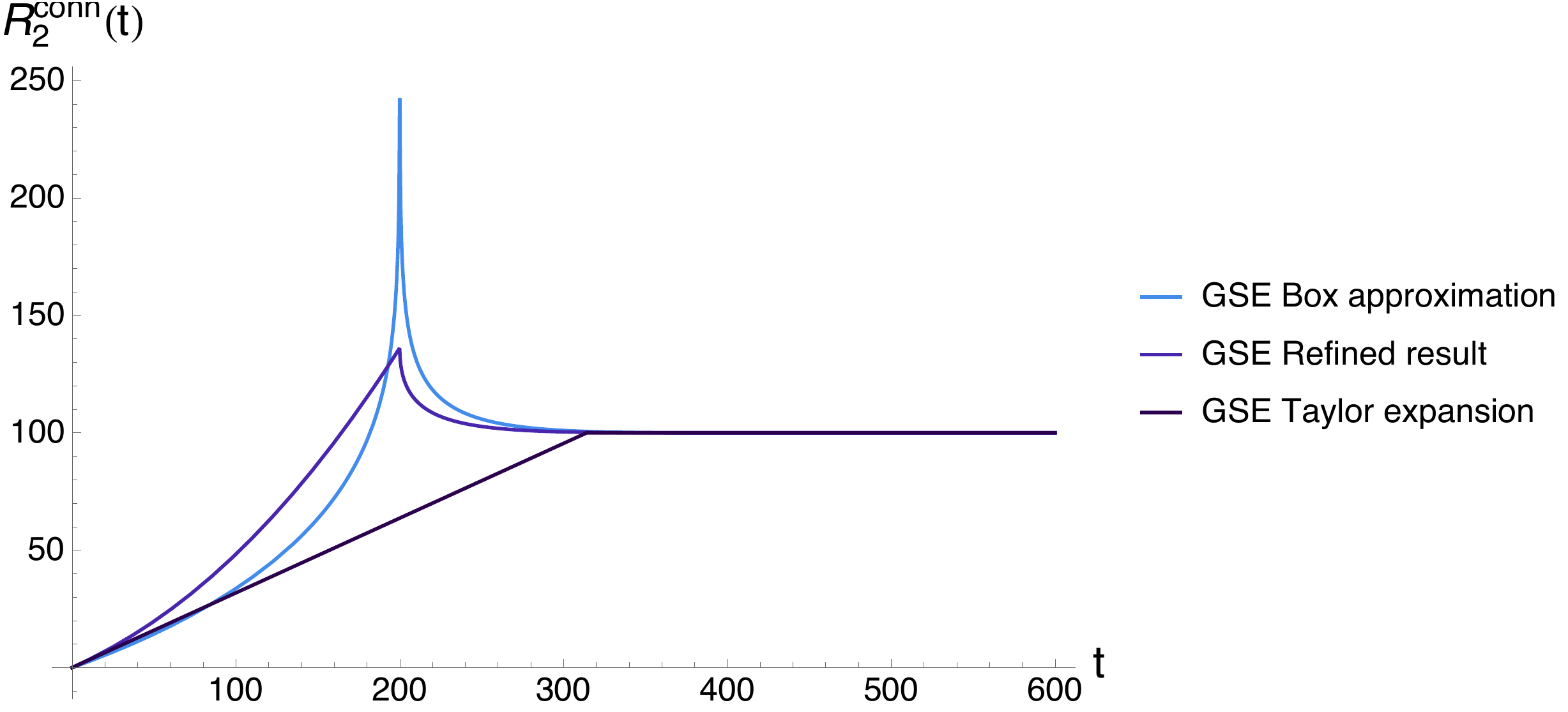}
  \caption{\label{fig8} GOE(up) and GSE(down) connected form factor $\mathcal{R}_2^\text{conn}(t)$ with different approximations in the infinite temperature. We choose $L=100$.}
\end{figure}
\begin{figure}[htbp]
  \centering
  \includegraphics[width=0.8\textwidth]{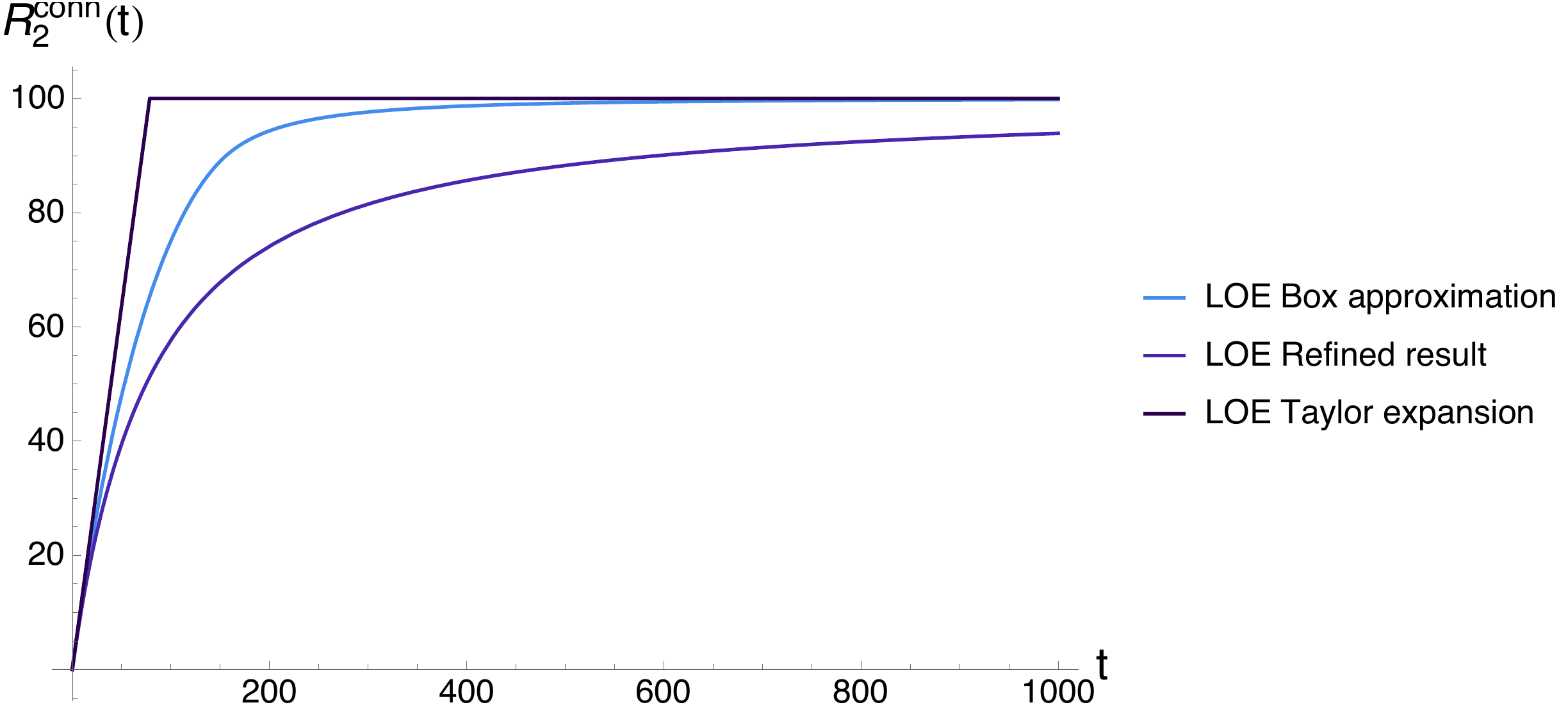}
  \includegraphics[width=0.8\textwidth]{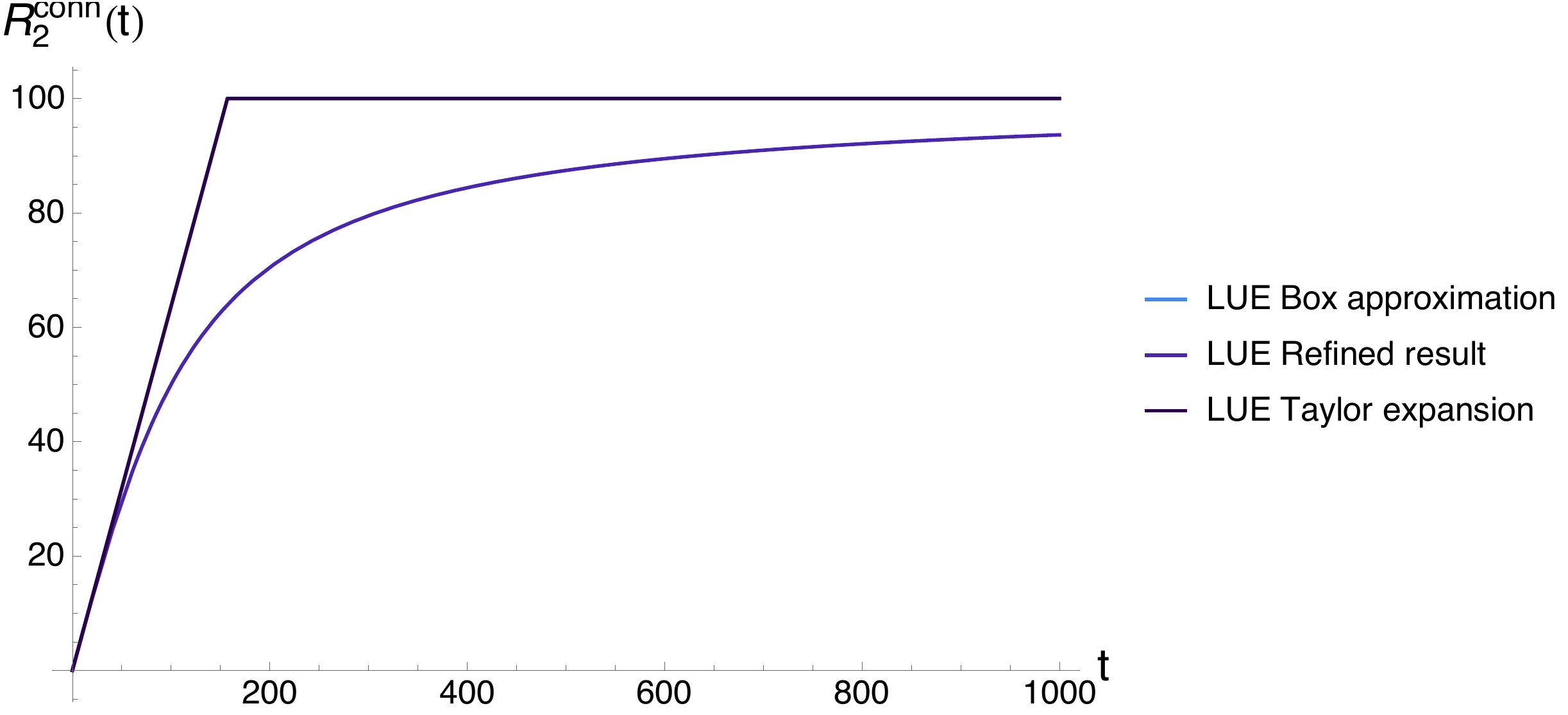}
  \includegraphics[width=0.8\textwidth]{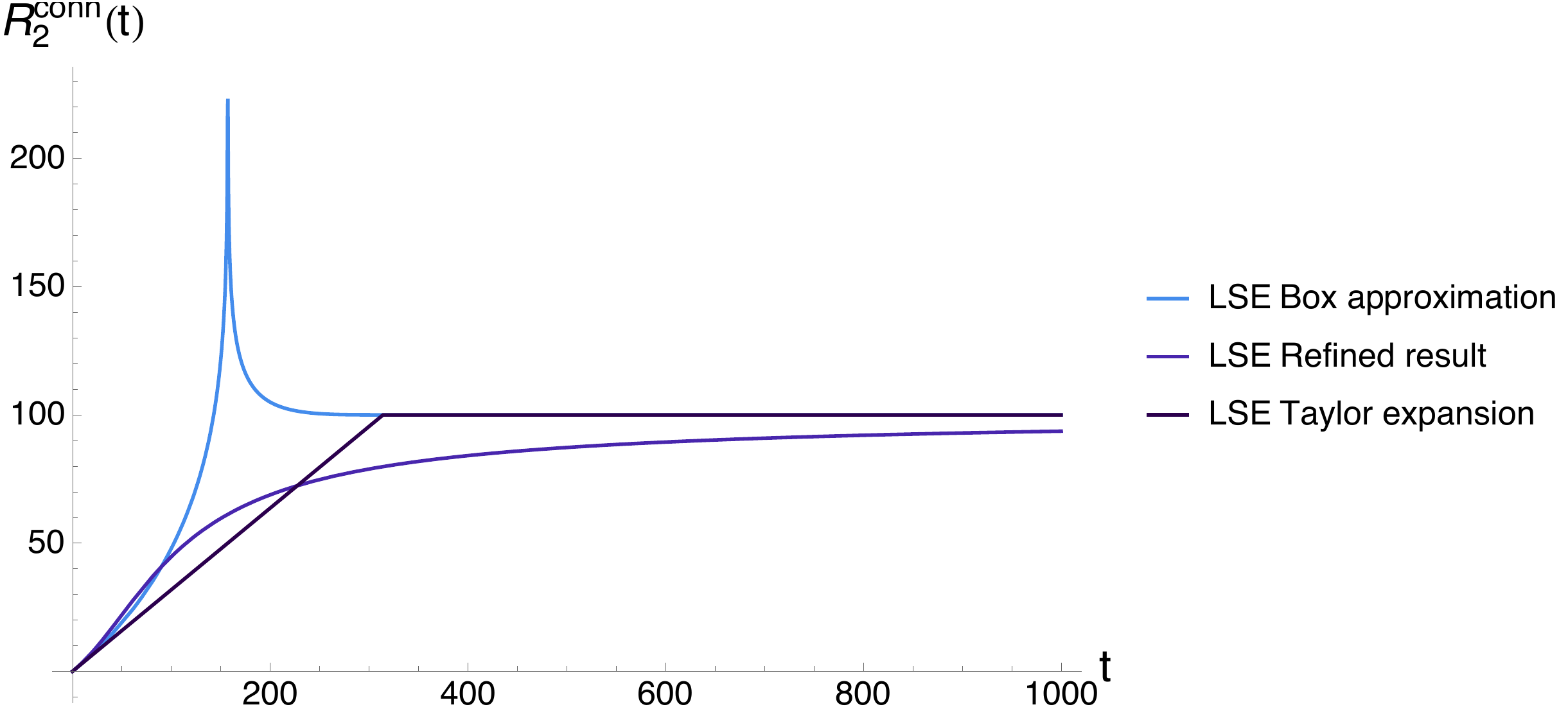}
  \caption{\label{fig11} LOE(up), LUE(middle) and LSE(down) connected form factor $\mathcal{R}_2^\text{conn}(t)$ with different approximations in the infinite temperature. We choose $L=100$. For LUE case by choosing $u$ in the box approximation the Taylor expansion curve and the box approximation curve are the same, so two of three curves are the same for figure in the middle.}
\end{figure}
\\
\\
One can also take a look at the connected finite temperature predictions from the refined kernel. We give them in Figure \ref{fig14} and Figure \ref{fig17} for Gaussian and Wishart-Laguerre ensembles separately. 
\begin{figure}[htbp]
  \centering
  \includegraphics[width=0.8\textwidth]{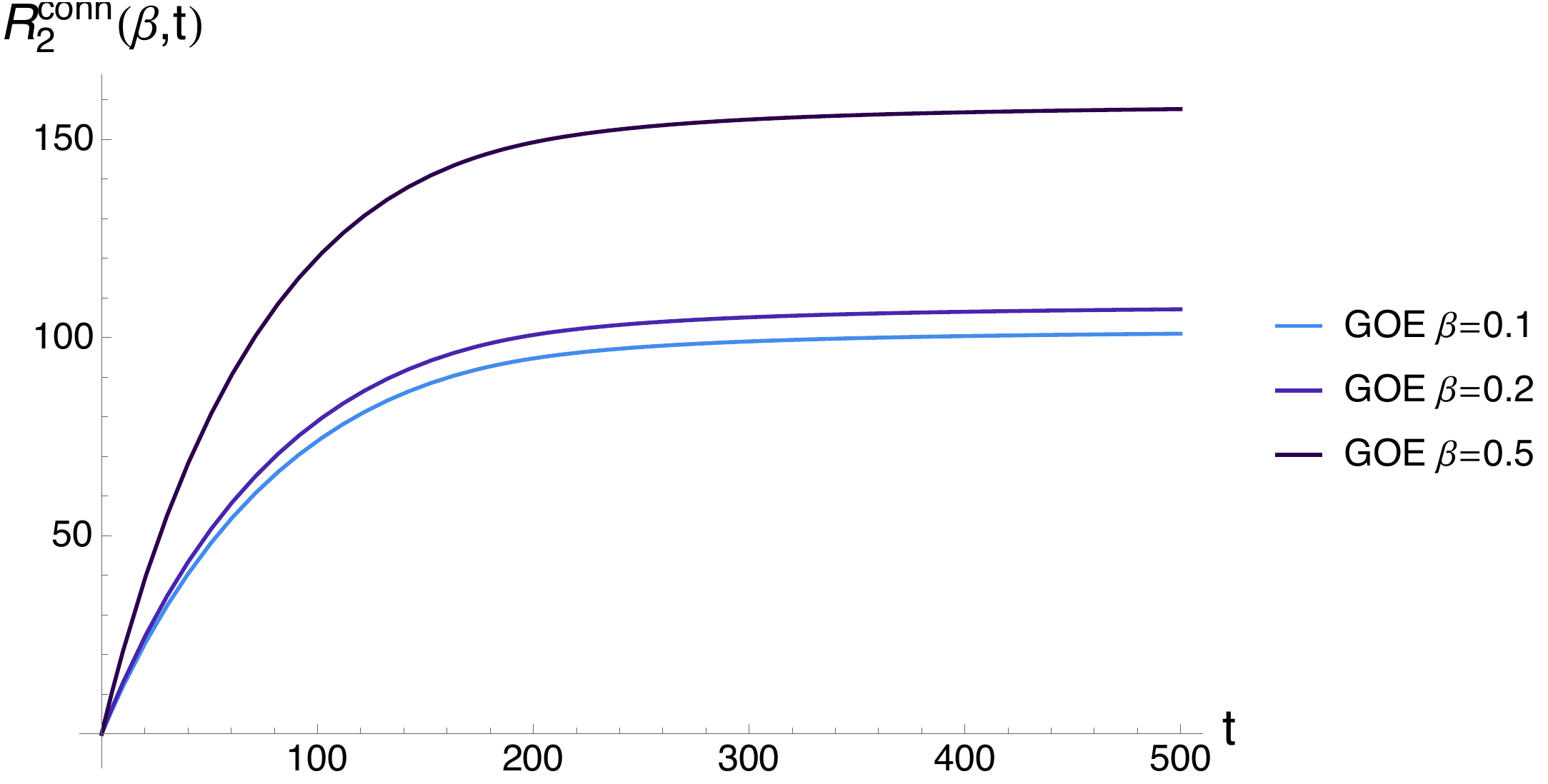}
  \includegraphics[width=0.8\textwidth]{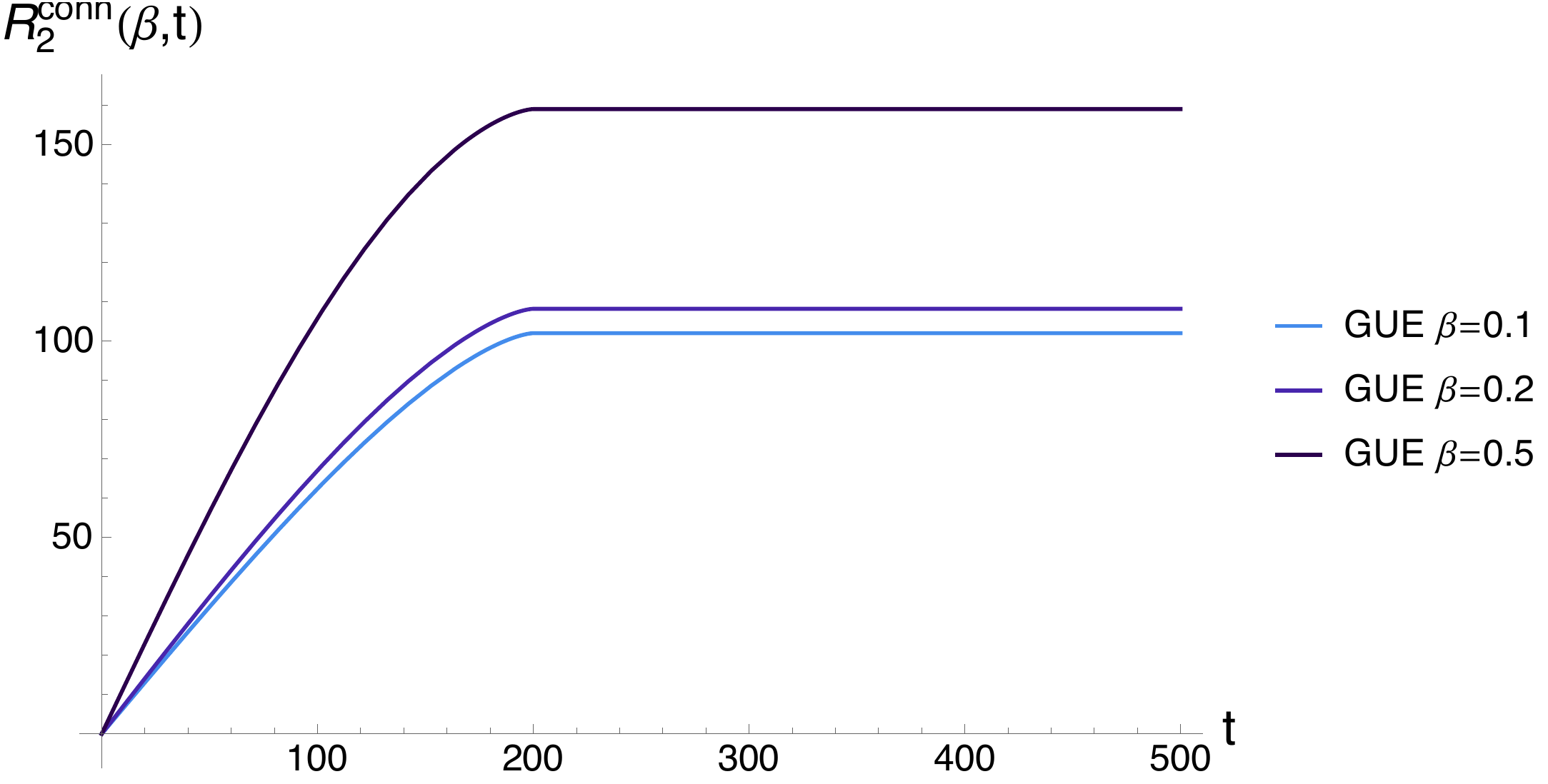}
  \includegraphics[width=0.8\textwidth]{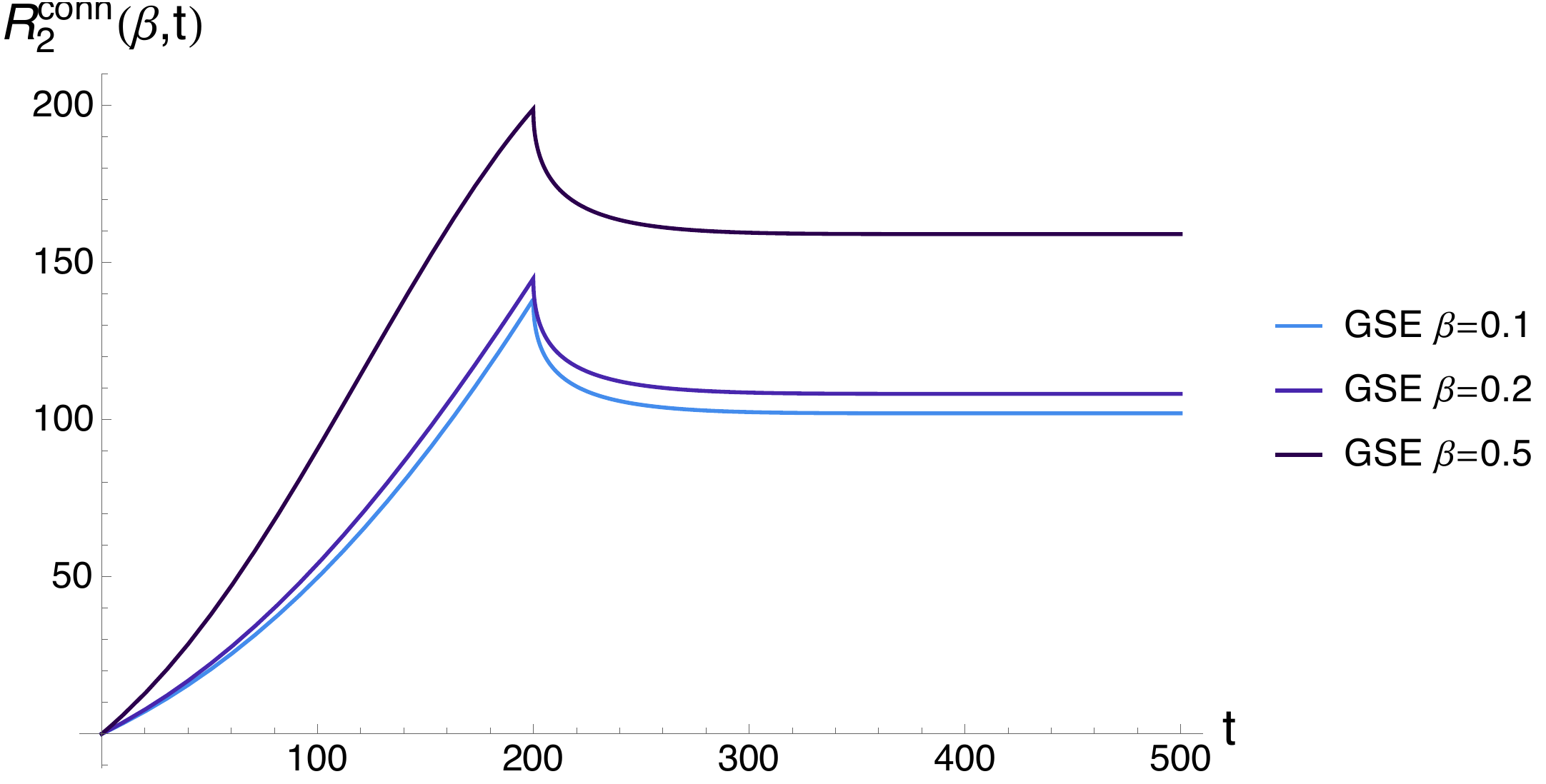}
  \caption{\label{fig14} GOE(up), GUE(middle) and GSE(down) connected form factor $\mathcal{R}_2^\text{conn}(t,\beta)$ for finite temperatures. We choose $L=100$.}
\end{figure}
\begin{figure}[htbp]
  \centering
  \includegraphics[width=0.8\textwidth]{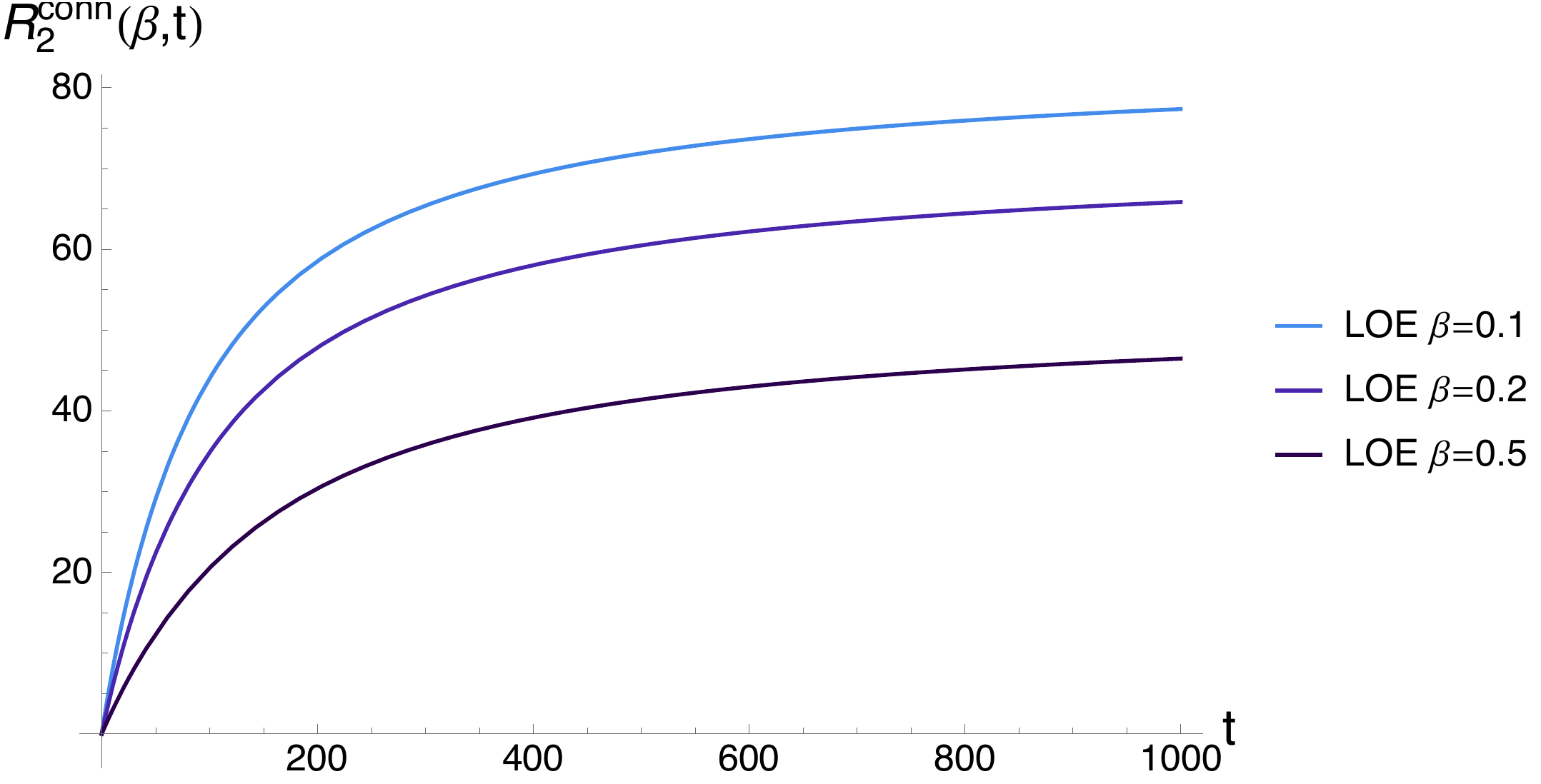}
  \includegraphics[width=0.8\textwidth]{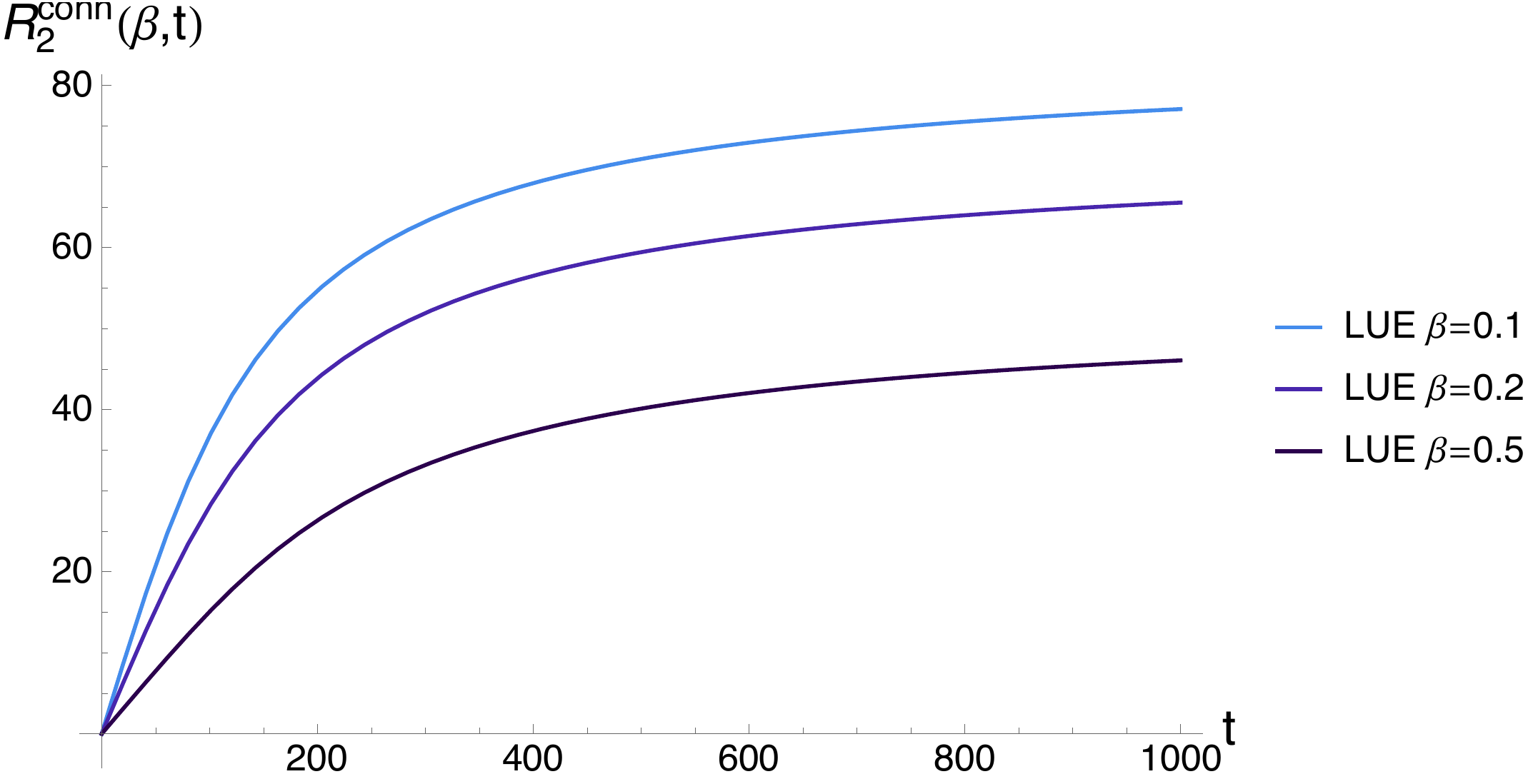}
  \includegraphics[width=0.8\textwidth]{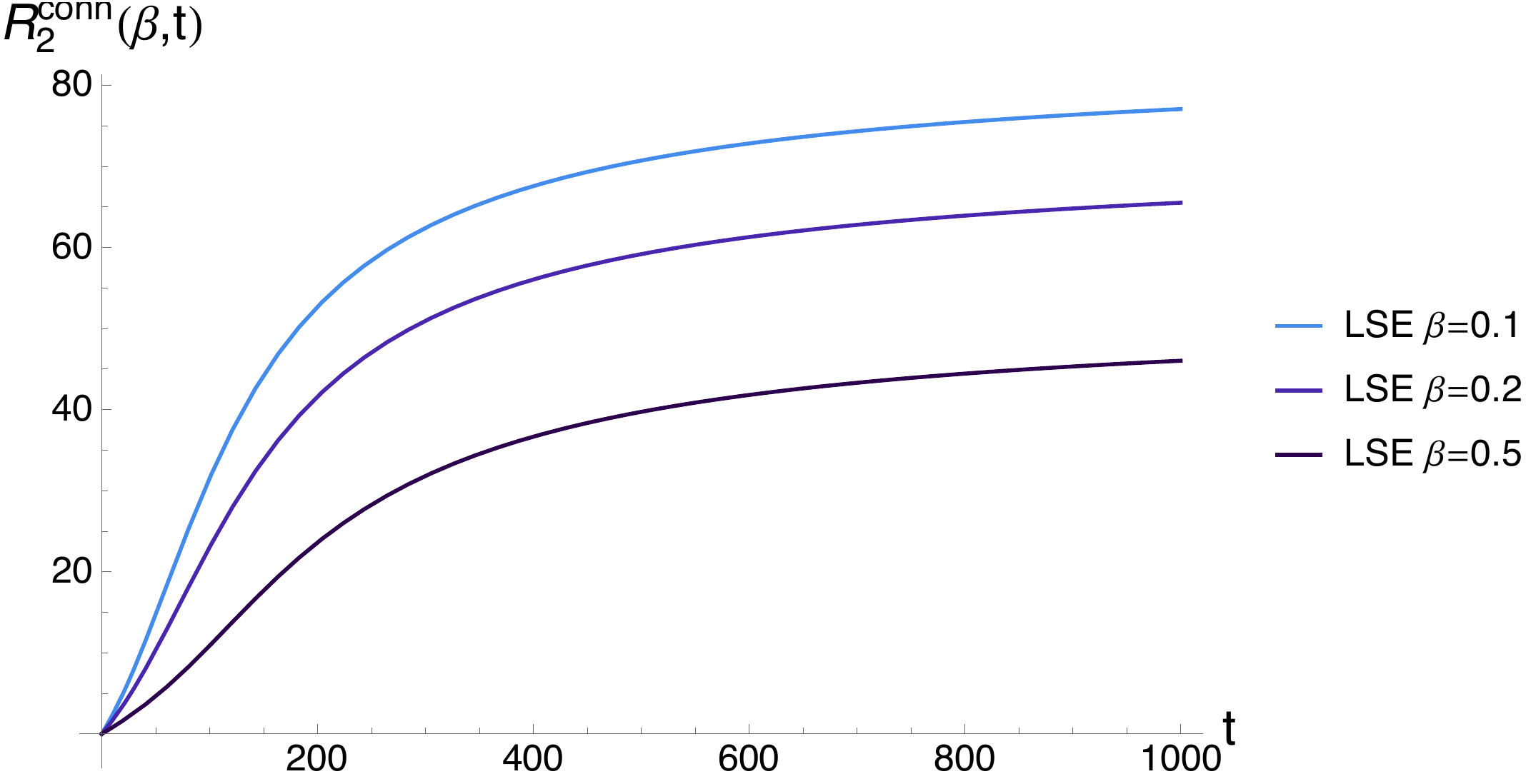}
  \caption{\label{fig17} LOE(up), LUE(middle) and LSE(down) connected form factor $\mathcal{R}_2^\text{conn}(t,\beta)$ for finite temperatures. We choose $L=100$.}
\end{figure}

\section{Applications}\label{Ph}
The spectral form factors of the random matrix theory in the standard ensembles have wide applications in many areas of late time quantum chaos. In this section, we will review and summarize some of the applications with recent interests.
\subsection{SYK-like models and classifications}
One direct application of the random matrix theory form factor results will be matching the qualitative and quantitative behaviors of the spectral form factor of the SYK model. In the majonara SYK model, there exists an eight-fold classification of random matrix theory, with respect to the number of majonara fermions $N$ \cite{You:2016ldz,Garcia-Garcia:2016mno,Cotler:2016fpe}. The classification is $N\text{ mod } 8=0$ for GOE, $N\text{ mod } 8=4$ for GSE, and $N\text{ mod } 8=2,6$ for GUE. The matching is identified for level statistics and the degeneracies.
\\
\\
One can also study the spectral form factor of the theory. One can show that foe the simplest form factor $\mathcal{R}_2$, could also be identified as the combination of the analytic-continued partition function, $\left\langle {Z(\beta  + it)Z(\beta  - it)} \right\rangle  \sim {\mathcal{R}_2}(t,\beta )$. From SYK model, one can read off the dip time, the plateau time and the ramp slope. These quantities could be qualitatively and quantitatively checked by numerical simulations and match with the corresponding random matrix theory prediction \cite{Cotler:2016fpe}. 
\\
\\
One can observe the eight-fold classification of the random matrix theory prediction in the SYK model by observing features of the plots. For instance, one can observe a smooth corner for GOE, a kink for GUE and a sharp peak for GSE around the timescale of the plateau time. These features will show a clear three-fold classification of the SYK model in terms of spectral form factors and could be read off from numerical investigations \cite{Cotler:2016fpe}. 
\\
\\
These ideas could also be generalized to supersymmetric SYK models. In supersymmetrized models, one would expect a disordered supercharge $Q$, and a Hamiltonian $H\sim Q^2$. Thus, if $Q$ is from some Gaussian-like statistics, the result of the squared Gaussian distribution, will be the Wishart-Laguerre-type ensembles. The classification is discussed in \cite{Li:2017hdt,Kanazawa:2017dpd}. For the simplest case ($\mathcal{N}=1$ supersymmetrization, and four-point coupling), the eight-fold classification is modified by $N\text{ mod } 8= 0,6$ for LOE, and $N\text{ mod } 8=2,4$ for LSE. 
\\
\\
An early look of the Wishart-Laguerre ensembles spectral form factor and a connection to supersymmetric SYK model are discussed in \cite{Li:2017hdt,Kanazawa:2017dpd,Hunter-Jones:2017crg}, where the features are clearly different from the Gaussian ensembles. One of the main difference is the early time behavior of the disconnected spectral form factor $\mathcal{R}_2$, which could be obtained from the $r_1(t)$ function we discussed before in these two different ensembles. In Gaussian ensembles we have $r_1(t)\sim 1/t^{3/2}$ while for Wishart--Laguerre ensembles we have $r_1(t)\sim 1/t^{1/2}$. These facts could match with predictions in SYK model, and could be obtained by the one-loop Schwarzian action \cite{Cotler:2016fpe,Stanford:2017thb,Hunter-Jones:2017crg}. Moreover, one can also match the kinky and smoothy behavior around the edge of the plateau from numerics of supersymmetric SYK model \cite{Li:2017hdt}.
\subsection{Out-of-time-ordered correlation functions}
The spectral form factor of the random matrix theory could be related to out-of-time-ordered correlators of the physical models in an interesting way. Here we will discuss the unitary invariance case, where disordered physical models are invariant or nearly invariant under unitary transform. For Gaussian and Wishart-Laguerre disordered models, one may predict them using GUE or LUE.
\\
\\
In this case \cite{Cotler:2017jue,Hunter-Jones:2017crg}, for operator $A$ and $B$, one can compute the two point correlator as 
\begin{align}
\left\langle {A(0)B(t)} \right\rangle  = \left\langle A \right\rangle \left\langle B \right\rangle  + \frac{{{\mathcal{R}_2}(t) - 1}}{{{L^2} - 1}}\left\langle {\left\langle {AB} \right\rangle } \right\rangle 
\end{align}
where 
\begin{align}
\left\langle {\left\langle {AB} \right\rangle } \right\rangle  = \left\langle {AB} \right\rangle  - \left\langle A \right\rangle \left\langle B \right\rangle 
\end{align}
Moreover, if $A$ and $B$ are non-identity Pauli operators, we have 
\begin{align}
\left\langle {A(0)B(t)} \right\rangle  = \left\{ {\begin{array}{*{20}{c}}
{\frac{{{\mathcal{R}_2}(t) - 1}}{{{L^2} - 1}}}&{A = B}\\
0&{A \ne B}
\end{array}} \right.
\end{align}
Thus, if $\mathcal{R}_2(t)\gg 1$, we have 
\begin{align}
\left\langle {A(0)B(t)} \right\rangle  \sim \frac{{{\mathcal{R}_2}(t)}}{{{L^2}}}
\end{align}
Similarly, one can generalize those relations towards four point or higher point functions. For four point function, assuming non-identity Pauli operators $A,B,C,D$ with the relation $ABCD=I$, we have
\begin{align}
\left\langle {A(0)B(t)C(0)D(t)} \right\rangle  \sim \frac{{{\mathcal{R}_4}(t)}}{{{L^4}}}
\end{align}
Another important object in quantum information, which will show the averaged behavior of the out-of-time-ordered correlation function by the \emph{frame potential}. For a given ensemble $\mathcal{E}$, the $k$-th order frame potential is defined by 
\begin{align}
{\cal F}_\mathcal{E}^{(k)} = \int_{U,V \in \mathcal{E}} {dUdV{{\left| {{\rm{Tr(}}U{V^\dag }{\rm{)}}} \right|}^k}} 
\end{align}
One can define $\mathcal{E}$ to be generated by disordered Hamiltonian $H$ with fixed time $t$, $\mathcal{E}_t=\{e^{iHt},H \in \text{disorder ensemble}\}$. So $\mathcal{F}$ is identified as a functional of a disordered system and a fixed time $t$. In \cite{Roberts:2016hpo}, one can show that the frame potential is equal to average of out-of-time-ordered correlators, where the average is over the Pauli group. 
\\
\\
One can compute the relationship between the spectral form factor and the frame potential in the random matrix theory \cite{Cotler:2017jue,Hunter-Jones:2017crg}. For instance, in the two point case we have
\begin{align}
{{\cal F}^{(1)}}(t) = \frac{{\mathcal{R}_2^2(t) + {L^2} - 2{\mathcal{R}_2}(t)}}{{{L^2} - 1}}
\end{align}
One can generate this type of connections to higher point cases and finite temperatures.

\subsection{Page states}
A connection between Wishart-Laguerre ensembles and the Page states is used to be a modified criterion for quantum chaos in terms of wave functions \cite{Chen:2017yzn}. The page state, or alternatively called the random pure state, is defined as the following wavefunction in the Hilbert space $\mathcal{H}=\mathcal{H}_A \otimes \mathcal{H}_B$,
\begin{align}
\left| \psi  \right\rangle  = \sum\limits_{a = 1}^{{N_A}} {\sum\limits_{b = 1}^{{N_B}} {{X_{ab}}\left| {\psi _A^a} \right\rangle \left| {\psi _B^b} \right\rangle } } 
\end{align}
where $X_{ab}$ is the element of the random matrix with the volume $N_A\times N_B$ and one could fix the scaling by normalization condition of the wavefunction. One can assume that this matrix $X$ is Gaussian $N_A\times N_B$ matrix. Thus, the reduced density matrix, when tracing out the system $B$, is given by
\begin{align}
{\rho _A} \sim X{X^\dag }
\end{align}
for subsystem $A$. Now one can consider diagonalization of $\rho_A$ and compute the spectral form factor of it. Because of the squaring structure $XX^\dagger$, the density matrix $\rho_A$ will be a Wishart-Laguerre random matrix (Here $N_A$ and $N_B$ are kept in general, while in our previous computation, we choose the specific case where $N_A=N_B$. When $N_A\ne N_B$, the result will be different but some generic features are similar with the equal case).
\\
\\
This feature will appear in some real chaotic physical systems. In \cite{Chen:2017yzn}, it is claimed that splitting the qubits of the real chaotic system, one would expect that the reduced density matrix, or namely, the entanglement Hamiltonian, will show similar universal spectral correlation, and will match the prediction of Wishart-Laguerre ensemble when considering time evolution. This phenomenon is verified in the context of Floquet system and quantum Ising model.

\section{Conclusion and discussion}\label{conc}
In this paper, we investigate in the very detail, to establish a generic framework on the computational technology of the spectral form factors. We hope that those technologies will give a systematic description of spectral form factors that are used in the field of quantum chaos, and will benefit people studying the connection between random matrix theory and notions of quantum chaos, quantum information and black holes, etc. 
\\
\\
We will highlight some of the points of this paper as the following,
\begin{itemize}
\item Many traditional literatures (for instance \cite{book2}) call the $n$-point spectral form factor as the Fourier transform of $n-1$ eigenvalue variables. To transform the last variable one obtain a delta function in the infinite $L$ limit. For finite but large $L$, one have to invent some regularization technologies. In this paper, we systematically describe the notion of \emph{box approximation}, as a concrete way to realize the cutoff, and apply it to multiple ensembles. And we show in the Gaussian ensemble two point form factor context, approximation beyond the box cutoff must be related to nonlinear physics for the Fourier transform of the sine kernel.
\item We seriously consider how to use the short distance kernel to give a precise prediction for two point form factor with infinite and finite temperatures. Inspired by the treatment from \cite{Cotler:2016fpe,Chen:2017yzn}, we obtain an analytic non-linear connected two point form factor beyond linear approximation in the GUE and LUE, and show the formal and numerical results for the rest cases. 
\item Based on the existing infinite $L$ mathematical algorithms, we illustrate some theorems that could be used to compute higher point form factors for finite but large $L$ for multiple ensembles. We compute four point form factor for them as examples.
\end{itemize}
We hope this research will shed light on the possibilities of the following directions,
\begin{itemize}
\item More general ensembles. Although the situations that are considered are already pretty general, mathematicians and mathematical physicists have a list of more general ensembles. It will be interesting to consider generalized classifications and related ensembles, and compute spectral form factors of them.
\item More physical applications. One may consider applying those form factors to some other chaotic quantum systems and more black hole thought experiments, such as chaotic spin chains or quantum circuits. 
\item Diving deeper into physical meaning of the non-linearity in the connected two point form factor. The breakdown of the naive box cutoff, namely the prediction beyond the linear approximation of the Fourier transform of the sine kernel, might be connected to some physics of thermalization and moreover, holography and gravity \cite{Gharibyan:2018jrp}.
\end{itemize}

\section*{Acknowledgement}
I deeply thank Beni Yoshida for his warm and supportive guidance and Jordan Cotler for nice communications. JL is supported in part by the Institute for Quantum Information and Matter (IQIM), an NSF Physics Frontiers Center (NSF Grant PHY-1125565) with support from the Gordon and Betty Moore Foundation (GBMF-2644), and by the U.S. Department of Energy, Office of Science, Office of High Energy Physics, under Award Number DE-SC0011632.

\small

\end{document}